\def\BibTeX{{\rm B\kern-.05em{\sc i\kern-.025em b}\kern-.08em
		T\kern-.1667em\lower.7ex\hbox{E}\kern-.125emX}}
\newcounter{eqJ}
\newtheorem{theorem}{Theorem}
\newtheorem{corollary}{Corollary}
\newtheorem{lemma}{Lemma}
\newtheorem{remark}{Remark}
\DeclareMathOperator{\rem}{\mathrm{mod}}
\newcommand{\rom}[1]{\uppercase\expandafter{\romannumeral #1\relax}}
\newcommand{\todo}[1]{\color{red}#1\color{black}}
\newcommand{\fig}{Fig.}    
\newcommand{\secR}{Section}
\newcommand{\gt}{g^{\mathrm{s}}}  
\newcommand{\gr}{g^{\mathrm{r}}}  
\newcommand{\omt}{\Omega^{\mathrm{s}}}  
\newcommand{\omr}{\Omega^{\mathrm{r}}}  
\newcommand{\phr}{\alpha^{\mathrm{r}}}  
\newcommand{\pht}{\alpha^{\mathrm{s}}}  
\newcommand{\phrS}{\tilde{\alpha}^{\mathrm{r}}}  
\newcommand{\phtS}{\tilde{\alpha}^{\mathrm{s}}}  
\newcommand{\phrVp}{\boldsymbol{\alpha}_p^{\mathrm{r}}}
\newcommand{\phrV}{\boldsymbol{\alpha}^{\mathrm{r}}}  
\newcommand{\phtV}{\boldsymbol{\alpha}^{\mathrm{s}}}  
\newcommand{\phrSV}{\tilde{\boldsymbol{\alpha}}^{\mathrm{r}}}  
\newcommand{\phtSV}{\tilde{\boldsymbol{\alpha}}^{\mathrm{s}}} 
\newcommand{\psrV}{\boldsymbol{\psi}^{\mathrm{r}}}  
\newcommand{\pstV}{\boldsymbol{\psi}^{\mathrm{s}}}  
\newcommand{\psrVp}{\boldsymbol{\psi}_p^{\mathrm{r}}}  
\newcommand{\phrI}{\beta^{\mathrm{r}}}  
\newcommand{\phtI}{\beta^{\mathrm{s}}}  
\newcommand{\Gr}{G_{\mathrm{r}}} 
\newcommand{\Gt}{G_{\mathrm{s}}} 
\newcommand{\psr}{\psi^{\mathrm{r}}}  
\newcommand{\pst}{\psi^{\mathrm{s}}}  
\newcommand{\Kr}{K_{\mathrm{r}}} 
\newcommand{\Lt}{L_{\mathrm{s}}}
\newcommand{\Lr}{L_{\mathrm{r}}}
\newcommand{\Lrp}{L_{\mathrm{r},p}}
\newcommand{\Ltm}{L_{\mathrm{s,max}}}
\newcommand{\Lrm}{L_{\mathrm{r,max}}}
\newcommand{\phir}{\phi^{\mathrm{r}}}
\newcommand{\phit}{\phi^{\mathrm{s}}}
\newcommand{\Td}{{T_\mathrm{d}}}
\newcommand{\Tsys}{T}
\newcommand{\snrA}{\gamma^\mathrm{(a)}}
\newcommand{\snrB}{\gamma^\mathrm{(b)}}
\newcommand{\SsnrA}{S_\mathrm{a}}
\newcommand{\SsnrB}{S_\mathrm{b}}
\newcommand{\SsnrD}{S_\mathrm{d}}
\newcommand{\JA}{J_\mathrm{a}}
\newcommand{\JB}{J_\mathrm{b}}
\newcommand{\SsnrAp}{S_{\mathrm{a},p}}
\newcommand{\JAp}{J_{\mathrm{a},p}}
\newcommand{\Xstb}{\mathcal{X}_{\Kr}^*}
\newcommand{\PsetR}{\mathcal{P}_{\text{r}}}
\newcommand{\PsetT}{\mathcal{P}_{\text{s}}}
\newcommand{\DsetX}{\mathcal{D}_{X}}
\newcommand{\DsetY}{\mathcal{D}_{Y}}
\newcommand{\Av}{\mathbb{\mathcal{A}}}
\newcommand{\phiVrs}{\boldsymbol{\phi}}
\newcommand{\xr}{x}  
\newcommand{\xrV}{\boldsymbol{x}} 
\newcommand{\xrVp}{\boldsymbol{x}^{(p)}}  
\newcommand{\xt}{z} 
\newcommand{\xtV}{\boldsymbol{z}}  
\newcommand{\yr}{y} 
\newcommand{\yrV}{\boldsymbol{y}}  
\newcommand{\yrVp}{\boldsymbol{y}^{(p)}} 
\newcommand{\vt}{v} 
\newcommand{\vtV}{\boldsymbol{v}}
\newcommand{\pPt}{u} 
\newacronym{AoI}{AoI}{age-of-information}
\newacronym{IRT}{IRT}{inter-reception time}
\newacronym{BrEP}{BrEP}{burst error probability}
\newacronym{SNR}{SNR}{signal-to-noise ratio}
\newacronym{SBR}{SBR}{single bounce reflection}
\newacronym{LOS}{LOS}{line of sight}
\newacronym{AOA}{AOA}{angle of arrival}
\newacronym{AOD}{AOD}{angle of departure}
\newacronym{ACN}{ACN}{analog combining network}
\newacronym{ASN}{ASN}{antenna switching network}
\newacronym{ABN}{ABN}{analog beamforming network}
\newacronym{VU}{VU}{vehicular user}
\newacronym{CAM}{CAM}{cooperative awareness message}
\newacronym{MRC}{MRC}{maximal ratio combining}
\newacronym{RF}{RF}{radio frequency}
\newacronym{CSI}{CSI}{channel state information}
\newacronym{C-ITS}{C-ITS}{cooperative intelligent transportation systems}
\newacronym{MIMO}{MIMO}{multiple-input multiple-output}
\newacronym{V2V}{V2V}{vehicle-to-vehicle}
\newacronym{Tx}{Tx}{transmitter}
\newacronym{Rx}{Rx}{receiver}
\newacronym{OFDM}{OFDM}{orthogonal frequency division multiplexing}
\newacronym{CDF}{CDF}{cumulative density function}
\begin{document}
\title{Robust Analog Beamforming for Periodic Broadcast V2V Communication\\
	\thanks{This research has been carried out in the antenna systems center \emph{ChaseOn} in a project financed by Swedish Governmental Agency of Innovation Systems (Vinnova), Chalmers, Bluetest, Ericsson, Keysight, RISE, Smarteq, and Volvo Cars.}%
\thanks{The authors are with the Communication Systems Group, Department of Electrical Engineering, Chalmers University of Technology, 412 96 Gothenburg, Sweden (e-mail: chouaib@chalmers.se; fredrik.brannstrom@chalmers.se; erik.strom@chalmers.se)}
}
\author{Chouaib~Bencheikh~Lehocine,  Fredrik~Br{\"{a}}nnstr{\"{o}}m, and Erik~G.~Str{\"{o}}m,~\IEEEmembership{Fellow,~IEEE}}
\maketitle

\begin{abstract}
We generalize an existing low-cost analog signal processing concept that takes advantage of the periodicity of vehicle-to-vehicle broadcast service to the transmitter side. In particular, we propose to process multiple antennas using either an \gls{ABN} of phase shifters, or an \gls{ASN} that periodically alternate between the available antennas, to transmit periodic messages to receivers that have an \gls{ACN} of phase shifters, which has been proposed in earlier work. 
To guarantee robustness, 
we aim to minimize the burst error probability for the worst receiving vehicular user, in a scenario of bad propagation condition that is modeled by a single dominant path between the communicating vehicles. 
In absence of any form of channel knowledge, we analytically derive the optimal parameters of both \gls{ABN} and \gls{ASN}. The \gls{ABN} beamforming vector is found to be optimal for all users and not only for the worst receiving user. 
 Further, we demonstrate that Alamouti scheme for the special case of two transmit antennas yields similar performance to \gls{ABN} and \gls{ASN}. At last, we show that the derived parameters of the two proposed transmission strategies are also optimal when hybrid \gls{ACN}-maximal ratio combining is used at the receiver.
 \end{abstract}
\begin{IEEEkeywords}
	Broadcast Vehicle-to-Vehicle communication, periodic communication, beamforming.
\end{IEEEkeywords}
\glsresetall

 

\section{Introduction}
	\IEEEPARstart{V}{ehicular} communication 
	paves the path for the realization of \gls{C-ITS}.
	By allowing vehicles to share real-time information about their status, vehicles can cooperate and coordinate their movement and maneuvers, which results in increased safety, efficiency, comfort and sustainability of transportation systems. 
	Since \gls{C-ITS} requires the exchange of time-sensitive, critical information, very high reliability and low latency need to be supported by the vehicular communication systems. One typical technical solution to those requirements is the use of multiple antenna systems. In the context of vehicular communications, antennas pose their own challenges.
	It has been noted in several studies, including~\cite{AntPlac2007,AntPlac_2_2014,AntPlac2010}, that antenna patterns are distorted by several factors including vehicle body and mounting position. Such distortions can lead to very low gains or even blind spots at certain directions, which may result in low performance or outage when the transmitted or received signal is along those directions. 
	To enable a robust \gls{V2V} communication against the effects of such disturbances, multiple antennas can be processed with the objective of ensuring certain performance in the worst-case propagation scenario with respect to the antenna system. 
	
	The \gls{ACN} proposed in~\cite{ACN}, is such a solution that was designed to combine antennas at the receiver to achieve robustness. In particular, it was designed to minimize outage probability of a \gls{C-ITS} application, measured through the loss of $K$ consecutive periodic status messages, when the received signal coincides with the worst-case \gls{AOA}. We note that this metric is related to \gls{AoI} assessment metric of broadcast periodic communication~\cite{Aoinfo}, defined as the age of the information contained in the last correctly received periodic message.
	The \gls{ACN} was based on pure analog combining. To leverage on the digital processing benefits a hybrid analog-digital solution was presented in our previous work~\cite{HC}. 
	In both~\cite{ACN} and~\cite{HC}, the transmitter side has not been considered. For a comprehensive multiple antenna system we would like to explore, in accordance with a receiver that uses an \gls{ACN}, what beamforming solutions can be used at the transmitter side to improve robustness in a scenario of \gls{V2V} communication. 
	As we are considering a broadcast transmission, feeding back the \gls{CSI} may be infeasible due to the high number of \glspl{VU}. 
	Also, every \gls{VU} has only limited number of antennas and limited processing capabilities to beamform, based on \gls{CSI}, to all receiving \glspl{VU} at the same time.
	Therefore, herein, we target a transmit beamformer that does not depend on \gls{CSI}. This implies a low-complexity solution.
	Moreover, we target a low-cost analog beamforming solution.
	
	A selection of publications that are relevant to the scope of our work is~\cite{curvSA,OnOff,RBF}.
	In~\cite{curvSA}, 
	two transmission strategies were proposed and evaluated through measurement in a platoon scenario. The first strategy is based on alternating the transmit antennas periodically. The second uses information about the road curvature and selects the antenna with higher probability to have a \gls{LOS} with the receiving antennas. 
	The authors assess both strategies using an \gls{AoI} approach. The proposed schemes consider only the platoon vehicles. Other vehicles on the road, to which cooperative messages have to reach, are not considered. 
	In~\cite{OnOff}, another transmit beamforming structure based on switches is proposed. The scheme is fully analog and it is a variant of antenna selection. Instead of selecting a single antenna element, the transmitter chooses a subset of the antennas that results in maximizing the \gls{SNR} at the receiver. However, the scheme relies on \gls{CSI}, and as explained earlier, for our scenario of broadcast transmission such approach is not very relevant.
	Another work of interest is a random beamforming technique proposed in~\cite{RBF}. A uniform linear array of antennas is weighed by a vector that is randomized over time frequency blocks. The achieved average pattern over many time frequency blocks is omnidirectional. This scheme does not require channel knowledge, but it uses several \gls{RF} chains and therefore, it is a digital strategy.
	Besides this, standard digital approaches that do not depend on \gls{CSI}, e.g., Alamouti transmit diversity~\cite{Alamouti} and similar space-time or space-frequency codes are relevant in this scenario, however, they require the use of multiple \gls{RF} chains. We are interested in finding out how can we improve the system in analog domain, which is characterized by low cost. The digital solutions can be used on top to give enhanced performance and hybrid structures that achieve a trade-off between performance and cost. 
	
	In this paper, we assume that \glspl{VU} use an analog combining network as proposed in~\cite{ACN} at the receiver, while we propose two strategies at the transmitter side that do not rely on \gls{CSI}. The first one is an \gls{ABN} of phase shifters that has a similar construction to the receiver \gls{ACN}. The second strategy is an \gls{ASN} and it is based on alternating between the transmit antennas in a periodic manner. We note that this switching approach was used in~\cite{curvSA} as well, however, it did not take into account a receiver structure as the one proposed here. Furthermore, it was assessed using \gls{AoI} which is related to, but not the same as our assessment metric. 
	 Given the proposed schemes, we optimize the overall system parameters at both transmitter and receiver for the \gls{ABN}, while for the \gls{ASN} only the receiver parameters are optimized. Our optimization is based on the minimization of \gls{BrEP} of $K$ consecutive packets for the worst user in the system i.e., the user experiencing the worst \gls{BrEP}, which under some assumptions, can be defined by the worst-case \gls{AOA} and \gls{AOD}. That ensures a robust communication against unfavorable angles with respect to the antenna system. 
	 Once the optimal parameters are found, we come to show that the developed networks can be further improved by adding a digital processing stage at the receiver in the form of a hybrid combiner with similar structure to~\cite{HC}. The summary of the contributions of this paper follows.
\begin{itemize}
	\item We present two fully analog transmission strategies (\gls{ABN} and \gls{ASN}) that does not require any channel knowledge, in combination with an \gls{ACN} at the receiver.
	\item We provide the optimal transmitter and receiver parameters (phase slopes) associated with both strategies. These parameters minimize the \gls{BrEP} for the worst receiving \gls{VU}, under certain assumptions. 
	\item We extend the optimality proof of the phase slopes of the \gls{ACN} developed in~\cite{ACN}. The phase slopes were shown to be optimal only for systems with a number of receive antennas $\Lr \in\{2,3\}, \Lr\leq K$, here we demonstrate that they are optimal for any system with $\Lr\leq K$.
		\item We demonstrate that the analog structures can be upgraded by a digital processing stage at the receiver, based on \gls{MRC}, to yield enhanced performance.
\end{itemize}

	\section{System Model}
		In this section, we present the system model and the transmission strategies we are considering together with the receiver structure. Moreover, we state the main assumptions that we use to deduce the optimal parameters of the proposed strategies. 
	\subsection{Antennas and Channel Model}
	Scarce multipath propagation with a single dominant component---\gls{LOS} or \gls{SBR}---between the communicating vehicles, pose a challenging scenario for the robustness of a V2V communication based on some antenna system.
	In such scenario, the \gls{AOD} and \gls{AOA} may coincide with very low gain of the antenna patterns, and lead to packet loss, or outage if the same directions are approximately sustained for a time span of several consecutive packets.
Such propagation conditions have been noted to be prominent in traffic scenarios where the road is not surrounded by buildings~\cite{channel_v2v_angular}, e.g., highway. Moreover, it has been noted that in such propagation environments the azimuth angular spread is small~\cite{channel_v2v_angular}, which implies that the few multipath components are restricted to a narrow sector of the antenna system. Therefore, a good framework to ensure a robust \gls{V2V} communication against unfavorable \gls{AOD}, \gls{AOA} is to use a channel model that assumes such propagation conditions.
In the model to follow, however, we consider only the dominant path between the communicating vehicles, since it contributes to most of the received power.
 
 Let $\gt_m $ 
 and $\gr_l$ be the far field functions of the $m^{\text{th}}$ transmit and $l^{\text{th}}$ receive antennas\footnote{Throughout the paper, the superscript letter 's' stands for sender, while 'r' stands for receiver.}, respectively. The antennas, are vertically polarized. We assume that the far field functions are measured such that the antenna position, placement, and car body effects are taken into account, e.g., for side-windows mounted antennas, the blockage created by the vehicle body is accounted for in the far field functions.  Then,
the baseband channel gain between \gls{Tx} antenna $m$ and \gls{Rx} antenna $l$ can be modeled as~\cite[Ch.~6]{WirelessCom_Molisch} 
		\begin{align}
	h_{l,m}(t)=
	a(t)\gt_m(\phit,\theta^{\mathrm{s}})\gr_l(\phir,\theta^{\mathrm{r}})\mathrm{e}^{\jmath 	\omt_m}\mathrm{e}^{-\jmath 	\omr_l},\label{def:channel:v1} 
	\end{align}
	where $a(t)=|a(t)| \mathrm{e}^{-\jmath 2\pi f_{\text{c}}\tau_0(t)}$ is the complex-amplitude of the dominant component, $f_{\text{c}}$ is the carrier frequency, $\tau_0(t)=d_{0,0}(t)/c$, $d_{0,0}(t)$ is the physical path length between the transmit and receive reference antenna pair $(0,0)$, and $c$ is the speed of light in free space. The \gls{AOD} and \gls{AOA} in azimuth and elevation planes are denoted by $(\phit,\theta^{\mathrm{s}})$ and $(\phir,\theta^{\mathrm{r}})$, respectively.
	The relative phase shifts with respect to the reference antennas, $\omt_m$, $\omr_l$, depend on the \gls{AOD} and \gls{AOA} and they are given by~\cite[Ch.~6]{WirelessCom_Molisch} 
	\begin{align}
		\omt_m&=\langle\boldsymbol{k_\text{c}}(\phit,\theta^{\mathrm{s}}),\boldsymbol{u}^{\mathrm{s}}_m-\boldsymbol{u}^{\mathrm{s}}_0\rangle\label{def:channel:relativepahses:t},\\
		\omr_l&= \langle\boldsymbol{k_\text{c}}(\phir,\theta^{\mathrm{r}}),~\boldsymbol{u}^{\mathrm{r}}_l-\boldsymbol{u}^{\mathrm{r}}_0\rangle,\label{def:channel:relativepahses:r}
	\end{align}
	where $\langle \cdot\,,\cdot\rangle$ denotes the inner product, $\boldsymbol{u}^{\mathrm{s}}_m$ and  $\boldsymbol{u}^{\mathrm{r}}_l$ are the coordinates of $m^{\text{th}}$ transmit and $l^{\text{th}}$ receive antennas respectively, and $\boldsymbol{k_\text{c}}(.)$ is the unit wave vector in the direction of \gls{AOD} or \gls{AOA}, with coordinates $(k_{\text{x}},k_{\text{y}},k_{\text{z}})=2\pi/\lambda_{\text{c}}(\sin (\theta)\cos(\phi),\sin(\theta)\sin(\phi),\cos(\theta))$. The elevation plane angle $\theta$ is here defined as the angle between the z-axis and the vector of interest (i.e., polar angle). 
	
	We note that in~\eqref{def:channel:v1}, $a(t)$, $(\phit,\theta^{\mathrm{s}})$ and $(\phir,\theta^{\mathrm{r}})$ are assumed to be the same for all antenna pairs $(l,m)$. This assumption is reasonable when the distance between the \gls{Tx} and \gls{Rx}, or the distance separating the refection point and the antenna arrays (\gls{SBR} propagation), is much larger than the inter-separation between antenna elements of both, \gls{Tx} and \gls{Rx} arrays~\cite[Ch.~7]{book_LOS_MIMO}. Furthermore, the relative phase differences as expressed in~\eqref{def:channel:relativepahses:r} and~\eqref{def:channel:relativepahses:t} follows from the same assumption. Besides this, it is worth noting that the channel model is not restricted to a specific antenna array arrangement.
	
	Taking into account that the vehicles are relatively of the same height, the elevation angles are restricted to a narrow sector and it is therefore of less importance compared to the azimuth angles for a scenario of V2V communication. Following that, we assume that the dominant component is arriving along the azimuth plane with $\theta^{\mathrm{s}}\approx \pi/2$, $\theta^{\mathrm{r}}\approx \pi/2$. Consequently, the \gls{AOD} and \gls{AOA} can be restricted to the azimuth plane angles $\phit$ and $\phir$ in~\eqref{def:channel:v1} and the far field functions are expressed as
	\begin{align}\label{eq:antenna:phi:only}
	  \gt_m(\phit)\triangleq \gt_m(\phit,\pi/2),\quad \gr_l(\phir)\triangleq \gr_l(\phir,\pi/2).
	\end{align}

\subsection{Traffic Model of IEEE802.11p Cooperative Service}
Consider a scenario where vehicles periodically broadcast their status information to all vehicles in their proximity to create a cooperative environment. Such functionality can be supported by IEEE802.11p V2X technology, where every vehicle broadcast a \gls{CAM} every $T$~s that include information about their dynamic status, like position, speed, heading, etc. The period of dissemination is specified to be in the range of $0.1\leq T\leq 1$~s~\cite{CAM}. The physical layer supports multiple data rates, however, for \glspl{CAM} a data rate of $6$~Mbit/s is deemed suitable~\cite{IEEE11p_rate}. Given that a reasonable \gls{CAM} size is in the range $100$ to $500$ bytes~\cite{IEEE11p_rate}, the message duration $T_{\mathrm{m}} < 0.7$~ms. 

Motivated by the high dissemination frequency of \gls{CAM} messages, it has been suggested in several works including~\cite{Aoinfo}, to measure the reliability of a \gls{C-ITS} application that depends on their content using~\Gls{AoI}. 
In this framework, given a packet is generated by a vehicle at time $t=0$, then transmitted and correctly received at time $t=\tau$ by a receiving vehicle, if the time elapsing until the next packet reception exceeds a maximum allowable age $A_{\text{max}}$ then an outage is declared. Exceeding $A_{\text{max}}$ implies that the age of the status information decoded at $t=\tau$ is outdated and cannot be used by a \gls{C-ITS} application at the receiving vehicle. If the latency between the generation and reception of packets is neglected (i.e, $\tau\approx 0$), age-of-information is equivalent to \gls{IRT} which is defined as the time separating two successful reception of packets and it is elaborated in~\cite{PIR}. \Gls{BrEP} defined as the probability of losing $K$ consecutive \glspl{CAM} can be thought of as a physical layer counterpart to \gls{AoI} and it was first proposed in~\cite{ACN}. If latency is neglected, the two parameters can be related as $A_{\text{max}}=KT$ where $K$ is the burst length and $T$ is the period of dissemination. 

\begin{figure}[b]
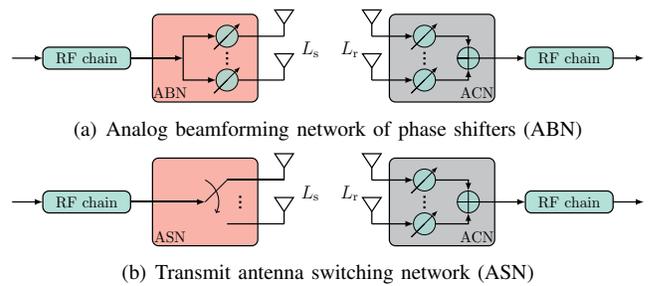

	\centering
	\subfigure[Analog beamforming network of phase shifters (ABN)]{
		\includegraphics[width=0.95\columnwidth]{/ABN_flattened.pdf}
		\label{fig:ABN}}
	\subfigure[Transmit antenna switching network (ASN)]{
		\centering
		\includegraphics[width=0.95\columnwidth]{/ASN_flattened.pdf}
		\label{fig:ASN}	
	}
	\caption{Transmission strategies structures.}
\end{figure}
\subsection{Transmission Strategies}
In the following, we assume that a reference transmitting vehicle is equipped with $\Lt$ antennas and
	the receiving vehicles are equipped with $\Lr$ antennas. 
The receiver structure is based on an \gls{ACN} as proposed in~\cite{ACN}. The combining network does not depend on \gls{CSI}, and it is composed of phase shifters modeled as affine functions of time according to $\varphi^{\mathrm{r}}_l=(\phr_l t+ \phrI_l)$, where $\phr_l \in \mathbb{R}$ represents a phase slope, $\phrI_l \in [0,2\pi)$ is an unknown initial phase offset, and $0\leq l\leq \Lr-1$.
We are interested in finding the optimal parameters of the network for two multiple antennas beamforming strategies. 
Before presenting the strategies we would like to develop simple generic equations for the received signal of a particular \gls{VU}. Let $\boldsymbol{b}$ be the beamforming vector, the received signal can be expressed as 
\begin{align}
\boldsymbol{r}= a(t)\boldsymbol{H}\boldsymbol{b} x(t)+ \boldsymbol{n},
\end{align}
where $x(t)=\tilde{x}(t-\tau_0(t))$, $\tilde{x}(t)$ is the transmitted baseband signal, and $\tau_0(t)$ is the propagation delay. 
The channel matrix $\boldsymbol{H}$ has entries $[\boldsymbol{H}]_{l,m}=h_{l,m}(t)/a(t)$ following~\eqref{def:channel:v1}.
The noise vector $\boldsymbol{n}$ has $\Lr$ elements, each modeled as independent white Gaussian noise over the system bandwidth with $\mathcal{CN}(0,\sigma^2_{\mathrm{n}})$. 
At the receiver, we apply the analog combining vector $\boldsymbol{w}$,
\begin{align}\label{eq:w:ACN}
&[\boldsymbol{w}]_l=\mathrm{e}^{-\jmath(\phr_l t+ \phrI_l)},\quad 0\leq l\leq \Lr-1.
\end{align}
The combined signal is given by
\begin{align}
r(t)&=a(t)x(t)\boldsymbol{w}^{\textsf{H}}\boldsymbol{H}\boldsymbol{b}  + \boldsymbol{w}^{\textsf{H}} \boldsymbol{n}\\
&=a(t)x(t)c(t) + \tilde{n}(t),\label{eq:r:generic}
\end{align}
where $c(t)=\boldsymbol{w}^{\textsf{H}}\boldsymbol{H}\boldsymbol{b}$ and $\tilde{n}(t)=\boldsymbol{w}^{\textsf{H}} \boldsymbol{n}$ denote the effective channel gain and noise at the output of the \gls{ACN}. The signal $\tilde{n}(t)$ is a zero-mean white Gaussian noise with $P_{\text{n}}=\mathbb{E}\{|\boldsymbol{w}^{\textsf{H}} \boldsymbol{n}|^2 \}= \Lr\sigma^2_{\mathrm{n}}$, and $P_{\mathrm{r}}=\mathbb{E}\{|a(t)x(t)|^2\}$ is the average received power. 
In the following, we present the two beamforming strategies and their corresponding effective channel gains. 
\subsubsection{\Gls{ABN} of Phase Shifters} 
We propose a transmitter structure that is similar to the receiver combining technique. Namely, we use phase shifters modeled as $\varphi^{\mathrm{s}}_m(t)=(\pht_m t+ \phtI_m)$ per antenna branch $0\leq m\leq\Lt-1$, where $\pht_m\in \mathbb{R}$ is a phase slope and $\phtI_{m}\in [0,2\pi)$ is an unknown initial phase offset. The overall structure is shown in \fig~\ref{fig:ABN}. 
The analog beamforming vector is given by 
\begin{align}\label{eq:b:abn}
&[\boldsymbol{b}]_m=\frac{1}{\sqrt{\Lt}}\mathrm{e}^{\jmath(\pht_m t+ \phtI_m)}, \quad 0\leq m\leq \Lt-1.
\end{align}
Note that by setting the phase slopes, the beamforming vector is determined ($\phtI_{m}$ is unknown and could take any value in $[0,2\pi)$).
The factor $1/\sqrt{\Lt}$, comes from splitting equally the power among the transmit antenna branches, which is a reasonable measure in absence of channel knowledge. Also, it ensures the use of similar power level compared to the single transmit antenna case.  Using~\eqref{def:channel:v1},~\eqref{eq:antenna:phi:only} and~\eqref{eq:r:generic} the corresponding effective channel gain is given by 
\begin{align}
c^{(\mathrm{a})}(t)=&\frac{1}{\sqrt{\Lt}} \sum_{m=0}^{\Lt-1}\gt_m(\phit)\mathrm{e}^{-\jmath(-\omt_m-\pht_mt-\phtI_m)}\nonumber \\ &\times\sum_{l=0}^{\Lr-1} \gr_l(\phir)\mathrm{e}^{-\jmath(\omr_l-\phr_lt-\phrI_l)}.\label{eq:c:ABN}
\end{align}
\subsubsection{Transmit \Gls{ASN}}
Instead of using analog phase shifters, we consider a transmitter that uses only switches. Since we are restricting any form of channel based control of the switches, an \gls{ASN} uses a single antenna element for a transmission then switch to the next element for the next transmission. 
The overall structure is illustrated in \fig~\ref{fig:ASN}.
Let $k$ denote the packet index, the \gls{ASN} beamforming vector for $kT\leq t\leq kT+T_{\mathrm{m}} $, can be expressed as 
\begin{align}
	&[\boldsymbol{b}]_m=1, \quad m=\rem(k,\Lt)\\
	&[\boldsymbol{b}]_i=0, \quad\forall i\neq m\nonumber 
\end{align}
where $\rem(a,b)$ denotes the remainder of dividing $a$ by $b$. We note that the same antenna element is used periodically every $\Lt$ transmissions.
The received signal following this strategy for the $k^{\text{th}}$ packet is given by~\eqref{eq:r:generic}, with effective channel gain at $kT\leq t\leq kT+T_{\mathrm{m}}$ 
\begin{align}\label{eq:c:ASN}
	c^{(\mathrm{b})}(t)=\gt_m(\phit)\mathrm{e}^{\jmath\omt_m}\sum_{l=0}^{\Lr-1} \gr_l(\phir)\mathrm{e}^{-\jmath(\omr_l-\phr_lt-\phrI_l)},
\end{align} 
where $m=\rem(k,\Lt)$.
Note that the power of the signal is not splitted in this case, and a single antenna element uses full power for every transmission. 

	\subsection{Assumptions}\label{Sec:sub:assumptions}
Before tackling the design task we consider a number of assumptions. 
First, since the packet duration is very small in comparison to the period, $T_{\mathrm{m}} \ll T$, we assume that the effective channel gains are constant over the packet duration for both strategies.
That is 
\begin{align}\label{eq:assumption:Ccte}
c^{(.)}(t)\approx c^{(.)}(kT),\quad kT\leq t\leq kT+T_{\mathrm{m}}.
\end{align}
Second, given the channel model, if the \gls{AOD}, \gls{AOA} coincides with directions along which the antenna systems have low gain, and the directions are sustained for $KT$~s then an outage may occur. Thus, taking into account this worst-case propagation scenario and as part of our robust design approach, we assume that the dominant component existing between the \gls{Tx} and \gls{Rx} experience negligible change over the duration of $KT$~s. Consequently, the following assumptions follow.
\begin{itemize}
	\item The \gls{AOD}, $\phit$, and \gls{AOA}, $\phir$, are assumed to experience negligible change and thus are modeled as constant over the duration of $KT$~s.
	\item From~\eqref{def:channel:relativepahses:r} and~\eqref{def:channel:relativepahses:t}, we see that $\omt_m$ and $\omr_l$ depend on the \gls{AOD}, \gls{AOA} and the geometry of the antenna arrays (which is fixed), therefore they can be assumed constant over the duration of $KT$~s as well.
	\item The average received power along the dominant component expressed as $P_{\mathrm{r}}=\mathbb{E}\{|a(t)x(t)|^2\}$ is assumed constant over $KT$~s. 
\end{itemize}

\section{Design of the \gls{Tx}-\gls{Rx} Schemes}\label{sec:design}
There exists several users to which \gls{CAM} messages need to reach. From a robustness aspect, we want to ensure certain performance for the worst receiving \gls{VU}. 
That depends on the applied beamforming vector $\boldsymbol{b}$, combining vector $\boldsymbol{w}$ and the channel between the transmitting and receiving \glspl{VU}. For our simplified model, 
the channel can be represented by the \gls{AOD}, \gls{AOA} and the far field functions of antennas. Assuming that all receiving \glspl{VU} have the same antenna system, (i.e., same number of antennas $\Lr$ and same far field functions $\gr_l$), the worst receiving user is defined by the worst-case \gls{AOD}, \gls{AOA} for a given $\boldsymbol{b}$ and $\boldsymbol{w}$.  
We will find out later that the solution found under this assumption, is also optimal when the system is generalized to receiving users with different number of antennas and different far field functions. 
	
We did set our framework to ensure the robustness in the system. To quantify performance, we use the concept of \gls{BrEP}, that is, the probability of losing $K$ consecutive \gls{CAM} packets. To be able to derive some analytical results, we resort to the assumption that packet error probability is an exponential function of \gls{SNR}, and that packet errors are statistically independent. As a consequence, the minimization of \gls{BrEP} is equivalent to the maximization of the sum of the \gls{SNR} of the $K$ packets, referred to as sum-\gls{SNR} in this work. Exact details of relating these two parameters can be found in~\cite[\secR~III]{ACN},~\cite[\secR~III.B]{HC}.

Thus, under the above introduced assumptions, the design objective, which can be formulated as minimizing \gls{BrEP} of $K$ packets for the worst receiving \gls{VU}, is equivalent to maximizing the sum-\gls{SNR} of $K$ packets for the worst-case \gls{AOA} and \gls{AOD}. 
From here, we go through the design procedures for the two schemes separately. 
\begin{figure*}[t]
	\centering
\begin{align}
\JA(\phir,\phit,&\phrV,\phtV,\psrV,\pstV )= 
\sum_{l=0}^{\Lr-2}\sum_{i=l+1}^{\Lr-1}c_{l,i}(\phir,\phit)\sum_{k=0}^{K-1}\cos\big(\Delta\psr_{l,i}-\Delta\phr_{l,i}kT\big)\nonumber\\
&+\sum_{m=0}^{\Lt-2}\sum_{j=m+1}^{\Lt-1}d_{m,j}(\phir,\phit)\sum_{k=0}^{K-1}\cos\big(\Delta\pst_{m,j}-\Delta\pht_{m,j}kT\big)\nonumber\\
&+\sum_{l=0}^{\Lr-2}\sum_{i=l+1}^{\Lr-1}\sum_{m=0}^{\Lt-2}\sum_{j=m+1}^{\Lt-1}c'_{l,i}(\phir)d'_{m,j}(\phit)
\sum_{k=0}^{K-1}\cos\big(\Delta\psr_{l,i}-\Delta\phr_{l,i}kT\big)\cos\big(\Delta\pst_{m,j}-\Delta\pht_{m,j}kT\big). \label{def:J:original}\tag{25}\\
c_{l,i}(\phir,\phit)&= \bigg(\sum_{m=0}^{\Lt-1}\frac{|\gt_m(\phit)|^2}{\Lt} \bigg) \overbrace{2\frac{|\gr_l(\phir)||\gr_i(\phir)|}{\Lr}}^{c'_{l,i}(\phir)}, \qquad
d_{m,j}(\phir,\phit)=\bigg(\sum_{l=0}^{\Lr-1}\frac{|\gr_l(\phir)|^2}{\Lr}\bigg) \overbrace{2\frac{|\gt_m(\phit)||\gt_j(\phit)|}{\Lt}}^{d'_{m,j}(\phit)}\label{def:J:cli:dmj}.\tag{26}\\
\cline{1-2}\nonumber
\end{align}
\end{figure*}
\subsection{Design of the Analog Beamforming Network}

\subsubsection{Deriving the \gls{SNR} per Packet}
Given an \gls{ABN} transmitter, the received signal is modeled by~\eqref{eq:r:generic}, where the effective channel gain is given by $c^{(\mathrm{a})}$~\eqref{eq:c:ABN}. Incorporating the assumptions in \secR~\ref{Sec:sub:assumptions}, $c^{(\mathrm{a})}$ can be approximated when $kT\leq t \leq kT+T_{\mathrm{m}}$ as
\begin{align}
	c^{(\mathrm{a})}(kT)&=\frac{1}{\sqrt{\Lt}} \sum_{m=0}^{\Lt-1}|\gt_m(\phit)|\mathrm{e}^{-\jmath(\pst_m-\pht_mkT)}\nonumber
	\\ &\times\sum_{l=0}^{\Lr-1} |\gr_l(\phir)|\mathrm{e}^{-\jmath(\psi_l-\phr_lkT)}, \label{eq:ca}
\end{align}
where the effective channel phases are given by
\begin{align}
	\pst_m&=\rem{(-\omt_m-\phase{\gt_m(\phit)}-\phtI_m, 2\pi)}\label{def:psit},\\
	\psr_l&=\rem {(\omr_l-\phase{\gr_l(\phir)}-\phrI_l, 2\pi)},\label{def:psir}
\end{align}
and they are approximately constant over $KT$~s.
Let $\psrV$, $\phrV$ be $\Lr$-vectors with entries corresponding to $\psr_l$ and $\phr_l$, respectively, 
and analogously are defined the $\Lt$-vectors $\pstV$ and $\phtV$.
Then, recalling~\eqref{eq:r:generic} we can express the $k^{\text{th}}$ packet \gls{SNR}, for  $k=0, 1, \cdots, K-1$ as
\begin{align}\label{def:gamma:txrx}
\snrA_k=&\frac{P_{\mathrm{r}}} {\Lr\sigma^2_{\mathrm{n}}}|c^{(\mathrm{a})}(kT)|^2\\
=&\frac{P_{\mathrm{r}}}{\sigma^2_{\mathrm{n}}}  \frac{\Gr(\phir, \phrV,\psrV,k)}{\Lr} \frac{\Gt (\phit, \phtV,\pstV,k)}{\Lt}, 
\end{align}
where
\begin{align}\label{def:Gr}
\Gr(\phir, \phrV,\psrV,k)=&\bigg|\sum_{l=0}^{\Lr-1} |\gr_l(\phir)|\mathrm{e}^{-\jmath(\psr_l-\phr_lkT)} \bigg|^2\\
=\sum_{l=0}^{\Lr-1}|\gr_l(\phir)|^2&
+2\sum_{l=0}^{\Lr-2}\sum_{i=l+1}^{\Lr-1}|\gr_l(\phir)||\gr_i(\phir)|  \nonumber \\
&\times\cos\big(\psr_i-\psr_l-(\phr_i-\phr_l)kT\big).\label{eq:Gr:elaborated}
\end{align}
and 
\begin{align}
\Gt (\phit, \phtV,\pstV,k)=\bigg|\sum_{m=0}^{\Lt-1} |\gt_m(\phit)|\mathrm{e}^{-\jmath(\pst_l-\pht_lkT)}\bigg|^2.
\end{align}
From \secR~\ref{Sec:sub:assumptions}, we know that $P_{\mathrm{r}}$ is assumed approximately constant over burst duration.
Thus, we normalize the \gls{SNR} with respect to $P_{\mathrm{r}}/\sigma^2_{\mathrm{n}}$ and we sum over a burst of $K$ consecutive packets 
to obtain the normalized sum-\gls{SNR} given by
\begin{align}
	\SsnrA&(\phir,\phit,\phrV,\phtV,\psrV,\pstV )=\sigma^2_{\mathrm{n}}/P_{\mathrm{r}}\sum_{k=0}^{K-1}\snrA_k\nonumber\\ 
&=K \overline{G}(\phir,\phit)+ \JA(\phir,\phit,\phrV,\phtV,\psrV,\pstV )\label{def:S:GplusJ}.
\end{align}
That is, the sum-\gls{SNR} $\SsnrA(.)$ has a part that depends only on the \gls{AOA} and \gls{AOD},
 \begin{align}\label{def:G}
\overline{G}(\phir,\phit)=\sum_{l=0}^{\Lr-1}\frac{|\gr_l(\phir)|^2}{\Lr} \sum_{m=0}^{\Lt-1}\frac{|\gt_m(\phit)|^2}{\Lt},
\end{align}
and a part that captures the channel variation $\JA(\cdot)$, which, with some simplification of notation, and introduction of the operator 
\begin{align}\label{def:operator:Delta}
	\Delta x_{l,i}\triangleq x_i-x_l, 
\end{align}\addtocounter{equation}{2}
 can be expressed according to~\eqref{def:J:original}--\eqref{def:J:cli:dmj} on top of the page.

\subsubsection{Optimization Problem}
We are interested in finding the design parameters $\phrV,\phtV$ that yields a robust system, i.e., that maximizes the sum-\gls{SNR} for worst-case \gls{AOA}, \gls{AOD}. Besides that, we need to account for the worst-case effective channel phases $(\psrV,\pstV)$ which depend on the initial unknown offset of phase shifters ($\phrI_l,\phtI_m$) at both \gls{Rx} and \gls{Tx}. 
Thus, the problem can be stated formally as 
\begin{align}\label{def:opt:problem:ABN}
	&(\phrSV,\phtSV)=
	\arg \sup_{(\phrV,\phtV)} \inf_{\substack{(\phir,\phit)\\(\psrV,\pstV )}} \SsnrA(\phir,\phit,\phrV,\phtV,\psrV,\pstV ),%
\end{align}
where,  $\phrV \in \mathbb{R}^{\Lr}$, $\phtV \in \mathbb{R}^{\Lt}$, $
\psrV \in [0,2\pi)^{\Lr}$, $\pstV \in [0,2\pi)^{\Lt}$ and $
\phir,\phit \in [0,2\pi)$.
The solutions to this problem can be deduced from the following theorem.
\begin{theorem}\label{theo:main}
Given $\SsnrA$ defined in~\eqref{def:S:GplusJ} and $\Lr+\Lt>2$, let 
	\begin{align}
		\SsnrA^\star(\phir,\phit)&\triangleq \sup_{(\phrV,\phtV)} \inf_{(\psrV,\pstV )}\SsnrA(\phir,\phit,\phrV,\phtV,\psrV,\pstV )\nonumber.
	\end{align}
	Then, for any \gls{AOA}, \gls{AOD}, we have
	\begin{enumerate}
	\item The objective function is upper bounded as
	\begin{align}
	\SsnrA^\star(\phir,\phit) \leq K \overline{G}(\phir,\phit).\label{eq:theo:1:UB}
	\end{align}
	    \item If $\Lr\Lt\leq K$, the upper bound is achievable
	    \begin{align}
	        \SsnrA^\star(\phir,\phit)=K \overline{G}(\phir,\phit).\label{eq:theo:1}
	    \end{align}
	   \item  A solution $(\phrSV,\phtSV)$ yields~\eqref{eq:theo:1}, when $\Lr\Lt\leq K$ if 
	     \begin{subequations}\label{eq:th:optCon}
	\begin{align}
	\Delta \phrS_{l,i}T/2&\in \mathcal{X^*}, \label{eq:th:optCon:x}\\
	\Delta \phtS_{m,j}T/2 &\in \mathcal{X^*}, \label{eq:th:optCon:z}\\
	(\Delta \phrS_{l,i}\pm \Delta \phtS_{m,j})T/2 &\in \mathcal{X^*}, \label{eq:th:optCon:xz}
	\end{align} 
\end{subequations}
 where $\mathcal{X^*}=\{q\pi/K:q\in\mathbb{Z}\}\setminus\{n\pi:n\in \mathbb{Z}\}$, $0\leq l<i\leq \Lr-1$ and $0\leq m<j\leq \Lt-1$. 
The condition is restricted to~\eqref{eq:th:optCon:z} or~\eqref{eq:th:optCon:x}, when $\Lr=1$ or $\Lt=1$, respectively. 
	\item If $|\gr_l(\phir)|>0$, $\forall l$ and $|\gt_m(\phit)|>0$, $\forall m$, then
~\eqref{eq:theo:1} $\implies$~\eqref{eq:th:optCon} and $\Lr\Lt\leq K$.
	\end{enumerate}
\end{theorem}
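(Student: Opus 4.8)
The plan is to work with the decomposition $\SsnrA = K\overline{G}(\phir,\phit)+\JA$ and to regard $\JA$ from~\eqref{def:J:original} purely as a trigonometric function of the adversarial offsets $(\psrV,\pstV)$, since $\overline{G}$ is untouched by them. For the upper bound~\eqref{eq:theo:1:UB} I would average $\JA$ uniformly over $(\psrV,\pstV)\in[0,2\pi)^{\Lr+\Lt}$: every summand in~\eqref{def:J:original} carries a factor $\cos(\Delta\psr_{l,i}-\cdots)$ or $\cos(\Delta\pst_{m,j}-\cdots)$ whose mean over the relevant offset is $0$, and in the product group the receive and transmit factors average independently. Hence $\mathbb{E}_{(\psrV,\pstV)}[\JA]=0$ for every $(\phrV,\phtV)$, so $\inf_{(\psrV,\pstV)}\JA\le 0$, which gives~\eqref{eq:theo:1:UB} for all $\phir,\phit$.

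For parts (ii)--(iii) the pivotal identity is the Dirichlet-kernel evaluation $\sum_{k=0}^{K-1}\cos(\beta-\alpha kT)=\frac{\sin(\alpha KT/2)}{\sin(\alpha T/2)}\cos\!\big(\beta-\alpha(K-1)T/2\big)$, which vanishes for every $\beta$ exactly when $\alpha T/2\in\mathcal{X}^*$. Taking $\alpha=\Delta\phr_{l,i}$ and $\alpha=\Delta\pht_{m,j}$ annihilates the first two groups of $\JA$ under~\eqref{eq:th:optCon:x}--\eqref{eq:th:optCon:z}; for the product group I would apply $\cos A\cos B=\tfrac12\cos(A-B)+\tfrac12\cos(A+B)$ to split each term into two Dirichlet sums with $\alpha=\Delta\phr_{l,i}\pm\Delta\pht_{m,j}$, killed by~\eqref{eq:th:optCon:xz}. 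Thus~\eqref{eq:th:optCon} forces $\JA\equiv 0$ in $(\psrV,\pstV)$, whence $\SsnrA^\star=K\overline{G}$, proving (iii). For (ii) it remains to realise~\eqref{eq:th:optCon} when $\Lr\Lt\le K$: setting $\phr_l T=2\pi a_l/K$ and $\pht_m T=2\pi b_m/K$ with integers $a_l,b_m$, conditions~\eqref{eq:th:optCon} become equivalent to the $\Lr\Lt$ residues $s_{l,m}=a_l+b_m \bmod K$ being pairwise distinct, and the choice $a_l=l$, $b_m=\Lr m$ makes $s_{l,m}=l+\Lr m$ range over $\{0,\dots,\Lr\Lt-1\}$, which are distinct modulo $K$ precisely because $\Lr\Lt\le K$.

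For the converse (iv) I would first promote ``$\inf_{(\psrV,\pstV)}\JA=0$'' to ``$\JA\equiv 0$'': $\JA$ is continuous and $2\pi/T$-periodic in each slope (so the outer supremum is attained on a compact torus), and a continuous nonnegative function with zero mean must vanish identically. I would then expand $\JA$ in Fourier characters $\mathrm{e}^{\jmath(\langle\boldsymbol{n},\psrV\rangle+\langle\boldsymbol{p},\pstV\rangle)}$ and equate coefficients to zero. The three groups occupy disjoint character sets---single receive differences $\pm(\boldsymbol{e}_i-\boldsymbol{e}_l)$, single transmit differences $\pm(\boldsymbol{f}_j-\boldsymbol{f}_m)$, and mixed pairs---and within a group distinct tuples and signs give distinct characters; since the hypothesis $|\gr_l|,|\gt_m|>0$ makes every coefficient $c_{l,i},d_{m,j},c'_{l,i},d'_{m,j}$ strictly positive, each coefficient can vanish only through $\sum_{k=0}^{K-1}\mathrm{e}^{-\jmath(\cdot)kT}=0$, i.e.\ through~\eqref{eq:th:optCon}. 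Finally~\eqref{eq:th:optCon} forces the $\Lr\Lt$ residues $s_{l,m}$ to be distinct in $\mathbb{Z}/K\mathbb{Z}$, so $\Lr\Lt\le K$ by pigeonhole.

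The hardest part will be the character bookkeeping in (iv). To let positivity of the gains isolate each geometric sum, one must verify that no two of the three groups, and no two distinct index tuples $(l,i,m,j)$ or sign choices inside the product group, ever feed the same Fourier character; the key observation is that an ordered receive difference $\boldsymbol{e}_i-\boldsymbol{e}_l$ with $i>l$ can never equal a reversed one, so each character is produced by a unique term. Carrying this disjointness through cleanly---and thereby reducing $\JA\equiv 0$ to the individual conditions~\eqref{eq:th:optCon} and the residue count $\Lr\Lt\le K$---is the delicate step on which the whole converse rests.
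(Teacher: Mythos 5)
Your proposal is correct and follows essentially the same route as the paper's Appendix~A: the zero-mean averaging of $\JA$ over $(\psrV,\pstV)$ for the upper bound (Lemma~\ref{lem:inf}), the closed-form geometric sum and the set $\mathcal{X}^*$ for sufficiency (Lemma~\ref{lem:all0s}\,\textit{(i)}), the construction $a_l=l$, $b_m=\Lr m$ (Lemma~\ref{lem:LrLt:K}), and, for the converse, the promotion of $\inf\JA=0$ to $\JA\equiv 0$ via continuity, nonnegativity and zero mean, followed by orthogonality to isolate each term. Your two technical variations are equivalent repackagings of the paper's arguments: equating Fourier coefficients to zero is the same orthogonality computation the paper performs via $\Av[\JA^2]=0$ and Lemma~\ref{lem:f:delta:f:delta:zeromean}, and your reformulation of~\eqref{eq:th:optCon} as pairwise distinctness of the $\Lr\Lt$ residues $a_l+b_m \bmod K$ (after normalizing $x_0=z_0=0$ so the slopes become integer multiples of $\pi/K$) is exactly the paper's count of the disjoint sets $\mathcal{\tilde{E}}_{x}$, $\mathcal{\tilde{E}}_{z}$, $\mathcal{\tilde{E}}_{(x+z)}$ inside $\mathcal{X}^{**}$, stated as a single pigeonhole.
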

\begin{proof}
	See Appendix~\ref{app:theo:main}.
\end{proof}
\begin{remark}\label{}
The solutions satisfying~\eqref{eq:th:optCon} are independent of $\phir$ and $\phit$, and thus they solve~\eqref{def:opt:problem:ABN} when $\Lr\Lt\leq K$.
\end{remark}
\begin{remark}\label{remark:theorem:main}
An optimal solution $(\phrSV,\phtSV)$ (i.e., satisfies~\eqref{eq:th:optCon}) achieve for $\psrV \in [0,2\pi)^{\Lr}$, $\pstV \in [0,2\pi)^{\Lt}$,
\begin{align}\label{eq:remark}
    \SsnrA(\phir,\phit,\phrSV,\phtSV,\psrV,\pstV )=K \overline{G}(\phir,\phit).
\end{align}
\end{remark}
\begin{proof}
	See Appendix~\ref{app:theo:main}.
\end{proof}

Theorem~\ref{theo:main} indicates that choosing a set of phase slopes satisfying~\eqref{eq:th:optCon} maximizes the sum-\gls{SNR} for the worst-case effective channel phase vectors, for all \gls{AOA}, \gls{AOD} including the worst-case directions. Thus,~\eqref{eq:th:optCon} are solutions to~\eqref{def:opt:problem:ABN}, when $\Lr\Lt\leq K$. Assuming an antenna system with $|\gr_l(\phir)|>0$, $\forall l$ and $|\gt_m(\phit)|>0$, $\forall m$, then we know that the only way to solve~\eqref{def:opt:problem:ABN} is to use a set of phase slopes satisfying~\eqref{eq:th:optCon}. That implies that~\eqref{eq:th:optCon} is sufficient and necessary optimality condition in this case. Note that the last assumption on antenna system is easily satisfied for physical antennas which typically radiates in all directions including nulls. These correspond usually to very low, but non-zero gains (several dBs below zero).
Recalling~\eqref{def:S:GplusJ}, we see that Remark~\ref{remark:theorem:main} points out that the optimal phase slopes average out the variation of sum-\gls{SNR} due to effective channel phases, i.e., $\JA(\phir,\phit,\phrSV,\phtSV,\psrV,\pstV)=0,$ $\forall~ \pstV, \psrV$. Since the optimal sum-\gls{SNR} is proportional to $\overline{G}(\phir,\phit)$, we refer to $\overline{G}$ as the effective radiation pattern realized using \gls{ABN}.   

The theorem results are general and they apply to the special cases $\Lt=1$ or $\Lr=1$. The optimality conditions boils down to $\Lr\leq K$ and~\eqref{eq:th:optCon:x} for the former special case, and to $\Lt\leq K$ and \eqref{eq:th:optCon:z} for the latter. The system with $\Lt=1$ has already been studied in~\cite{ACN}. The derived condition~\eqref{eq:th:optCon:x} coincides with optimality condition of phase slopes obtained in~\cite[Theorem~1, eq. (18)]{ACN}. However, in that work, it was shown that the phase slopes satisfying~\eqref{eq:th:optCon:x} ensures a lower bound on the objective function when $\Lr\leq K$, while their optimality was shown to hold only in the special cases of $\Lr\in \{2, 3\}$. Here, Theorem~\ref{theo:main} extends the optimality of phase slopes satisfying~\eqref{eq:th:optCon:x} to any system with $\Lr\leq K$ ($\Lt=1$).

Given the conditions~\eqref{eq:th:optCon} and $\Lr\Lt\leq K$ we can derive phase slopes constructions that achieves optimality. These are stated in the following corollary.
\begin{corollary}\label{cor:ABN:ph_slopes}
	If $\Lr\Lt\leq K$, then the following phase slopes
		\begin{align}
		\phrS_l&= l\frac{2\pi}{KT}, ~\phtS_m=m \Lr \frac{2\pi}{KT},\label{eq:th:ph_slopes}\\
		\phrS_l&= l\Lt \frac{2\pi}{KT},~ \phtS_m=m \frac{2\pi}{KT}, \label{eq:th:ph_slopes:reciprocal} 	
		\end{align}
	where $0\leq l\leq \Lr-1$ and $0\leq m\leq \Lt-1$, satisfy~\eqref{eq:th:optCon} and thus are optimal.
\end{corollary}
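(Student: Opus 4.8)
The plan is to verify directly that each of the two proposed phase-slope constructions satisfies the three membership conditions \eqref{eq:th:optCon:x}--\eqref{eq:th:optCon:xz}; since $\Lr\Lt\leq K$ is assumed, Theorem~\ref{theo:main} then guarantees optimality (it establishes that \eqref{eq:th:optCon} together with $\Lr\Lt\leq K$ yields \eqref{eq:theo:1}). The reduction to exploit is this: a real number lies in $\mathcal{X^*}=\{q\pi/K:q\in\mathbb{Z}\}\setminus\{n\pi:n\in\mathbb{Z}\}$ precisely when it equals $q\pi/K$ for some $q\in\mathbb{Z}$ that is not an integer multiple of $K$. So for each quantity appearing in \eqref{eq:th:optCon} I would first confirm it has the form $q\pi/K$ with $q\in\mathbb{Z}$, and then establish $0<|q|<K$, which forces $q\not\equiv 0\pmod K$. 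I would carry out the details for the first construction \eqref{eq:th:ph_slopes}; the second construction \eqref{eq:th:ph_slopes:reciprocal} then follows verbatim under the symmetry that interchanges the receiver and transmitter roles (and $\Lr\leftrightarrow\Lt$).

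For \eqref{eq:th:ph_slopes} the differences are immediate: using the operator in~\eqref{def:operator:Delta} one gets $\Delta\phrS_{l,i}T/2=(i-l)\pi/K$ and $\Delta\phtS_{m,j}T/2=(j-m)\Lr\,\pi/K$, so both are integer multiples of $\pi/K$. For the two self-terms I would bound the integers. Since $\Lt\geq 1$ gives $\Lr\leq\Lr\Lt\leq K$, we have $1\leq i-l\leq \Lr-1\leq K-1$, so the integer $i-l$ is nonzero and below $K$, giving \eqref{eq:th:optCon:x}. Likewise, $1\leq j-m\leq \Lt-1$ yields $\Lr\leq (j-m)\Lr\leq(\Lt-1)\Lr=\Lr\Lt-\Lr\leq K-1$, so $(j-m)\Lr$ is again a nonzero integer strictly below $K$, giving \eqref{eq:th:optCon:z}. (When $\Lr=1$ or $\Lt=1$ the pairs $l<i$ or $m<j$ are empty, so the corresponding conditions are vacuous and only the surviving one must be checked, which is covered by the same bounds.)

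The step requiring genuine care—and the main obstacle—is the cross condition \eqref{eq:th:optCon:xz}, where $(\Delta\phrS_{l,i}\pm\Delta\phtS_{m,j})T/2=q\pi/K$ with $q=(i-l)\pm(j-m)\Lr$, and I must rule out that the two contributions conspire to a multiple of $K$. For the $+$ sign this is easy: $q\geq 1+\Lr>0$ and $q\leq(\Lr-1)+(\Lt-1)\Lr=\Lr\Lt-1<K$, so $0<q<K$. For the $-$ sign the decisive observation is that the $\Lr$-spacing of the transmit slopes exceeds the largest receive index gap, namely $i-l\leq\Lr-1<\Lr\leq(j-m)\Lr$, which forces $q\leq-1$; moreover $|q|=(j-m)\Lr-(i-l)\leq(\Lt-1)\Lr-1=\Lr\Lt-\Lr-1<K$, so $0<|q|<K$ as well. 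In every case $q$ is a nonzero integer with $|q|<K$, hence not a multiple of $K$, so the quantity lies in $\mathcal{X^*}$. This completes the verification of \eqref{eq:th:optCon} for \eqref{eq:th:ph_slopes}; the symmetric bounds establish it for \eqref{eq:th:ph_slopes:reciprocal}, and the optimality claim then follows from Theorem~\ref{theo:main}.
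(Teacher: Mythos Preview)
Your proof is correct and follows essentially the same route as the paper's: the paper (via Lemma~\ref{lem:LrLt:K}) also verifies \eqref{eq:th:optCon} for the construction \eqref{eq:th:ph_slopes} by writing each difference as $q\pi/K$ and bounding the integer $q$ into $\{1,\ldots,K-1\}$ using $\Lr\Lt\leq K$, then obtains \eqref{eq:th:ph_slopes:reciprocal} by the $\Lr\leftrightarrow\Lt$ symmetry. Your explicit treatment of the minus case in the cross term, noting $(j-m)\Lr>\Lr-1\geq i-l$ to force $q<0$, mirrors the paper's bound $1\leq b\Lr-a\leq(\Lt-1)\Lr-1$.
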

\begin{proof}
	See Appendix~\ref{app:theo:main} (Lemma~\ref{lem:LrLt:K}).
\end{proof}
These two phase slopes constructions are not unique. 
The construction~\eqref{eq:th:ph_slopes} yields the same receiver phase slopes that were suggested in~\cite[Theorem~1, eq. (19)]{ACN}. 
These constructions require the knowledge of the number of antennas at the receiver or the transmitter. So far, we assumed that all receiving \glspl{VU} have the same number of antennas. However, we can generalize this solution to \glspl{VU} with different number antennas.

\subsubsection{Supporting \glspl{VU} with Different Number of Antennas}
Instead of assuming that all users have same number of antennas, let us assume that the maximum number of antennas any user can have is $\Lrm$ at the receiver and $\Ltm$ at the transmitter. Moreover, we assume that all users are aware of these two parameters and they satisfy $\Ltm\Lrm\leq K$.
Following that we can use~\eqref{eq:th:optCon} and attempt to find phase slopes that works for any $\Lt \leq \Ltm$ and $\Lr\leq \Lrm $.  The following corollary give us such a solution.
\begin{corollary}\label{cor:ABN:ph_slopes:multi_users}
	The phase slopes construction given by
		\begin{align}
		\phrS_l&= l\frac{2\pi}{KT}, \quad \phtS_m=m  \Lrm \frac{2\pi}{KT},\label{eq:th:ph_slopes:multiVUs}\\
		\phrS_l&= l\Ltm\frac{2\pi}{KT}, \quad \phtS_m=m  \frac{2\pi}{KT},\label{eq:th:ph_slopes:reciprocal:multiVUs}
		\end{align}
	where $0 \leq l\leq \Lr-1$, $0\leq m\leq \Lt-1$ satisfies~\eqref{eq:th:optCon} when $\Lrm\Ltm\leq K$, and $\Lr\leq \Lrm$, $\Lt\leq \Ltm$.
\end{corollary}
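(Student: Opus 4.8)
The plan is to verify directly that the construction~\eqref{eq:th:ph_slopes:multiVUs} satisfies the three membership conditions~\eqref{eq:th:optCon:x}--\eqref{eq:th:optCon:xz} of Theorem~\ref{theo:main}, and then to obtain~\eqref{eq:th:ph_slopes:reciprocal:multiVUs} from a transmitter--receiver symmetry argument. The first observation is that, once the proposed slopes are substituted, each of the three quantities in~\eqref{eq:th:optCon} becomes an integer multiple of $\pi/K$: with $\phrS_l=l\,2\pi/(KT)$ we get $\Delta\phrS_{l,i}T/2=(i-l)\pi/K$, and with $\phtS_m=m\Lrm\,2\pi/(KT)$ we get $\Delta\phtS_{m,j}T/2=(j-m)\Lrm\,\pi/K$. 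Since a value of the form $q\pi/K$ with $q\in\mathbb{Z}$ lies in $\mathcal{X}^*$ if and only if $K\nmid q$, the whole statement collapses to a finite set of integer non-divisibility checks, where the admissible index ranges are $1\le i-l\le \Lr-1\le \Lrm-1$ and $1\le j-m\le \Lt-1\le \Ltm-1$ (this is exactly where $\Lr\le\Lrm$ and $\Lt\le\Ltm$ enter).

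Next I would record the integer coefficient $q$ for each condition and bound it strictly inside $(0,K)$. For~\eqref{eq:th:optCon:x}, $q=i-l$ obeys $1\le q\le \Lrm-1<K$, using $\Lrm\le\Lrm\Ltm\le K$, so $K\nmid q$. For~\eqref{eq:th:optCon:z}, $q=(j-m)\Lrm$ obeys $0<\Lrm\le q\le(\Ltm-1)\Lrm=\Lrm\Ltm-\Lrm\le K-\Lrm<K$, so again $K\nmid q$. The cross condition~\eqref{eq:th:optCon:xz} carries two signs, $q_\pm=(i-l)\pm(j-m)\Lrm$; for the $+$ sign one has $0<q_+\le(\Lrm-1)+(\Ltm-1)\Lrm=\Lrm\Ltm-1\le K-1<K$, hence $K\nmid q_+$.

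The main obstacle, and the only step requiring care, is the $-$ sign in~\eqref{eq:th:optCon:xz}, since $q_-$ is negative and I must bound its magnitude rather than itself. Here the positional structure of the construction is exactly what makes it work: the receiver index contributes a ``low digit'' in $\{0,\dots,\Lrm-1\}$ while the transmitter index contributes a ``high digit'' that is a multiple of $\Lrm$, so that $-q_-=(j-m)\Lrm-(i-l)$ falls in the window $1\le -q_-\le(\Ltm-1)\Lrm-1=\Lrm\Ltm-\Lrm-1\le K-\Lrm-1<K$. Thus $0<|q_-|<K$ and $K\nmid q_-$. Having verified all three conditions for~\eqref{eq:th:ph_slopes:multiVUs}, I would invoke part~3 of Theorem~\ref{theo:main} to conclude optimality whenever $\Lrm\Ltm\le K$, $\Lr\le\Lrm$, and $\Lt\le\Ltm$.

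Finally, for~\eqref{eq:th:ph_slopes:reciprocal:multiVUs} I would note that it is precisely~\eqref{eq:th:ph_slopes:multiVUs} with the roles of transmitter and receiver interchanged, i.e., $\Lrm\leftrightarrow\Ltm$ together with $\phrSV\leftrightarrow\phtSV$. Under this interchange conditions~\eqref{eq:th:optCon:x} and~\eqref{eq:th:optCon:z} map to one another, while~\eqref{eq:th:optCon:xz} is invariant because the $\pm$ absorbs the sign flip. Hence the identical bounding argument applies verbatim with $\Lrm$ and $\Ltm$ swapped, which completes the proof.
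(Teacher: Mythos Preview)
Your proof is correct and follows essentially the same route as the paper: the paper's own proof of Corollary~\ref{cor:ABN:ph_slopes:multi_users} defers to the argument used for Corollary~\ref{cor:ABN:ph_slopes}, which in turn relies on Lemma~\ref{lem:LrLt:K}, where the identical integer bounds $1\le a\le \Lr-1$, $\Lr\le b\Lr\le(\Lt-1)\Lr$, $\Lr+1\le b\Lr+a\le \Lr\Lt-1$, and $1\le b\Lr-a\le(\Lt-1)\Lr-1$ are established (with $\Lr,\Lt$ in place of your $\Lrm,\Ltm$). Your direct verification---including the careful treatment of the minus-sign case and the symmetry argument for~\eqref{eq:th:ph_slopes:reciprocal:multiVUs}---matches this exactly.
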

\begin{proof}
	This can be reached straightforwardly by substituting in~\eqref{eq:th:optCon} and using the fact that $\Lrm\Ltm\leq K$ (For exact details see proof of Corollary~\ref{cor:ABN:ph_slopes}). 
\end{proof}
Note that in such case a transmitter does not need to know the number of antennas at the receiver. However, a knowledge of the maximum number of antennas that a user can have is required.

The phase slopes at the \gls{Tx} does not depend on the \gls{AOA}, \gls{AOD} nor on the far field functions of antennas. This implies that a transmitter can use an \gls{ABN} beamforming vector with fixed phase slopes to achieve an optimal performance for any \gls{VU} including the worst one (i.e., including the user with the worst sum-\gls{SNR}). Hence, the solutions to~\eqref{def:opt:problem:ABN} solves the problem of maximizing the sum-\gls{SNR} for the worst receiving user, also when \glspl{VU} have different number of antennas $\Lr\leq\Lrm$ and different far field functions.

\subsection{Design of the Transmit Antenna Switching Network}

\subsubsection{Deriving the \gls{SNR} per Packet}
We follow the same steps as done when deriving the SNR for the \gls{ABN} scheme. Given the received signal~\eqref{eq:r:generic}, with effective channel gain $c^{\mathrm{(b)}}$~\eqref{eq:c:ASN}, and adopting the assumptions in \secR~\ref{Sec:sub:assumptions} the SNR for the $k^{\text{th}}$ packet, $k=0, 1, \cdots, K-1$ can be expressed as
\begin{align}\label{eq:snr_k:ASN}
\snrB_k= \frac{P_{\mathrm{r}}}{\Lr\sigma^2_{\mathrm{n}}}|c^{\mathrm{(b)}}(kT)|^2= \frac{P_{\mathrm{r}}}{\sigma^2_{\mathrm{n}}}|\gt_m(\phit)|^2 \frac{\Gr(\phir,\phrV,\psrV,k)}{\Lr},
\end{align}
where $m=\rem(k,\Lt)$, $\Gr(.)$ is given by~\eqref{def:Gr} and $\psrV$ is an $\Lr$-vector with elements $\psr_l$ defined in~\eqref{def:psir}.
\gls{ASN} switches antenna after each transmission. To simplify its analysis, we assume that $K/\Lt=\Kr$ is an integer, implying that each antenna element is used to transmit an equal number of $\Kr$ packets within a burst of $K$ packets. Then, the \gls{ASN} normalized sum-\gls{SNR} is given by
\begin{align}
\SsnrB&(\phir,\phit,\phrV,\psrV )=\sigma^2_{\mathrm{n}}/P_{\mathrm{r}} \sum_{k=0}^{K-1} \snrB_k \label{eq:Sb:general:definition}\\
&=\sum_{m=0}^{\Lt-1} |\gt_m(\phit)|^2 \sum_{k'=0}^{\Kr-1} \frac{\Gr(\phir,\phrV,\psrV,m+k'\Lt)}{\Lr},\label{eq:SNR:ASN}
\end{align}
	From the expression we observe that a packet is sent using the $m^{\text{th}}$ antenna, periodically, every $\Lt$ transmissions.
Using~\eqref{eq:Gr:elaborated}, the normalized sum-\gls{SNR} can be elaborated and stated as
\begin{align}
\SsnrB(\phir,\phit,\phrV,&\psrV )=
K \overline{G}(\phir,\phit) +
\JB(\phir,\phit,\phrV,\psrV ),\label{def:S:SA}
\end{align}
where $\overline{G}(\phir,\phit)$ is given by~\eqref{def:G} and 
\begin{align}
&\JB(\phir,\phit,\phrV,\psrV )=\sum_{m=0}^{\Lt-1} |\gt_m(\phit)|^2\sum_{l=0}^{\Lr-2}\sum_{i=l+1}^{\Lr-1} c'_{l,i}(\phir) \nonumber\\ &\times\sum_{k'=0}^{\Kr-1}\cos\big(\Delta\psr_{l,i}-\Delta\phr_{l,i}(m+k'\Lt)T\big),\label{def:J:SA}
\end{align}
where $c'_{l,i}(\phir)$ is defined in~\eqref{def:J:cli:dmj}.
\subsubsection{Optimization} 
For the \gls{ASN} only the receiver phase slopes are to be found. The optimization problem has a similar form to~\eqref{def:opt:problem:ABN}, and it is given by
\begin{align}\label{def:optPr:SA}
\phrSV= \arg \sup_{\phrV\in \mathbb{R}^{\Lr}} ~\inf_{ \substack{(\phit,\phir) \\\psrV \in [0,2\pi)^{\Lr}}} \SsnrB(\phir,\phit,\phrV,\psrV ).
\end{align}
Note that \gls{ASN} inherently treats all receiving \glspl{VU} the same. Thus, in spite of how many antennas and far field functions, different receiving \glspl{VU} have, the optimization problem~\eqref{def:optPr:SA} and its solutions applies to any user including the one experiencing worst sum-\gls{SNR}. 
The solution to the problem can be deduced from the following theorem.
\begin{theorem}\label{th:SA}
	Let $\SsnrB$ be as defined in~\eqref{def:S:SA}, $K/\Lt=\Kr \in \mathbb{Z}$, $\Lr>1$, and let 
	\begin{align}\label{def:Sbstar}
	\SsnrB^\star(\phir,\phit)&\triangleq \sup_{\phrV} \inf_{\psrV}\SsnrB(\phir,\phit,\phrV,\psrV).
	\end{align}
	Then, for any $(\phir,\phit)$
	\begin{enumerate}
	    \item The function is bounded as 
	    \begin{align}
	        \SsnrB^\star(\phir,\phit)\leq  K \overline{G}(\phir,\phit).\label{eq:theoremASN:claimUB}
	    \end{align}
	    \item If $\Lr\leq \Kr$ the bound is achievable, i.e.,
	     \begin{align}
	        \SsnrB^\star(\phir,\phit)= K \overline{G}(\phir,\phit),\label{eq:theoremASN:claim1}
	    \end{align}
	    with solutions that satisfy 
	\begin{align}\label{eq:th:SA:OptCon}
	\Lt \Delta\phrS_{l,i}T/2 \in \Xstb,
	\end{align} 
where $\Xstb=\{q\pi/\Kr:q\in\mathbb{Z}\}\setminus\{n\pi:n\in \mathbb{Z}\}$, and $0\le l<i\le \Lr-1$. 

\item  Assuming $|\gr_l(\phir)|>0$, $\forall l$, $|\gt_m(\phit)|>0$, $\forall m$, and $|\gt_m(\phit)|\neq C$, $\forall m$,
	then~\eqref{eq:theoremASN:claim1} $\implies$~\eqref{eq:th:SA:OptCon} and $\Lr\leq \Kr$.
	    
	    \item 	One optimal solution is given by
	\begin{align}\label{eq:th:SA:ph_slopes}
	\phrS_l&= l\frac{2\pi}{KT},\quad l=0,1,\cdots \Lr-1,~ \Lr\leq \Kr.
	\end{align}
	\end{enumerate}
\end{theorem}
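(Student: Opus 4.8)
The plan is to mirror the structure of Theorem~\ref{theo:main}, exploiting that for the \gls{ASN} the sum-\gls{SNR} splits as $\SsnrB = K\overline{G}(\phir,\phit) + \JB$ via~\eqref{def:S:SA}, with the angle-dependent but slope-independent term $K\overline{G}$ fixed. Hence the whole $\sup_{\phrV}\inf_{\psrV}$ problem reduces to controlling the channel-variation term $\JB$ of~\eqref{def:J:SA}. First I would establish the upper bound (claim~(i)). Writing $\SsnrB^\star(\phir,\phit) = K\overline{G}(\phir,\phit) + \sup_{\phrV}\inf_{\psrV}\JB$, it suffices to show $\inf_{\psrV}\JB \le 0$ for every $\phrV$. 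Since the infimum is bounded above by the average over $\psrV \in [0,2\pi)^{\Lr}$, and each summand of $\JB$ is a cosine of $\Delta\psr_{l,i}=\psr_i-\psr_l$ whose mean over $\psr_i$ is zero, the mean of $\JB$ vanishes; therefore $\inf_{\psrV}\JB \le 0$ and $\sup_{\phrV}\inf_{\psrV}\JB \le 0$, which gives~\eqref{eq:theoremASN:claimUB}.

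For achievability (claims (ii) and (iv)) the key observation is that the index $m+k'\Lt$ in~\eqref{def:J:SA} makes the inner double sum factorize. Collecting the $\psrV$-dependence, the bracket multiplying $c'_{l,i}(\phir)$ equals $\mathrm{Re}\{\mathrm{e}^{\jmath\Delta\psr_{l,i}}\,P_{l,i}Q_{l,i}\}$, where $P_{l,i}=\sum_{m=0}^{\Lt-1}|\gt_m(\phit)|^2\mathrm{e}^{-\jmath\Delta\phr_{l,i}mT}$ and $Q_{l,i}=\sum_{k'=0}^{\Kr-1}\mathrm{e}^{-\jmath\Lt T\Delta\phr_{l,i}k'}$ is a geometric sum. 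The factor $Q_{l,i}$ vanishes exactly when $\Lt T\Delta\phr_{l,i}$ is a multiple of $2\pi/\Kr$ that is not a multiple of $2\pi$, i.e. precisely when $\Lt\Delta\phr_{l,i}T/2\in\Xstb$, which is~\eqref{eq:th:SA:OptCon}. When this holds for all pairs, $\JB\equiv 0$ for every $\psrV$, so $\inf_{\psrV}\JB=0$ and the bound~\eqref{eq:theoremASN:claim1} is met. To realize~\eqref{eq:th:SA:OptCon} simultaneously for all $0\le l<i\le\Lr-1$ I would verify the explicit slopes~\eqref{eq:th:SA:ph_slopes}: they yield $\Lt\Delta\phrS_{l,i}T/2=(i-l)\pi/\Kr$, which lies in $\Xstb$ as long as $i-l$ is never a multiple of $\Kr$, i.e. as long as $\Lr-1<\Kr$, establishing sufficiency of $\Lr\le\Kr$.

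For necessity (claim (iii)), suppose the bound is attained at some $\phrV$. Then $\inf_{\psrV}\JB=0$; combined with the zero-mean property, $\JB\ge 0$ with zero mean forces $\JB\equiv 0$ on $[0,2\pi)^{\Lr}$. Viewing $\JB$ as a trigonometric polynomial in $\psrV$, the characters $\mathrm{e}^{\jmath(\psr_i-\psr_l)}$ attached to distinct pairs $(l,i)$ are linearly independent, so each coefficient $c'_{l,i}(\phir)\,P_{l,i}Q_{l,i}$ must vanish; with $c'_{l,i}(\phir)>0$ (from $|\gr_l(\phir)|>0$) this gives $P_{l,i}Q_{l,i}=0$ for every pair. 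The non-degeneracy hypothesis on the transmit gains is then invoked to exclude $P_{l,i}=0$, forcing $Q_{l,i}=0$, i.e.~\eqref{eq:th:SA:OptCon}. Finally $\Lr\le\Kr$ follows by counting: writing $\Lt T\Delta\phr_{l,i}=(2\pi/\Kr)q_{l,i}$ with $q_{l,i}\not\equiv 0\pmod{\Kr}$, the cocycle identity $q_{l,i}=q_{0,i}-q_{0,l}$ shows the residues $q_{0,0},\dots,q_{0,\Lr-1}$ are pairwise distinct modulo $\Kr$; since they live in $\mathbb{Z}/\Kr\mathbb{Z}$, we must have $\Lr\le\Kr$.

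The main obstacle is the necessity step that rules out the spurious factor $P_{l,i}=0$: unlike the geometric factor $Q_{l,i}$, the weighted sum $P_{l,i}=\sum_{m}|\gt_m(\phit)|^2\mathrm{e}^{-\jmath\Delta\phr_{l,i}mT}$ is a genuine trigonometric polynomial whose vanishing depends on the particular transmit gains, so a non-degeneracy assumption on $\{|\gt_m(\phit)|\}$ is indispensable---this is exactly the role of the hypothesis $|\gt_m(\phit)|\neq C$. Pinning down the precise condition (and in particular handling $\Lt>2$, where unequal positive weights can still produce a vanishing $P_{l,i}$) is the delicate part; the averaging argument for~(i), the factorization for~(ii) and~(iv), and the residue-counting for $\Lr\le\Kr$ are otherwise routine.
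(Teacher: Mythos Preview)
Your proposal follows essentially the same strategy as the paper: both reduce the problem to controlling $\JB$, use an averaging argument for the upper bound~(i), show $\JB\equiv 0$ under~\eqref{eq:th:SA:OptCon} for achievability~(ii),(iv), and argue necessity~(iii) via ``zero mean plus nonnegative implies identically zero''. Your explicit product factorization $P_{l,i}Q_{l,i}$ is equivalent to---and arguably more transparent than---the paper's recasting of $\JB$ as a linear combination of $f_{\Kr}$-functions; both routes yield $c'_{l,i}P_{l,i}Q_{l,i}=0$ for every pair $(l,i)$. Your cocycle/residue argument for $\Lr\le\Kr$ is a clean variant of the counting used in Lemma~\ref{lem:LrLt:K}.

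You are right to isolate ``$P_{l,i}Q_{l,i}=0\Rightarrow Q_{l,i}=0$'' as the crux of claim~(iii). For $\Lt=2$ the stated hypothesis suffices, since $e_0+e_1\mathrm{e}^{-\jmath\beta}=0$ forces $e_0=e_1$. For $\Lt\ge 3$, however, your concern is well founded: with $e_0=1$, $e_1=2$, $e_2=1$ (positive and not all equal) one has $P_{l,i}=(1+\mathrm{e}^{-\jmath\beta})^2=0$ at $\beta=\Delta\phr_{l,i}T=\pi$, hence $\JB\equiv 0$, yet $\Lt\,\Delta\phr_{l,i}T/2=3\pi/2\notin\Xstb$ when $\Kr=3$. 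The paper's proof of this implication is only an outline (``similar steps as followed in Lemma~\ref{lem:all0s}'') and does not explicitly handle this degeneracy either; your flagging of the difficulty is therefore accurate rather than a shortcoming of your plan relative to the paper's argument.
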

\begin{proof}
	See Appendix~\ref{app:theo:SA}.
\end{proof}
Similarly to the \gls{ABN} case, the solutions in~\eqref{eq:th:SA:OptCon} are independent of the signal directions $(\phir,\phit)$, hence~\eqref{eq:th:SA:OptCon} solves~\eqref{def:optPr:SA} when $\Lr\leq \Kr=K/\Lt$.  
Moreover, under the assumptions indicated in Theorem~\ref{th:SA}~\textit{(iii)},~\eqref{eq:th:SA:OptCon} is sufficient and necessary optimality condition for~\eqref{def:optPr:SA}. Note that, $|\gt_m(\phit)|= C$, $\forall m$, is a special case where the antenna system is equivalent to $1\times \Lr$ system, and thus it is covered by Theorem~\ref{theo:main}. In particular, for $|\gt_m(\phit)|= C>0$, $\forall m$, and $|\gr_l(\phir)|>0$, $\forall l$,~\eqref{eq:theoremASN:claim1} $\implies$~\eqref{eq:th:optCon:x} and $\Lr\leq K$. 
We observe that, the optimality condition related to the number of antennas $\Lr\leq \Kr=K/\Lt$ is equivalent to what have been obtained for the \gls{ABN} scheme ($\Lr\Lt\leq K$). The phase slopes optimality condition~\eqref{eq:th:SA:OptCon} is, on the other hand, simpler than~\eqref{eq:th:optCon}.

We can draw two main conclusions from the theorem. First, the performance achieved using an \gls{ASN} is identical to that of an \gls{ABN}, $\SsnrB^\star(\phir,\phit)=\SsnrA^\star(\phir,\phit)$. Hence, the main difference between the two is the implementation. Second, the suggested optimal phase slopes construction~\eqref{eq:th:SA:ph_slopes} at the receiver side when an \gls{ASN} is used, coincides with the phase slopes construction~\eqref{eq:th:ph_slopes} (or ~\eqref{eq:th:ph_slopes:multiVUs}), which is used to combine the antenna signals when an \gls{ABN} is used at the transmitter. Hence, we can design the receiver \gls{ACN} such that it combines optimally both, signals arriving from \gls{ABN} and \gls{ASN} transmitters. 
Such a receiver does not require any knowledge of the number of antennas used at the transmitter side.

An important aspect of Theorem~\ref{th:SA} is the assumption $\Kr=K/\Lt \in \mathbb{Z}$, which can be easily met if all \gls{VU} use the same number of transmit antennas, however, in a context where \glspl{VU} are equipped with different number of transmit antennas it may not be easily met. In the following, we would like to see the implications of the assumption in such context.

\subsubsection{\glspl{VU} with Different Number of Transmit Antennas}\label{section:sub:ASN:MA}

Assume that the maximum number of antennas that can be used by any \gls{VU} to transmit and receive are $\Ltm$ and $\Lrm$, respectively. Further, assume that  $K/\Ltm\in \mathbb{Z}$, and $\Lrm\leq K/\Ltm$. 
	Receiving \glspl{VU} employ a phase slope vector $\phrSV$ that satisfies~\eqref{eq:th:SA:ph_slopes}. This selection of phase slopes is optimal for any \gls{Tx} with $\Lt$ satisfying $K/\Lt \in \mathbb{Z}$ (including $\Lt=\Ltm$). The optimal sum-\gls{SNR} is given by~\eqref{eq:theoremASN:claim1}. For transmitting \glspl{VU} with $K/\Lt \notin\mathbb{Z}$, the phase slopes are not known to achieve the optimal sum-\gls{SNR}. The performance is governed in this case by 
	\begin{align}\label{eq:Sstar:vir:counter}
	\inf_{\psrV \in [0,2\pi)^{\Lr}} \SsnrB(\phir,\phit,\phrSV,\psrV),
	\end{align}
	where $\SsnrB$ is as defined in~\eqref{eq:Sb:general:definition} (Note that~\eqref{eq:SNR:ASN}---~\eqref{def:J:SA} hold when $K/\Lt \in \mathbb{Z}$). No analytical solution to~\eqref{eq:Sstar:vir:counter} is available. However, a numerical characterization of the expression and a comparison with the optimal sum-\gls{SNR} attained by \gls{ABN} will be shown in the numerical results section.
\subsection {\gls{ABN} and \gls{ASN} Transceivers }
In spite of the advantage of \gls{ABN} over \gls{ASN} in supporting transmission with different number of antennas, we have shown that the two structures yield the same performance in the general case. However, each structure puts different requirements on the transceiver. In particular, an \gls{ABN}-\gls{ACN} transceiver needs to support two different sets of phase shifters slopes, one tuned for transmission ($\{\phtS_m\}$) and another for reception ($\{\phrS_l\}$). On the other hand, an \gls{ASN}-\gls{ACN} transceiver has to  support one set of phase shifters slopes  ($\{\phrS_l\}$), and be capable of transmitting with $\Lt$ times higher power than the power transmitted on each antenna branch using \gls{ABN}.
 From these requirements stem the main implementation trade-offs between the two structures, and depending on how transceivers with such requirements are implemented, \gls{ABN} and \gls{ASN} may differ in cost and complexity.  

\subsection{Alamouti Scheme Performance}
In the following, we would like to quantify the sum-\gls{SNR} of an Alamouti diversity scheme within context of this work.
Assume that the receiver uses an \gls{ACN}, while the transmitter with $\Lt=2$ applies an Alamouti encoding in space-time domain, in accordance with an \gls{OFDM} signal (see e.g.,~\cite{AlamoutiSTBC}, \cite[Ch.~22]{STBC_IEEE})\footnote{Alamouti can be applied to OFDM in space-frequency domain as well (see e.g.,~\cite{AlamoutiSFBC}), and it is found to yield the same result.}.
Assume that packets are composed of $N_{\textrm{sym}}$ \gls{OFDM} symbols, $\boldsymbol{s}_i$, each composed of $N$ subcarriers. 
The Alamouti encoding matrix is given by 
\begin{align}
\begin{bmatrix}
	\boldsymbol{s}_0  &\boldsymbol{s}_1\\
-\boldsymbol{s}_1^*  &\boldsymbol{s}_0^*
\end{bmatrix}\begin{array}{ll}
\rightarrow &\text{space}\\
\downarrow &\text{time}
\end{array}
\end{align}
where
$\boldsymbol{s}_0$, $\boldsymbol{s}_1$ are two consecutive \gls{OFDM} symbols. After applying the space-time mapping, symbols are converted to time domain, appended a cyclic prefix, then sent over the channel. Given the channel gain $h_{l,m}$~\eqref{def:channel:v1}, let $\boldsymbol{H}_m=[h_{0,m}, h_{1,m}, \cdots, h_{\Lr-1,m}]^\textsf{T}/a(t)$ denotes the $m^{\text{th}}$ column of $\boldsymbol{H}$, where $a(t)=|a(t)|\mathrm{e}^{-\jmath 2\pi f_{\text{c}} \tau_{0}(t)}$, then the received signal after \gls{ACN} combining with $\boldsymbol{w}$~\eqref{eq:w:ACN}, can be modeled as
\begin{align}\label{eq:r:Alamouti}
	r(t)=\frac{1}{\sqrt{2}}a(t)\boldsymbol{w}^{\textsf{H}}(\bar{x}_0(t) \boldsymbol{H}_0+\bar{x}_1(t) \boldsymbol{H}_1)+\boldsymbol{w}^{\textsf{H}}\boldsymbol{n},
\end{align}
where $\bar{x}_0(t)$ and $\bar{x}_1(t)$ 
are the transmitted Alamouti encoded signals delayed by $\tau_0(t)$, 
 and $\boldsymbol{w}^{\textsf{H}}\boldsymbol{n}$ is a zero-mean white Gaussian noise with variance $\mathbb{E}\{|\boldsymbol{w}^{\textsf{H}} \boldsymbol{n}|^2 \}= \Lr\sigma^2_{\mathrm{n}}$.

Given the assumptions in \secR~\ref{Sec:sub:assumptions} we can approximate $\bar{c}_m=\boldsymbol{w}^{\textsf{H}}\boldsymbol{H}_m/\sqrt{2}$, $m\in\{0,1\}$, for the $k^\text{th}$ packet, as
\begin{align}
    \frac{1}{\sqrt{2}}\gt_m(\phit)\mathrm{e}^{\jmath \omt_m} \sum_{l=0}^{\Lr-1} |\gr_l(\phir)| \mathrm{e}^{-\jmath(\psr_l-\phr_l kT)}.
\end{align}
where $\psr_l$ is given by~\eqref{def:psir}.
To decode the message, the receiver uses two consecutive \gls{OFDM} symbols (after discarding the cyclic prefix and conversion to the frequency domain), as follows
\begin{align}
	\boldsymbol{y}_{0}&=\bar{c}_{0}\boldsymbol{D}_{\textrm{a}}\boldsymbol{s}_{0} +\bar{c}_{1} \boldsymbol{D}_{\textrm{a}}\boldsymbol{s}_{1}+\boldsymbol{z}_0,\\
	\boldsymbol{y}_{1}&=-\bar{c}_{0}\boldsymbol{D}_{\textrm{a}}\boldsymbol{s}_{1}^*+\bar{c}_{1}\boldsymbol{D}_{\textrm{a}}\boldsymbol{s}_{0}^*+\boldsymbol{z}_{1},
\end{align}
where $\boldsymbol{D}_{\textrm{a}}=\bar{\boldsymbol{D}}_{\textrm{a}}^{(i)}$, is a diagonal matrix carrying the frequency response of the sampled finite channel impulse response associated with $a(t)$, at symbol duration $i$, and it is assumed constant over two consecutive \gls{OFDM} symbol durations (follows from the basic assumption of Alamouti scheme). The vectors $\boldsymbol{z}_0$, $\boldsymbol{z}_1$, are zero-mean independent white Gaussian noises with variance $\mathbb{E}\{|\boldsymbol{z}_i|^2\}=\Lr\sigma^2_{\mathrm{n}}/N_{\textrm{sym}}$, $i=1,2$.
Solving the equations following $(\bar{c}_{0}^*\boldsymbol{D}_{\textrm{a}}^*\boldsymbol{y}_{0}+\bar{c}_{1}\boldsymbol{D}_{\textrm{a}}	\boldsymbol{y}_{1}^*)$ and $(\bar{c}_{1}^*\boldsymbol{D}_{\textrm{a}}^*\boldsymbol{y}_{0}-\bar{c}_{0}\boldsymbol{D}_{\textrm{a}}	\boldsymbol{y}_{1}^*)$, and employing~\eqref{eq:r:Alamouti} we can deduce the normalized \gls{SNR} of the $k^{\text{th}}$ packet 
\begin{align}\label{eq:Al:SNR}
\frac{\sigma^2_{\mathrm{n}}}{P_{\mathrm{r}}}\gamma_{k}^{\text{(Al)}}=\frac{\Gr(\phir,\phrV,\psrV,k)}{\Lr}\sum_{m=0}^{\Lt-1} \frac{|\gt_m(\phit)|^2}{\Lt},
\end{align}
where $\Gr(\cdot)$ is given by~\eqref{def:Gr}. The average signal power is given in this case by $P_{\mathrm{r}}=\mathbb{E}\{|a(t)\bar{x}_0(t)|^2\}=\mathbb{E}\{|a(t)\bar{x}_1(t)|^2\}$=$\sum_{i=0}^{N_{\textrm{sym}}-1}\mathbb{E}\{|\boldsymbol{D}_{\textrm{a}}\boldsymbol{s}_{i}|^2\}$. 

We can observe from~\eqref{eq:Al:SNR} that optimizing the sum-\gls{SNR} of the Alamouti scheme is equivalent to optimizing $\sum_{k=0}^{K-1}\Gr(\cdot)$ which is the sum-\gls{SNR} of an \gls{ACN} with $\Lr$ antennas. By Theorem~\ref{theo:main} we can conclude that if $\Lr\leq K$ and phase slopes are chosen to satisfy~\eqref{eq:th:optCon:x}, we can achieve
\begin{align}
	\frac{\sigma^2_{\mathrm{n}}}{P_{\mathrm{r}}}\sum_{k=0}^{K-1}\gamma_{k}^{\text{(Al)}}=\SsnrA^\star(\phir,\phit)=\SsnrB^\star(\phir,\phit).
\end{align}
Hence, in this scenario where the channel is modeled as a slowly varying dominant path, implying that the only available spatial diversity of the channel is due to the different far field functions of antennas (the propagation environment has spatial
diversity of order one), Alamouti does not achieve enhanced performance compared to the low-cost, low-complexity \gls{ABN} and \gls{ASN} schemes.

\section{\gls{MRC} Enhancement}
The overall performance for \glspl{VU} can be improved by employing an \gls{MRC} digital combining stage after the analog combining at the receiver. In other words, we use a hybrid combiner at the receiver. We follow a sub-connected configuration as in~\cite{HC} such that antennas are divided to subgroups of $\Lrp$ elements that are combined in analog domain using an \gls{ACN} then fed to a digital port. We would like to show that the solutions provided by Theorem~\ref{theo:main} and Theorem~\ref{th:SA} are still optimal. 
First, after the \gls{ACN}, the signal at port $p$ of a given \gls{VU} can be modeled following~\eqref{eq:r:generic} as
\begin{align}
r_p(t)=&a(t) x(t) c_p(kT)+n_p(t),
\end{align}
where $kT\leq t\leq kT+T_{\textrm{m}}$, and the approximation $c_p(t)\approx c_p(kT)$ follows from the assumptions in \secR~\ref{Sec:sub:assumptions}.
Employing \gls{MRC} with coefficients $c^{*}_p(kT)/\Lrp$~\cite{MRC2002}, and since the noise is uncorrelated between the ports, the overall \gls{SNR} per packet can be expressed as
\begin{align}
\gamma^{\mathrm{(d)}}_k= \sum_{p=0}^{P-1}\frac{P_{\mathrm{r}}}{\Lrp\sigma^2_{\mathrm{n}}}|c_p(kT)|^2.
\end{align}

Given that we are using an \gls{ABN}, the effective channel gain for the $k^{\text{th}}$ packet is modeled by~\eqref{eq:ca}, where $\gr$, $\psr$, $\phr$ are sub-indexed with $(l,p)$ instead of $l$, and $\Lr$ is substituted by $\Lrp$. This can be expressed explicitly as
\begin{align}
c_p^{(\mathrm{a})}(kT)&=\frac{1}{\sqrt{\Lt}} \sum_{m=0}^{\Lt-1}|\gt_m(\phit)|\mathrm{e}^{-\jmath(\pst_m-\pht_mkT)}\nonumber \\
&\times\sum_{l=0}^{\Lrp-1} |\gr_{l,p}(\phir)|\mathrm{e}^{-\jmath(\psr_{l,p}-\phr_{l,p}kT)}, 
\end{align}
where $\gr_{l,p}$, $\phr_{l,p}$, $\psr_{l,p}$ are, respectively, the far field function, phase slope and effective channel phase associated with $l^{\text{th}}$ receive antenna element connected to port $p$. 
We let $\psrV_p$ and $\phrV_p$ be vectors with $\Lrp$ elements corresponding to $\psr_{l,p}$ and $\phr_{l,p}$, respectively.
Then, the normalized sum-\gls{SNR} can be expressed as
\begin{align}
\SsnrD&(\phir,\phit,\phrV,\phtV,\psrV,\pstV)=\sigma^2_{\mathrm{n}}/P_{\mathrm{r}}\sum_{k=0}^{K-1}\gamma^{\mathrm{(d)}}_k\nonumber \\
& =\sum_{p=0}^{P-1} \sum_{k=0}^{K-1}\frac{G_{\text{r},p}(\phir, \phrVp,\psrVp,k)}{\Lrp} \frac{\Gt (\phit, \phtV,\pstV,k)}{\Lt} \nonumber\\
&=\sum_{p=0}^{P-1} \SsnrAp(\phir,\phit,\phrVp,\phtV,\psrVp,\pstV),\label{def:SsnrD}
\end{align}
where $\phrV=[\phrV_0, \phrV_1, \cdots, \phrV_{P-1}]$, $\psrV=[\psrV_0, \psrV_1, \cdots, \psrV_{P-1}]$ and $G_{\text{r},p}(\cdot)$ is defined following~\eqref{def:Gr}, with $\gr_l$ and $\Lr$ replaced by $\gr_{l,p}$ and $\Lrp$, respectively.
The term $\SsnrAp(\cdot)$ is the same as~\eqref{def:S:GplusJ},
\begin{align}
    \SsnrAp(\phir,\phit,\phrVp,\phtV,&\psrVp,\pstV)=K \overline{G}_p(\phir,\phit)\nonumber \\ &+\JAp(\phir,\phit,\phrVp,\phtV,\psrVp,\pstV),\label{def:SsnrAp}
\end{align}
where $\overline{G}_p$ and $\JAp$ are given by~\eqref{def:G} and~\eqref{def:J:original}, respectively, with $\Lr$ replaced by $\Lrp$, and $\gr_l$ by $\gr_{l,p}$ (a sub-index $p$ is added to $c_{l,i},c_{l,i}',d_{m,j},d_{m,j}'$ in~\eqref{def:J:cli:dmj}).

As done earlier in \secR~\ref{sec:design}, we assume that all receiving \glspl{VU} have the same number of antennas, the same far field functions, and additionally, we assume that they have same number of ports. That allow us to define the worst user in the system based on the worst-case \gls{AOA}, \gls{AOD}. Following that,
Theorem~\ref{theo:main} already give us the solutions for any $(\phir,\phit)$ to the per-port subproblems
\begin{align}
(\phrSV_p,\phtSV)=\arg \sup_{(\phrVp,\phtV)} \inf_{(\psrVp,\pstV )} \SsnrAp(\phir,\phit,\phrVp,\phtV,\psrVp,\pstV ),
\end{align}
where,  $\phrVp \in \mathbb{R}^{\Lrp}$, $\phtV \in \mathbb{R}^{\Lt}$, $
\psrVp\in [0,2\pi)^{  \Lrp }$, $\pstV \in [0,2\pi)^{\Lt}$.
The optimal sum-\gls{SNR} per sub-group of antennas is given by 
\begin{align}
\SsnrAp^{\star}(\phir,\phit)&=K \overline{G}_p(\phir,\phit) \nonumber\\
&=K\sum_{l=0}^{\Lrp-1}\frac{|\gr_{l,p}(\phir)|^2}{\Lrp} \sum_{m=0}^{\Lt-1}\frac{|\gt_m(\phit)|^2}{\Lt}.\label{def:Gp}
\end{align}
To deduce the optimum of the overall problem we can make use of the following lemma.
\begin{lemma}\label{lem:Multiport:bound}
	Let $\SsnrD$ be given by~\eqref{def:SsnrD}, then 
	\begin{align}
	\sup_{(\phrV,\phtV)} \inf_{(\psrV,\pstV )}& \SsnrD(\phir,\phit,\phrV,\phtV,\psrV,\pstV )\leq 
	K \sum_{p=0}^{P-1}\overline{G}_p(\phir,\phit).\label{eq:UB:Mp}
	\end{align}
	where $\overline{G}_p(\phir,\phit)$ is given by~\eqref{def:Gp}.
\end{lemma}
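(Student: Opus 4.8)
The plan is to reduce the joint saddle value to an elementary averaging argument, paralleling the upper bound of Theorem~\ref{theo:main}~(i) but now applied to the sum over ports. First I would peel off the part of $\SsnrD$ that is independent of both the phase slopes and the effective channel phases: summing~\eqref{def:SsnrAp} over the ports and recalling~\eqref{def:SsnrD} gives $\SsnrD(\phir,\phit,\phrV,\phtV,\psrV,\pstV)=K\sum_{p=0}^{P-1}\overline{G}_p(\phir,\phit)+\sum_{p=0}^{P-1}\JAp(\phir,\phit,\phrVp,\phtV,\psrVp,\pstV)$, where by~\eqref{def:Gp} the first term depends only on the fixed directions $(\phir,\phit)$. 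Since this leading term is constant with respect to every optimization variable, the lemma is equivalent to showing that for each fixed slope pair $(\phrV,\phtV)$ the adversary can drive the residual to zero or below, i.e. $\inf_{(\psrV,\pstV)}\sum_{p=0}^{P-1}\JAp\le 0$. Once this holds, $\inf_{(\psrV,\pstV)}\SsnrD\le K\sum_p\overline{G}_p$ for every $(\phrV,\phtV)$, and taking the supremum over $(\phrV,\phtV)$ of a quantity bounded by the slope-independent constant $K\sum_p\overline{G}_p$ yields~\eqref{eq:UB:Mp}.

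To establish $\inf_{(\psrV,\pstV)}\sum_p\JAp\le 0$ I would bound the infimum by the mean of the integrand, treating all entries of $(\psrV,\pstV)\in[0,2\pi)^{\sum_p\Lrp}\times[0,2\pi)^{\Lt}$ as independent uniform random variables. By the structure of $\JAp$ inherited from~\eqref{def:J:original}, each summand is, up to a coefficient that is fixed once $(\phir,\phit)$ is fixed (the magnitudes in~\eqref{def:J:cli:dmj}), one of three types: a receive-only cosine $\cos(\psr_{i,p}-\psr_{l,p}-\Delta\phr_{l,i}kT)$, a transmit-only cosine $\cos(\pst_{j}-\pst_{m}-\Delta\pht_{m,j}kT)$, or a product of one of each. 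Using $\mathbb{E}[\cos(\psr_{i,p}-\psr_{l,p}-\theta)]=0$ for independent uniform phases with $l\ne i$, together with the analogous transmit-side identity, every receive-only term and every transmit-only term has zero mean; each cross term factors, by independence of the receive and transmit phases, into a product of two zero-mean factors and therefore also has zero mean. Summing over $l,i,m,j,k$ and over $p$ gives $\mathbb{E}\big[\sum_p\JAp\big]=0$, and since the infimum of a function never exceeds its average against a probability measure, $\inf_{(\psrV,\pstV)}\sum_p\JAp\le 0$.

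The conceptual obstacle the argument must clear is that the transmit-side phases $\pstV$ are \emph{shared} across all ports, so one cannot simply add the per-port bounds of Theorem~\ref{theo:main}: the worst-case $\pstV$ for one port need not be worst for another, and $\sup$ and $\inf$ do not commute with the port sum. The averaging argument circumvents this coupling, because randomizing the independent per-port receive phases $\psrVp$ already annihilates every receive-only and every cross term regardless of $\pstV$, while randomizing the single shared $\pstV$ simultaneously annihilates the remaining transmit-only terms of all ports at once. I expect the only delicate bookkeeping to be verifying that the coefficients of~\eqref{def:J:cli:dmj} are genuinely constant under the phase randomization (they depend only on $(\phir,\phit)$ and the slopes, not on $\psrV,\pstV$), so that the expectation passes through the coefficients and the zero-mean property of the cosine terms controls the whole sum.
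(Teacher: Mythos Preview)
Your proposal is correct and is essentially the same argument the paper gives in Appendix~\ref{app:multiport}: decompose $\SsnrD$ into $K\sum_p\overline{G}_p$ plus $\sum_p\JAp$, show that the integral (equivalently, the uniform average) of $\sum_p\JAp$ over $(\psrV,\pstV)\in[0,2\pi)^{\Lr+\Lt}$ vanishes by the zero-mean property of the cosine terms (the paper invokes Lemma~\ref{lem:inf},~\eqref{eq:lem:inf:AJeq:zero} per port and then linearity of the integral), and conclude via infimum~$\le$~average. Your discussion of the shared $\pstV$ coupling is exactly the point: the paper handles it implicitly by integrating the \emph{sum} $\sum_p\JAp$ jointly over all of $(\psrV,\pstV)$ rather than bounding each port separately.
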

\begin{proof}
	See Appendix~\ref{app:multiport}.
\end{proof}
The right-hand side of~\eqref{eq:UB:Mp} is equal to $\sum_{p=0}^{P-1} \SsnrAp^{\star}(\phir,\phit)$, which means that the solutions to the subproblems do satisfy the bound, and therefore are optimal for the overall problem.  
For that to hold, the optimality conditions $\Lt \Lrp\leq K$ and~\eqref{eq:th:optCon} have to be satisfied for all ports $p$. Given that we design the phase slopes according to a certain construction, e.g.,~\eqref{eq:th:ph_slopes}, then we can start with choosing $\phtSV$ and $\phrSV_p$ corresponding to the largest $\Lrp$, then the remaining $\phrSV_{p'}$, $p'\neq p$ can be cloned on the already designed phase slope vector, and all optimality conditions will be satisfied. 

Similarly to what has been found earlier, $\phtSV$ is independent of the far field function of the antennas, and it can be designed according to the maximum number of antennas a \gls{VU} can have $\Lrm$. 
For users with more than one digital port, the condition on the maximum number of antennas is given by $ \Lrp\leq \Lrm$, $\forall p$.
Therefore, a transmitter can use the same \gls{ABN} beamforming vector to achieve an optimal performance for any user (in spite of the number of antennas, ports, and far field functions employed), including the worst one.

Due to limitation of space we will not go through detailed analysis of the \gls{ASN}. However, we point out that since the optimization parameters are restricted to $\phrVp$ and $\psrVp$ ---which are independent from one port to another---by solving the subproblems per port using Theorem~\ref{th:SA}, we already solve the overall problem and achieve the same optimal performance given by
\begin{align}
\SsnrD^{\star}(\phir,\phit)=\sum_{p=0}^{P-1}S_{\text{d},p}^{\star}(\phir,\phit)=K \sum_{p=0}^{P-1}\overline{G}_p(\phir,\phit).
\end{align}
Note that for a system with $\Lrp=\Lr/P$, $\forall p$ we get, 
\begin{align}\label{eq:Sd:star}
\SsnrD^{\star}(\phir,\phit)=PK \overline{G}(\phir,\phit).
\end{align}
Hence, the digital processing stage yields $10\log_{10} (P)~$dB higher gain for any $(\phir,\phit)$ compared to $\SsnrA^\star(\phir,\phit)$. 

\section{Numerical Results}
In this section, we visualise the performance of \gls{ABN}/\gls{ASN} based on some examples of antenna radiation patterns that are shown in \fig~\ref{fig:antennas}. There are two sector antennas $A2$. These are back-to-back patch antennas designed by Smarteq\footnote{Smarteq Wireless AB is a Swedish industrial partner specialized in developing antenna solutions for vehicle industry among others.} for vehicular applications. Moreover, there is an ideal $A0$ and a non-ideal, synthetic, $A1$ omnidirectional antennas. All antenna types have the same average power gain in azimuth plan.

Since the optimal phase slopes of \gls{ABN} were found to be the same for any \gls{VU}, including the worst one, and since \gls{ASN} strategy treats all users the same, we quantify the performance of the two schemes in this section, assuming one transmitting and one receiving, reference \glspl{VU}.
Performance is assessed according to the sum-\gls{SNR} of a burst of $K$ consecutive \gls{CAM} packets, with focus on the worst-case \gls{AOA}, \gls{AOD}, such that we can characterize the robustness of \gls{V2V} cooperative communication against unfavorable angles.
	\begin{figure}[]
	\centering
	\includegraphics[width=0.9\columnwidth]{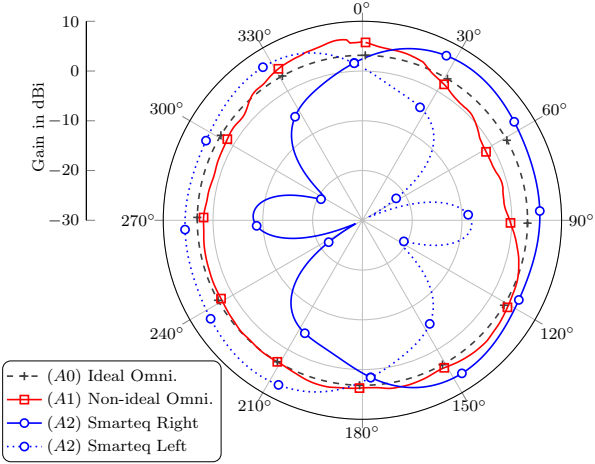}
	\caption{Test antennas radiation patterns, $|g(\phi)|^2$, in dBi. All antennas have the same average power gain in azimuth plan. (Omni. i.e., Omnidirectional).}
	\label{fig:antennas}
\end{figure}
\subsection{Sum-\gls{SNR} Achieved Using \gls{ABN}/\gls{ASN}}
	Consider the use of $A2$ sector antennas (Fig~\ref{fig:antennas}). Employing two $A2$ modules pointing $180~\deg$ apart as shown in the figure, we can enable full omnidirectional coverage. An \gls{ABN}/\gls{ASN} allow us to do that using a single transmit digital port. 
	Note that, with a single digital port, and in absence of \gls{ABN}/\gls{ASN}, the use of one $A2$ module would yield very low performance at the direction where the antenna has low gain.
	Now, let us consider a communication link with $\Lt=2$, $\Lr=2$, and a burst of $K=4\geq \Lr\Lt$ consecutive \gls{CAM} packets. 
	We assume that both \gls{Tx} and \gls{Rx} are equipped with $A2$ antennas. 
	 We plot in \fig~\ref{Fig:comparison} the CDF of the sum-\gls{SNR} of $2\times2$ \gls{ABN}/\gls{ASN}-\gls{ACN} system (with optimal phase slopes) for uniform \gls{AOA}, \gls{AOD}. 
	 We note that the sum-\gls{SNR} is normalized with respect to $P_{\mathrm{r}}/\sigma^2_{\mathrm{n}}$, following~\eqref{def:S:GplusJ} and~\eqref{eq:Sb:general:definition}, and that holds throughout this section.
	 As indicated in Remark~\ref{remark:theorem:main} the sum-\gls{SNR} CDF is the same for any $\psrV \in [0,2\pi)^{\Lr}, \pstV \in [0,2\pi)^{\Lt} $, and it is identical for both \gls{ABN} and \gls{ASN} which coincides with our analytical result, $\SsnrA^\star(\phir,\phit)=\SsnrB^\star(\phir,\phit)$. 
	 We recall that the optimal sum-\gls{SNR} is proportional to the equivalent radiation pattern $\overline{G}(\phir,\phit)$, defined in~\eqref{def:G}. 
	 To observe how this is characterized at the transmitter side we visualize in \fig~\ref{fig:Geq} the term $\sum_{m=0}^{\Lt-1}|\gt_m(\phit)|^2/\Lt$, which can be seen as the transmitter side equivalent pattern. We see that it has a nearly ideal omnidirectional coverage. 
	 The receiver equivalent pattern $\sum_{l=0}^{\Lr-1}|\gr_l(\phir)|^2/\Lr$, which is omitted from the figure, has the same characteristics. 
	 From this example we see that \gls{ABN}/\gls{ASN} yields an equivalent radiation pattern with improved omnidirectional characteristics (that implies improved robustness against unfavorable \gls{AOA}, \gls{AOD}). 
	 
     We mentioned earlier that due to distortions caused by many factors, including vehicle body, a practical omnidirectional pattern is far from ideal. 
	 As an example of how a non-ideal omnidirectional antenna performs in comparison to an ideal one, we see in \fig~\ref{Fig:comparison} that $1\times1$ $A1$ (at both \gls{Tx} and \gls{Rx}) system has much lower worst-case performance than $1\times 1$ $A0$, despite that both antenna types have the same average power gain in azimuth plane.
	 The \gls{ABN}/\gls{ASN} can be used to improve the performance of a system based on such non-ideal antennas as well. In particular, an improvement of around $4.5~$dB in worst-case sum-\gls{SNR} can be achieved using $2\times 2~ A1 $ \gls{ABN}/\gls{ASN} system compared to $1\times 1~A1$ system. This entails that \gls{ABN}/\gls{ASN} results in improved robustness of the system, which is also evident from the improved omnidirectional characteristics of the resultant equivalent radiation pattern at the \gls{Tx} side, shown in \fig~\ref{fig:Geq}.
	 \begin{figure}[]
 	\centering
 		\includegraphics[width=\columnwidth]{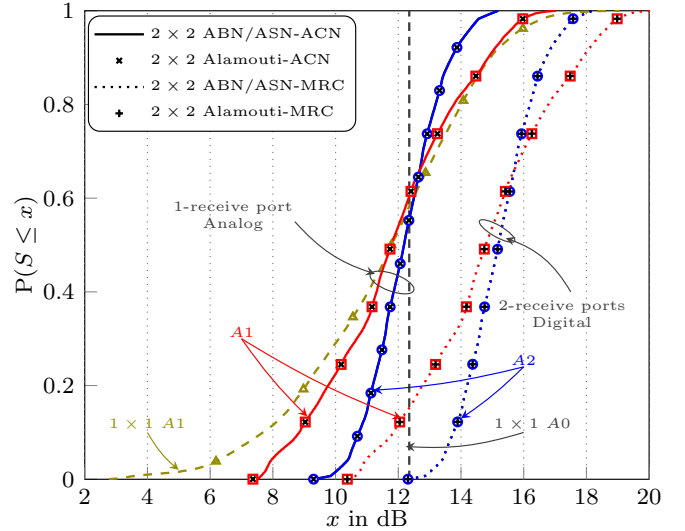}
 	\caption{CDF corresponding to the normalized sum-\gls{SNR} of $K=4$ packets, for different antennas and signal processing choices. The CDFs are computed for uniform \gls{AOA}, \gls{AOD}.}
 	\label{Fig:comparison}
 \end{figure}
	\begin{figure}[]
	\centering
	\includegraphics[width=0.85\columnwidth]{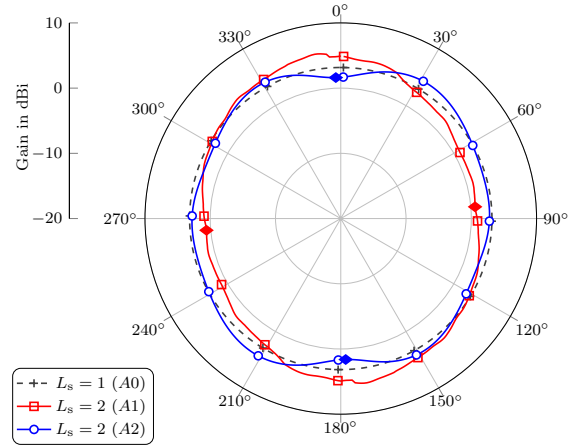}
	\caption{Equivalent radiation pattern realized at the transmitter side using \gls{ABN}/\gls{ASN} for $2$ elements of either $A1$ or $A2$ antennas. The elements are pointing $180~\deg$ apart (i.e., $\gt_0(\phit)=\gt_1(\phit-180)$). Worst-case \glspl{AOD} are marked with diamonds.}
	\label{fig:Geq}
\end{figure}  
		
Given access to more than one digital receive port and \gls{MRC} processing, the \gls{ABN}/\gls{ASN} sum-\gls{SNR} can be enhanced by $3~$dB for all \gls{AOA}, \gls{AOD}. This can be seen from the $3~$dB shifted CDF of \gls{ABN}/\gls{ASN}-\gls{MRC} shown in \fig~\ref{Fig:comparison} compared to \gls{ABN}/\gls{ASN}-\gls{ACN} CDF. This $3~$dB gain is in accordance with~\eqref{eq:Sd:star} when setting $\Lrp=1,~ p=0,1$. From another aspect, we recall that an \gls{ABN}/\gls{ASN} has the same performance as an Alamouti scheme. Therefore, given access to two digital transmit ports, and as shown by the CDFs of Alamouti, no enhanced performance is achieved compared to \gls{ABN}/\gls{ASN} in this case.
	 \subsection{On Optimality of Phase Slopes}
	 \begin{figure}[]
	\centering
		\includegraphics[width=0.95\columnwidth]{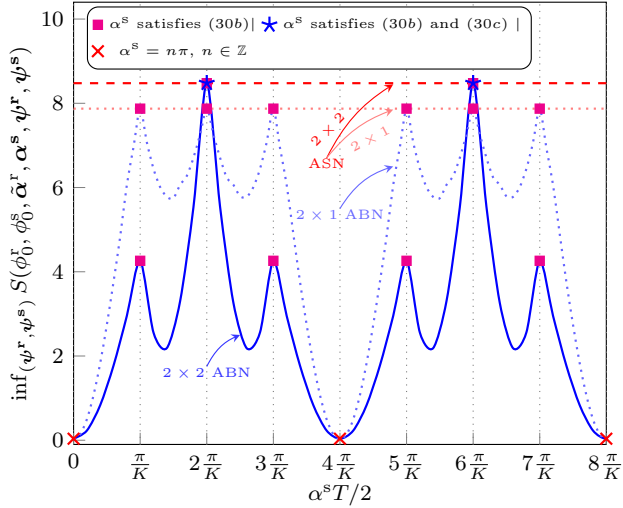}
			\caption{Normalized sum-\gls{SNR} of $K=4$ packets of \gls{ABN} and \gls{ASN} as a function of $\phtV=[0, \pht]^{\textsf{T}}$ when $\Lt=2$ $(A2)$, and $\Lr=2$ $(A2)$ with $\phrV=[0,2\pi/KT]^{\textsf{T}}$, or $\Lr=1$ $(A2)$. The \gls{AOA}, \gls{AOD} are fixed to $(\phir_0,\phit_0)=(178,178)~\deg$.} 
		\label{fig:conditions}
\end{figure}
	 
	 In Theorem~\ref{theo:main} we derived the optimality condition~\eqref{eq:th:optCon}. To get insight into it,
	 we plot in \fig~\ref{fig:conditions} sum-\gls{SNR} of a $2\times 2$ \gls{ABN} system based on $A2$ antennas as a function of transmitter phase slope for a fixed \gls{AOA}, \gls{AOD} (recall that Theorem~\ref{theo:main} holds for any $(\phir, \phit)$). An \gls{ACN} with phase slopes vector that satisfy~\eqref{eq:th:optCon:x} is used. In particular, the \gls{ACN} vector is chosen according to~\eqref{eq:th:ph_slopes},~\eqref{eq:th:SA:ph_slopes}, so it is optimal for \gls{ASN} as well. In the same figure we plot also the sum-\gls{SNR} for the \gls{ASN} as a reference. Since \gls{ASN} does not depend on transmitter phase slopes it maintains optimal performance. As for the $2\times2$ \gls{ABN} we can observe that when only~\eqref{eq:th:optCon:z} is satisfied, a suboptimal performance is achieved. Optimality is ensured, however, when both~\eqref{eq:th:optCon:z} and~\eqref{eq:th:optCon:xz} are met, which is in accordance with the results of Theorem~\ref{theo:main}. 
	 In comparison to $2\times1$ \gls{ABN}, 
	 we see that at the points where~\eqref{eq:th:optCon:z} is satisfied, optimal performance is achieved.
	  Thus, employing multiple antennas at both \gls{Tx} and \gls{Rx} sides reduces the set of optimal phase slopes points. 
	 Consequently, we see in the figure that for a large deviation from the optimal phase slopes, the relative sum-\gls{SNR} loss is larger for the $2\times2$ than the $2\times 1$ system. Yet, for a small phase slope mismatch, the relative loss is small for both systems.
	Curves showing the effect of \gls{ACN} phase slopes mismatch on the sum-\gls{SNR} of both \gls{ABN} and \gls{ASN} follow similar trends, and they are omitted for clarity.
	 Since the \gls{ASN}-\gls{ACN} structure depends only on phase slopes at the receiver, then it is less susceptible to effects of mismatch in phase slopes compared to \gls{ABN}-\gls{ACN}.

\begin{figure}[]
	\centering
	\includegraphics[width=0.95\columnwidth]{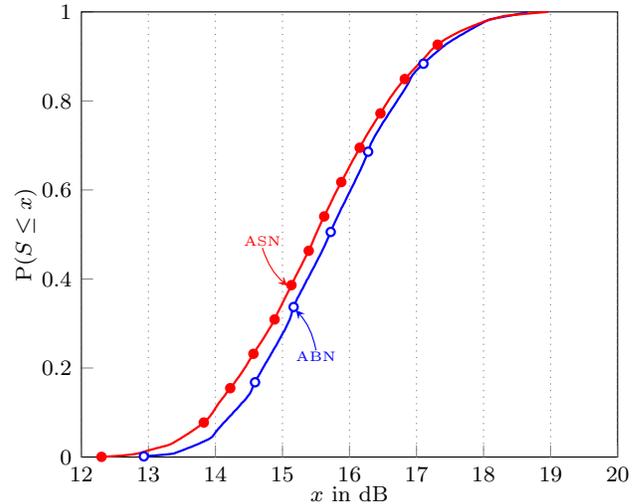}
	\caption{CDF corresponding to the normalized sum-\gls{SNR} of $K=9$ packets of \gls{ABN} and \gls{ASN}~\eqref{eq:Sstar:vir:counter}, $\Lr=2$, $\Lt=2$. The CDFs are computed for uniform \gls{AOA}, \gls{AOD}.} 
		\label{Fig:ASNvir}
\end{figure}

\subsection{\gls{ASN} Performance when $K/\Lt\notin \mathbb{Z}$}
As stated in \secR~\ref{section:sub:ASN:MA}, in the case $K/\Lt\notin \mathbb{Z}$, \gls{ASN} is not known to achieve optimal performance achieved by \gls{ABN}, when \gls{ACN} phase slopes are set according to~\eqref{eq:th:SA:ph_slopes} (the phase slopes are optimal for \gls{ASN} when $\Lt/K\in \mathbb{Z}$ and \gls{ABN}). In such case, \gls{ASN} performance is governed by~\eqref{eq:Sstar:vir:counter}, to which we do not have an available analytical solution. 
To quantify this numerically, we consider a burst of $K=9$ packets, and a communication link between a transmitting \gls{VU} with $\Lt=2$ and a reference receiving \gls{VU} with $\Lr=2$. Both \gls{Tx} and \gls{Rx} uses $A2$ antennas. In \fig~\ref{Fig:ASNvir} we show the CDFs of the achieved sum-\gls{SNR}, when the transmitting \gls{VU} is employing \gls{ASN} or \gls{ABN}, while the receiving \gls{VU} is employing an \gls{ACN} (following~\eqref{eq:th:SA:ph_slopes}). 
We observe that \gls{ASN} achieves a suboptimal performance when $K/\Lt\notin\mathbb{Z}$. Yet, for this particular antenna system, the performance is negligibly lower (by around $ 0.5~$dB) than the optimal \gls{ABN} worst-case sum-\gls{SNR}. 

	\section{Conclusions}
		A fully analog, low-complexity, multiple antenna system for cooperative, periodic, broadcast \gls{V2V} communication has been presented. Given that vehicular users (\glspl{VU}) use an \gls{ACN}~\cite{ACN} with $\Lr$ antennas at the receiver, we proposed the use of a network of phase shifters, \gls{ABN}, or a network of switches, \gls{ASN}, with $\Lt$ antennas at the transmitter. The overall system has been optimized, in absence of channel knowledge, to maximize the sum-\gls{SNR} of $K$ consecutive packets for the worst receiving \gls{VU} (which minimizes the burst error probability when the packet error probability decreases exponentially with the per-packet SNR).
		The main findings of this work follow.
		\begin{itemize}
		
		\item  An optimal set of phase slopes for \gls{ABN}~\eqref{eq:th:optCon}, and \gls{ASN}~\eqref{eq:th:SA:OptCon} are derived when $\Lr\Lt\leq K$. 
		The phase slopes are found to be independent of the far field functions of the antennas. Moreover, the receiver phase slopes are consistent with what was found in~\cite{ACN}.
		
		\item Both structures yield the same optimal sum-\gls{SNR} in the general case. For the special case of $\Lt=2$ transmit antennas, the sum-\gls{SNR} is equivalent to what can be achieved using an Alamouti diversity scheme. 
			
			
		\item Given the maximum number of receive antennas a \gls{VU} can have, $\Lrm$, the \gls{ABN} can be designed to achieve optimal performance for any user with $\Lr\leq \Lrm$, including the worst receiving user. 
		The \gls{ASN} on the other hand, does not require knowledge of $\Lrm$, and it achieves optimal performance as long as $\Lt$ divides $K$. 
					
			 
		\item An \gls{ACN} receiver can be designed to communicate optimally with both \gls{ABN} and \gls{ASN} transmitters.
			
		\item The derived phase slopes for \gls{ABN}/\gls{ASN} structures guarantee also optimal sum-\gls{SNR} for a hybrid analog-digital \gls{ACN}-\gls{MRC} combiner. 

		
		\end{itemize}


\begin{appendices}
	\section{Proof of Theorem~\ref{theo:main}}\label{app:theo:main}
\subsection{Notation}
For convenience, we introduce the following notation. 
For an $N$-vector
$\boldsymbol{v} = [ v_0 , v_1 ,\cdots , v_{N-1} ]^\textsf{T} $we define the $(N-1)$-vector $\bar{\boldsymbol{v}}_{n}$ as
\begin{equation}
\label{eq:v:bar:n:def}
\bar{\boldsymbol{v}}_{n} \triangleq 
[v_0 , v_1 ,\cdots , v_{n-1} ,~ v_{n+1} , \cdots , v_{N-1}]^\textsf{T}.
\end{equation}
We define the ``averaging'' operator $\Av$ as an integral of a function with respect to variables $\boldsymbol{y}\in[0,2\pi)^{\Lr}$ and $\boldsymbol{v}\in[0,2\pi)^{\Lt}$, such that
\begin{equation}
\label{eq:A:def}
\Av[h(\yrV, \vtV)] \triangleq \int_{[0, 2\pi)^{\Lt}}\int_{[0, 2\pi)^{\Lr}} h(\yrV, \vtV) \, d\yrV d\vtV.
\end{equation}
If we restrict the integration to be over just $\boldsymbol{y}$ or $\boldsymbol{v}$, we use the notation $\Av_{\boldsymbol{y}}$ and $\Av_{\boldsymbol{v}}$, respectively. We note that 
\begin{align}\label{def:Av:decomposition}
\Av[\cdot]=\Av_{\boldsymbol{v}}\bigl[\Av_{\boldsymbol{y}}[\cdot]\bigr]= \Av_{\boldsymbol{y}}\bigl[\Av_{\boldsymbol{v}}[\cdot]\bigr].
\end{align}
Finally, we define
\begin{align}
\PsetR &\triangleq \{(l, i): l, i\in\mathbb{Z}, 0\le l < i \le \Lr-1 \}, \label{eq:Pr:def}\\
\PsetT &\triangleq \{(m, j): m, j\in\mathbb{Z}, 0\le m < j \le \Lt-1 \}, \label{eq:Ps:def}\\
\Delta y_{l, i}  &\triangleq y_i - y_l. \label{eq:Delta:def}
\end{align}
Note that $\PsetR=\emptyset$ when $\Lr=1$, and $\PsetT=\emptyset$ when $\Lt=1$. In accordance with this, we use the convention
	\begin{align}\label{conv:ZeroSet:sum}
	\sum_{e\in \mathcal{S}}c_e=0, \quad\textrm{ when }~\mathcal{S}=\emptyset.
	\end{align}
	
\subsection{Preliminaries}\label{app:f:properties}
For a fixed constant positive integer $K$ we define the function $f: \mathbb{R}^2\rightarrow \mathbb{R}$ as
\begin{align}\label{eq:f:def}   
f(x,y) \triangleq \sum_{k=0}^{K-1} \cos(y-2kx).
\end{align}
Noting that $f(x,y)=\text{Re}\{ \mathrm{e}^{\jmath y}\sum_{k=0}^{K-1} \mathrm{e}^{-\jmath 2kx}\}$ and using the sum of geometric series,
we get
\begin{align}\label{eq:f:cases:app2}
f(x,y)=
\begin{cases}
\displaystyle K \cos(y),& x\in\mathcal{X}\\
\displaystyle 	\frac{\sin(Kx)}{\sin(x)}\cos\big(y-(K-1)x\big),& x\notin\mathcal{X}
\end{cases}
\end{align}
where $\mathcal{X}\triangleq\{q\pi:q\in \mathbb{Z}\}$.

We state the following properties of $f(x,y)$:
\begin{enumerate}
	\item $f(x, y)=0$ for all $y\in\mathbb{R}$, if and only if $x\in\mathcal{X}^*$, where
	\begin{equation}
	\mathcal{X}^* \triangleq \{q\pi/K: q\in\mathbb{Z}\}\setminus \mathcal{X}, \label{eq:f:x:zeros:prop}
	\end{equation}
	which follows from \eqref{eq:f:cases:app2}. Furthermore, we note that $x\in\mathcal{X}^*\Rightarrow x\notin\mathcal{X}$.
	\item from the identity $2\cos(a)\cos(b)=\cos(a+b)+\cos(a-b)$ and \eqref{eq:f:def}, we deduce
	\begin{align}
	2&\sum_{k=0}^{K-1} \cos(y_1-2kx_1)\cos(y_2-2kx_2)
	= \nonumber\\
	&f(x_1+x_2, y_1+y_2) + f(x_1-x_2, y_1-y_2). \label{eq:f:add:prop}
	\end{align}
	\item suppose $a$ is a nonzero integer, then for all $x, B, C\in\mathbb{R}$, 
	\begin{equation}
	\int_{0}^{2\pi} B f(x, ay+C)\, dy 
	= 0, \label{eq:f:zero-mean:prop}
	\end{equation}
	which follows from \eqref{eq:f:def}.
\end{enumerate}

	\begin{lemma}
	\label{lem:f:delta:zeromean}
	Let $\yrV = [\yr_0,\yr_1,\cdots,\yr_{\Lr-1}]^\textsf{T}\in[0,2\pi)^{\Lr}$ 
	and 
	$\displaystyle \vtV = [ \vt_0, \vt_1,\cdots,
	\vt_{\Lt-1}]^\textsf{T}\in[0,2\pi)^{\Lt}$. Then for $l,i\in\{0,
	1, \cdots, \Lr-1\}$ and $m,j\in\{0, 1, \cdots, \Lt-1\}$
	\begin{align}
	\Av_{\yrV}[f(x, \Delta \yr_{l, i}+C)] &= 0, \quad l\neq i\label{eq:f:y:zero-mean}\\
	\Av_{\vtV}[f(x, \Delta \vt_{m, j}+C)] &= 0, \quad m\neq j\label{eq:f:v:zero-mean}\\
	\Av[f(x, \Delta \yr_{l, i} + \Delta \vt_{m, j}+C)] &= 0, ~ \begin{cases}
	l\neq i \text{, or}\\
	m\neq j
	\end{cases}\label{eq:f:yv:zero-mean}
	\end{align}
	where $x, C\in \mathbb{R}$.
\end{lemma}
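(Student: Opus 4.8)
The plan is to reduce all three identities to the single-variable zero-mean property~\eqref{eq:f:zero-mean:prop}, which states that $\int_0^{2\pi} B\,f(x, ay+C)\,dy = 0$ for any nonzero integer $a$ and any $x, B, C \in \mathbb{R}$. The key observation is that in each case the argument of $f$ is an affine function of the integration variables in which at least one variable appears with coefficient $\pm 1$ (a nonzero integer), so integrating over that variable first annihilates the whole expression.

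First I would prove~\eqref{eq:f:y:zero-mean}. Since $\Delta\yr_{l,i} = \yr_i - \yr_l$ by~\eqref{eq:Delta:def}, the integrand $f(x, \Delta\yr_{l,i}+C)$ depends only on $\yr_i$ and $\yr_l$, and because $l \neq i$ these are two genuinely distinct integration variables. Carrying out the integration over $\yr_i$ first while holding every other component of $\yrV$ fixed, and absorbing $C-\yr_l$ into a single constant offset, property~\eqref{eq:f:zero-mean:prop} with $a=1$ gives $\int_0^{2\pi} f(x, \yr_i + (C-\yr_l))\,d\yr_i = 0$. The remaining integrations (over $\yr_l$ and the $\Lr-2$ inert variables) are then integrals of zero, so $\Av_{\yrV}[f(x,\Delta\yr_{l,i}+C)] = 0$. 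Equation~\eqref{eq:f:v:zero-mean} follows by the identical argument with $\vtV$, $\Lt$ in place of $\yrV$, $\Lr$.

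For the combined identity~\eqref{eq:f:yv:zero-mean} I would split on the hypothesis. If $l \neq i$, I invoke the decomposition $\Av[\cdot] = \Av_{\vtV}\bigl[\Av_{\yrV}[\cdot]\bigr]$ from~\eqref{def:Av:decomposition} and repeat the previous step: integrating the inner $\Av_{\yrV}$ over $\yr_i$ first (now the constant offset also carries $\Delta\vt_{m,j}$, which is inert for this integration) yields zero by~\eqref{eq:f:zero-mean:prop}, so the whole expression vanishes. If instead $m \neq j$, I use the other order $\Av[\cdot] = \Av_{\yrV}\bigl[\Av_{\vtV}[\cdot]\bigr]$ and integrate the inner $\Av_{\vtV}$ over $\vt_j$ first, again reaching zero. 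Since the hypothesis guarantees at least one of $l \neq i$ or $m \neq j$, one of the two orderings always applies.

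I do not expect a genuine obstacle here beyond bookkeeping: the only point to check carefully is that the variable integrated first really does appear with a nonzero integer coefficient (here $+1$, from $\Delta x_{l,i} = x_i - x_l$) so that~\eqref{eq:f:zero-mean:prop} is applicable, and that everything else may legitimately be treated as a constant offset under that integral. The conditions $l\neq i$ and $m\neq j$ are precisely what guarantee such a free variable exists; were they to fail, the argument of $f$ would collapse to the constant $C$ and the integral would not vanish.
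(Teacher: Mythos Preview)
Your proof is correct and follows essentially the same approach as the paper: both reduce everything to the single-variable zero-mean property~\eqref{eq:f:zero-mean:prop} by isolating one integration variable that appears with coefficient $\pm 1$. The only difference is organizational: the paper proves~\eqref{eq:f:yv:zero-mean} first and then obtains~\eqref{eq:f:y:zero-mean} and~\eqref{eq:f:v:zero-mean} as the special cases $m=j$ and $l=i$, whereas you prove the simpler identities first and then the combined one.
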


\begin{proof}
	Suppose $l\neq i $ we can write
	$f(x, \Delta \yr_{l, i} + \Delta \vt_{m, j}+C) = f(x,\yr_i+ C')$, where $C'=-\yr_l+(\vt_j-\vt_m)+C$ is a constant with respect to $\yr_i$. Then, $\Av_{\yrV}[f(x, \Delta \yr_{l, i} + \Delta \vt_{m, j}+C)]$ can be expressed as
	\begin{align}\nonumber
	\int_{[0, 2\pi)^{\Lr-1}}\underbrace{\left[ \int_{\yr_i \in [0, 2\pi)} f(x, \yr_i +  C' )\, d\yr_i \right]}_{\text{$=0$ according to
			\eqref{eq:f:zero-mean:prop} }}\, d\bar{\yrV}_i
	= 0, 
	\end{align}
	which, together with~\eqref{def:Av:decomposition}, implies that~\eqref{eq:f:yv:zero-mean} holds when $l\neq i$. 
	It is straightforward to develop a similar argument based on $\Av_{\vtV}[\cdot]$ to prove~\eqref{eq:f:yv:zero-mean} when $m\neq j$.
	Moreover, \eqref{eq:f:y:zero-mean} and \eqref{eq:f:v:zero-mean} follows from~\eqref{eq:f:yv:zero-mean} by considering
	$m=j$ and $l=i$, respectively, which completes the proof.
\end{proof}

\begin{lemma}
	\label{lem:f:delta:f:delta:zeromean}
	Let $\yrV = \displaystyle[\yr_0,\yr_1,\cdots,\yr_{\Lr-1}]^\textsf{T}\in[0,2\pi)^{\Lr}$ and $\displaystyle \vtV = [ \vt_0, \vt_1,\cdots,
	\vt_{\Lt-1}]^\textsf{T}\in[0,2\pi)^{\Lt}$.
	For $w,w'\in\PsetR$ and $\pPt,\pPt'\in\PsetT$, we have
	\begin{align}
	\Av_{\yrV}[f(x_1,  C_1\pm\Delta \yr_{w} )f(x_2, C_2\pm\Delta \yr_{w'} )]  =& 0, ~w\neq w' \label{eq:fy:fy:zero-mean}\\
	\Av_{\vtV}[f(x_1,  C_1\pm\Delta \vt_{\pPt})f(x_2, C_2\pm\Delta \vt_{\pPt'} )]  =& 0, ~\pPt\neq \pPt'\label{eq:fv:fv:zero-mean}\\
	\Av[f(x_1, \Delta \yr_{w} + a\Delta \vt_{\pPt})f(x_2, \Delta \yr_{w'}+ a\Delta \vt_{\pPt'}&)]  = 0,\nonumber\\
	\qquad \quad \qquad  a=\pm 1,~w\neq w' &\textrm { or } \pPt\neq \pPt'\label{eq:fyv:fyv:zero-mean}\\
\Av[f(x_1, \Delta \yr_{w} + \Delta \vt_{\pPt})f(x_2, \Delta \yr_{w'}- \Delta \vt_{\pPt'}&)]  = 0, \label{eq:fyv:-fyv:zero-mean}\\
	\Av_{\yrV}[f(x_1, \Delta \yr_{w} \pm \Delta \vt_{\pPt})f(x_2, \Delta \vt_{\pPt'} )]  =& 0, \label{eq:fyv:fv:zero-mean}\\
	\Av_{\vtV}[f(x_1, \Delta \yr_{w} \pm \Delta \vt_{\pPt} )f(x_2, \Delta \yr_{w'})]  =& 0, \label{eq:fyv:fy:zero-mean}\\
		\Av[f(x_1, \Delta \yr_{w} )f(x_2, \Delta \vt_{\pPt} )]  =& 0, \label{eq:fy:fv:zero-mean}
	\end{align}
	where $x_1, x_2, C_1, C_2 \in \mathbb{R}$.
\end{lemma}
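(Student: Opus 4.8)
The plan is to prove all seven identities through one uniform mechanism, replacing the cosine sum by complex exponentials and then invoking orthogonality of $\mathrm{e}^{\jmath a t}$ on $[0,2\pi)$. Setting $Q(x)\triangleq\sum_{k=0}^{K-1}\mathrm{e}^{-\jmath 2kx}$, the definition~\eqref{eq:f:def} reads $f(x,\theta)=\tfrac12\mathrm{e}^{\jmath\theta}Q(x)+\tfrac12\mathrm{e}^{-\jmath\theta}Q^{*}(x)$, so any product $f(x_1,\theta_1)f(x_2,\theta_2)$ expands into four terms, each equal to a constant that does not depend on the integration variables multiplied by $\mathrm{e}^{\jmath(\epsilon_1\theta_1+\epsilon_2\theta_2)}$ with $\epsilon_1,\epsilon_2\in\{+1,-1\}$. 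Since $\int_0^{2\pi}\mathrm{e}^{\jmath a t}\,dt=0$ for every nonzero integer $a$, the relevant average (over $\yrV$, over $\vtV$, or over both, according to the identity) of such a term vanishes as soon as the phase $\epsilon_1\theta_1+\epsilon_2\theta_2$ carries a nonzero coefficient on at least one integrated variable. Every $\theta$ appearing here is an integer-coefficient affine combination of the $y_l$ and $v_m$, so the lemma collapses to a finite bookkeeping task: for each identity and each of the four sign pairs $(\epsilon_1,\epsilon_2)$, I would exhibit one integration variable that survives with nonzero coefficient. (Alternatively, repeatedly applying the product-to-sum identity behind~\eqref{eq:f:add:prop} collapses the product into a sum of single-argument $f$'s, but this still leaves the same coefficient check, so I prefer the exponential form.)

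The single combinatorial fact I would isolate first is injectivity of the difference map on index pairs: for $w,w'\in\PsetR$ with $w\neq w'$ the functionals obey $\Delta y_w\neq\pm\Delta y_{w'}$, so both $\Delta y_w+\Delta y_{w'}$ and $\Delta y_w-\Delta y_{w'}$ are nonzero, and likewise on $\PsetT$ for the $v$ variables; this is immediate from $l<i$ in the definition of the pair sets. Granting this, the identities with a distinctness hypothesis are routine. In~\eqref{eq:fy:fy:zero-mean} and~\eqref{eq:fv:fv:zero-mean} the variable part of the phase is precisely $\pm\Delta y_w\pm\Delta y_{w'}$ (respectively in $v$), nonzero for all sign choices because $w\neq w'$. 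In~\eqref{eq:fyv:fyv:zero-mean} the hypothesis $w\neq w'$ makes the $y$-part $\epsilon_1\Delta y_w+\epsilon_2\Delta y_{w'}$ nonzero for every $(\epsilon_1,\epsilon_2)$, while $\pPt\neq\pPt'$ does the same for the $v$-part; either one suffices. The one-sided averages~\eqref{eq:fyv:fv:zero-mean} and~\eqref{eq:fyv:fy:zero-mean} are easier still, since only the $\Delta y_w$ (respectively $\Delta v_\pPt$) term depends on the integrated variables and is nonzero by $w\in\PsetR$ (respectively $\pPt\in\PsetT$). Finally~\eqref{eq:fy:fv:zero-mean} factorizes as $\Av_{\yrV}[f(x_1,\Delta y_w)]\,\Av_{\vtV}[f(x_2,\Delta v_\pPt)]$, each factor vanishing by Lemma~\ref{lem:f:delta:zeromean}.

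The delicate case, which I would write out in full, is~\eqref{eq:fyv:-fyv:zero-mean}, since it carries no distinctness assumption and must hold even when $w=w'$ and $\pPt=\pPt'$. Here the two arguments are $\Delta y_w+\Delta v_\pPt$ and $\Delta y_{w'}-\Delta v_{\pPt'}$, so the term indexed by $(\epsilon_1,\epsilon_2)$ has $y$-block $\epsilon_1\Delta y_w+\epsilon_2\Delta y_{w'}$ and $v$-block $\epsilon_1\Delta v_\pPt-\epsilon_2\Delta v_{\pPt'}$. In the degenerate situation $w=w'$, $\pPt=\pPt'$ these reduce to $(\epsilon_1+\epsilon_2)\Delta y_w$ and $(\epsilon_1-\epsilon_2)\Delta v_\pPt$, and since $\epsilon_1+\epsilon_2$ and $\epsilon_1-\epsilon_2$ cannot both vanish, each of the four terms still loses an integrated variable and averages to zero; when the pairs differ the corresponding block is nonzero by the injectivity fact. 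The opposite signs carried by the two $v$ differences are exactly what prevents the $y$- and $v$-blocks from vanishing together, so this sign accounting is where the real content lies, everything else being mechanical.
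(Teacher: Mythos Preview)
Your proof is correct. Both your approach and the paper's rest on the same orthogonality, but the packaging differs. The paper keeps the product $f(\cdot)f(\cdot)$ intact and, for each identity with a distinctness hypothesis, isolates a single integration variable $y_n$ (or $v_n$) that appears in exactly one of the two factors---chosen as the unique extremal index in the multiset $[l,i,r,q]$---so that property~\eqref{eq:f:zero-mean:prop} kills the inner integral directly without ever expanding the cosine product. You instead split each factor as $f(x,\theta)=\tfrac12\mathrm{e}^{\jmath\theta}Q(x)+\tfrac12\mathrm{e}^{-\jmath\theta}Q^{*}(x)$ and verify, for every sign pair $(\epsilon_1,\epsilon_2)$, that the combined phase is a nonzero integer-coefficient linear functional of the integrated variables. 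Your route is more uniform---one mechanism for all seven identities---and reduces the combinatorics to the single injectivity fact $\Delta y_w\neq\pm\Delta y_{w'}$ for $w\neq w'$ in $\PsetR$; the paper's route avoids the four-term expansion but pays with a fresh variable-isolation argument per case. For the degenerate instance of~\eqref{eq:fyv:-fyv:zero-mean} the two proofs are essentially identical: your observation that $\epsilon_1+\epsilon_2$ and $\epsilon_1-\epsilon_2$ cannot both vanish is precisely the content of the product-to-sum expansion the paper writes out explicitly.
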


\begin{proof}
Let $w=(l,i)$, $w'=(r,q)$, $\pPt=(m, j)$ and $\pPt'= (t, s)$ throughout this proof. Given that $w,w'\in\PsetR$, and $\pPt,\pPt'\in\PsetT$, it follows that $l<i$, $r<q$, $m<j$, and $t<s$.
	
	Let us start by proving~\eqref{eq:fy:fy:zero-mean}.  Suppose $w\neq w'$, i.e., $(l, i)\neq(r, q)$, which implies that $l\neq r$ or $i\neq q$. Let us
	consider these two cases separately.
	
		(i) Suppose that $l\neq r$. Then either $l < r$ or $r < l$, which (since $l<i$ and $r<q$) implies that $n = \min\{l, r\}$ is the unique minimum element in the
		multiset $[l, i, r, q]$. Hence, the integration variable $y_n$ appears in the argument of either
		$f(x_1, C_1\pm\Delta y_{w} )=f(x_1, C_1\pm\Delta y_{l, i} )=f(x_1, C_1\pm(y_i - y_l) )$ or $f(x_2, C_2\pm\Delta y_{w'} )=f(x_2, C_2 \pm\Delta y_{r, q} ) =f(x_2, C_2\pm(y_q - y_r))$, but not in both. 
		Then, the integrand in~\eqref{eq:fy:fy:zero-mean} can be expressed as
		\begin{align}
		    B f(x', C'-ay_n), ~a=\pm1\nonumber
		\end{align}
		where $B=f(x_2,C_2\pm \Delta y_{r, q} )$, $C'=C_1+ay_i$ and $x'=x_1$ if $n=l$, while $B=f(x_1, C_1\pm\Delta y_{l, i} )$, $C'=C_2+ay_q$ and $x'=x_2$ if $n=r$. Note that in both cases $B$ and $C'$ are constants with respect to $y_n$.
		Hence, the left-hand side of~\eqref{eq:fy:fy:zero-mean} satisfies
		\begin{align}
	\int_{[0, 2\pi)^{\Lr-1}}  \underbrace{\left[\int_{y_n\in[0, 2\pi)} B f(x', C'-ay_n) dy_n \right]}_{\text{$=0$ according to
				\eqref{eq:f:zero-mean:prop} }} \, d\bar{\boldsymbol{y}} _n=0, \nonumber 
		\end{align}
		and thus,~\eqref{eq:fy:fy:zero-mean} holds when $l\neq r$.

		(ii) Suppose that $i\neq q$. Then either $i > q$ or $q > i$, which implies that $n = \max\{i, q\}$ is the unique maximum element in
		the multiset $[l, i, r, q]$. Hence, just as in case (i), the integration variable $y_n$ appears in the argument of either $f(x_1,C_1\pm (y_i - y_l ))$ or 
		$f(x_2, C_2 \pm (y_q - y_r) )$, but not in both. Hence, \eqref{eq:fy:fy:zero-mean} holds also when $i\neq q$.
	Combining the results of (i) and (ii) proves that~\eqref{eq:fy:fy:zero-mean} holds. Moreover, the same basic arguments, (i) and (ii), can be repeated
	to prove~\eqref{eq:fv:fv:zero-mean}.
	
	Now, consider the integrand in~\eqref{eq:fyv:fyv:zero-mean}. When integrating over $\boldsymbol{y}$ ($\Av_{\boldsymbol{y}}[\cdot]$), it can be expressed as
	\begin{align}\nonumber
	\quad f(x_1, \Delta y_{w}  + C_1)f(x_2, \Delta y_{w'} + C_2), 
	\end{align}
	where $C_1=a\Delta v_\pPt=a\Delta v_{m,j}=a(v_j - v_m)$ and $C_2=a\Delta v_{\pPt'}=a\Delta v_{t,s}=a(v_s - v_t)$ are constants and $a=\pm 1$. Hence, it follows from~\eqref{eq:fy:fy:zero-mean} and the decomposition $\Av[\cdot]=\Av_{\boldsymbol{v}}\big[ \Av_{\boldsymbol{y}}[\cdot]  \big]$ that~\eqref{eq:fyv:fyv:zero-mean}
	holds when $w\neq w'$. 
	On the other hand, by starting the integration over $\boldsymbol{v}$, then employing~\eqref{eq:fv:fv:zero-mean} and $\Av[\cdot]=\Av_{\boldsymbol{y}}\big[ \Av_{\boldsymbol{v}}[\cdot]  \big]$, we deduce that~\eqref{eq:fyv:fyv:zero-mean}
	holds also when $\pPt\neq \pPt'$, which completes the proof of~\eqref{eq:fyv:fyv:zero-mean}.

		By similar argument to what was presented to demonstrate~\eqref{eq:fyv:fyv:zero-mean}, we can show that~\eqref{eq:fyv:-fyv:zero-mean} holds when $w\neq w'$ or $\pPt\neq \pPt'$. For the case $w=w'$ and $\pPt=\pPt'$, i.e., $(l,i)=(r,q)$ and $(m,j)=(t,s)$, the integrand of~\eqref{eq:fyv:-fyv:zero-mean} can be expressed based on~\eqref{eq:f:def} and the identity $2cos(a)cos(b)=cos(a+b)+cos(a-b)$ as
		\begin{align}
		&\frac{1}{2}\sum_{k=0}^{K-1}\sum_{k'=0}^{K-1}\cos(2\Delta y_{l, i} -2(kx_1+k'x_2)    )\nonumber \\
		&+\cos\big(2\Delta v_{m, j} -2(kx_1-k'x_2)   \big ). \label{eq:fyv:-fyv:integrand}
		\end{align}
		Integrating~\eqref{eq:fyv:-fyv:integrand} with respect to $\boldsymbol{y}$ and $\boldsymbol{v}$ yields zero, and this completes the proof of~\eqref{eq:fyv:-fyv:zero-mean}.

	Since $\Delta v_{\pPt}$ and $f(x_2, \Delta v_{\pPt'})$ are constants with respect to $\boldsymbol{y}$, the left-hand side of~\eqref{eq:fyv:fv:zero-mean} can be written as
	\begin{align}\nonumber 
		f(x_2, \Delta v_{\pPt'} ) \Av_{\boldsymbol{y}}[f(x_1, \Delta y_{w} +C )],
	\end{align}
	where $C=\pm\Delta v_{\pPt}$, and it follows directly from Lemma~\ref{lem:f:delta:zeromean},~\eqref{eq:f:y:zero-mean} ($w=(l,i)$, $l<i$) that~\eqref{eq:fyv:fv:zero-mean} holds. 
	By similar argument~\eqref{eq:fyv:fy:zero-mean} follows from Lemma~\ref{lem:f:delta:zeromean},~\eqref{eq:f:v:zero-mean}. Finally, employing the decomposition~\eqref{def:Av:decomposition},~\eqref{eq:fy:fv:zero-mean} follows from Lemma~\ref{lem:f:delta:zeromean},~\eqref{eq:f:y:zero-mean} or~\eqref{eq:f:v:zero-mean} and this completes the lemma proof.
	\end{proof}

	\subsection{Proof of Theorem 1 and Corollary~\ref{cor:ABN:ph_slopes}}
	
	Given the objective function $\SsnrA(\phir,\phit,\phrV,\phtV,\psrV,\pstV )$, defined in~\eqref{def:S:GplusJ}, where  $\Lr\geq 1$, $\Lt\geq1$, and $\Lr+\Lt>2$ (the trivial case $\Lr=\Lt=1$ is omitted).
	We define $\Lr$-vectors $\boldsymbol{x}\in \mathbb{R}^{\Lr}$ and $\boldsymbol{y} \in [0,2\pi)^{\Lr}$ with elements
	$x_l\triangleq\phr_lT/2$ and $y_l\triangleq\psr_l$, $0\leq l\leq \Lr-1$. Analogously, we define $\Lt$-vectors $\boldsymbol{z}\in \mathbb{R}^{\Lt}$ and $\boldsymbol{v}\in [0,2\pi)^{\Lt}$ with elements $z_m\triangleq\pht_mT/2$ and $v_m\triangleq\pst_m$, $0\le m\le \Lt-1$. Last, we let $\phiVrs \triangleq(\phir,\phit)$. The objective function can be expressed using this notation as
	\begin{align}\label{eq:SsnrA:xy}
	\SsnrA(\phiVrs,\boldsymbol{x},\boldsymbol{z},\boldsymbol{y},\boldsymbol{v} )=K\overline{G}(\phiVrs)+ \JA(\phiVrs,\boldsymbol{x},\boldsymbol{z},\boldsymbol{y},\boldsymbol{v} ),
	\end{align}
	where $ \JA$ is given by~\eqref{def:J:original}. 
	Using the definitions of $f$, $\PsetR$, and $\PsetT$, we can express $\JA$ as 
	\begin{align}
	&\JA(\phiVrs,\boldsymbol{x},\boldsymbol{z},\boldsymbol{y},\boldsymbol{v} )=\nonumber \\
	&\overbrace{\sum_{w\in \PsetR} c_{w} f(\Delta x_{w},\Delta y_{w})}^{J_{\mathrm{a}}^{\mathrm{r}}} + \overbrace{\sum_{\pPt\in \PsetT} d_{\pPt}f(\Delta z_{\pPt},\Delta v_{\pPt})}^{J_{\mathrm{a}}^{\mathrm{s}}}\nonumber\\
	& + \sum_{w\in \PsetR}\sum_{\pPt\in \PsetT} 0.5c'_{w}d'_{\pPt} f(\Delta x_{w}+\Delta z_{\pPt},\Delta y_{w}+\Delta v_{\pPt}) \nonumber\\
	&+ \sum_{w\in \PsetR}\sum_{\pPt\in \PsetT} 0.5c'_{w}d'_{\pPt} f(\Delta x_{w}-\Delta z_{\pPt},\Delta y_{w}-\Delta v_{\pPt}),\label{eq:J:wp:compact}
	\end{align}
	where the last two terms follow from~\eqref{eq:f:add:prop}. The dependency of the nonnegative coefficients $c_w$, $c'_w$, $d_{\pPt}$, $d'_{\pPt}$---defined in~\eqref{def:J:cli:dmj}---on $\phiVrs$ is dropped for convenience. Note that, in the special case of $\Lr=1$, $\PsetR=\emptyset$ and with reference to the convention~\eqref{conv:ZeroSet:sum}, we get $\JA=J_{\mathrm{a}}^{\mathrm{s}}$, while when $\Lt=1$, $\PsetT=\emptyset$, we get $\JA=J_{\mathrm{a}}^{\mathrm{r}}$.
		
		We observe that $\JA(\phiVrs,\boldsymbol{x},\boldsymbol{z},\boldsymbol{y},\boldsymbol{v} )$ is a linear combination of $f$-functions with nonnegative
	coefficients---a crucial property for the proof to follow. 
The proof is divided to three lemmas, which are presented first, then follows the demonstration of the theorem claims at last.
	\begin{lemma}\label{lem:inf}
		Let $\JA$ be defined as in~\eqref{eq:J:wp:compact} where 
		$\boldsymbol{x}\in \mathbb{R}^{\Lr}$,  
		$\boldsymbol{y}\in [0,2\pi)^{\Lr}$,
		$\boldsymbol{z}\in \mathbb{R}^{\Lt}$,
		and $\boldsymbol{v}\in [0,2\pi)^{\Lt}$.
		Then, 
		\begin{align}
		\Av[\JA(\phiVrs,\boldsymbol{x},\boldsymbol{z},\boldsymbol{y},\boldsymbol{v} )]&=0,\label{eq:lem:inf:AJeq:zero}\\
		\inf_{\boldsymbol{y}\in [0,2\pi)^{\Lr},\boldsymbol{v}\in [0,2\pi)^{\Lt}} \JA(\phiVrs,\boldsymbol{x},\boldsymbol{z},\boldsymbol{y},\boldsymbol{v} ) &\leq 0.\label{eq:lem:inf:claim}
		\end{align}
	\end{lemma}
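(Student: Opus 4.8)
The plan is to prove the two assertions in order, obtaining the infimum bound \eqref{eq:lem:inf:claim} as an immediate consequence of the vanishing-average identity \eqref{eq:lem:inf:AJeq:zero}.

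First I would establish \eqref{eq:lem:inf:AJeq:zero}. The structural fact that drives everything is that in the representation \eqref{eq:J:wp:compact} the coefficients $c_w, d_\pPt, c'_w, d'_\pPt$ and the first arguments $\Delta x_w, \Delta z_\pPt$ of every $f$-function are constant with respect to the integration variables $\boldsymbol{y}, \boldsymbol{v}$; all the dependence on $(\boldsymbol{y},\boldsymbol{v})$ sits in the second argument. Using linearity of $\Av$ and the decomposition $\Av[\cdot]=\Av_{\boldsymbol{v}}[\Av_{\boldsymbol{y}}[\cdot]]$ from \eqref{def:Av:decomposition}, I would integrate each term separately. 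For the $J_{\mathrm{a}}^{\mathrm{r}}$ terms and for both cross-sums the second argument has the form $\Delta y_w + C$ with $C$ independent of $\boldsymbol{y}$ (namely $C=0$, $C=+\Delta v_\pPt$, or $C=-\Delta v_\pPt$), so performing $\Av_{\boldsymbol{y}}$ first annihilates them by \eqref{eq:f:y:zero-mean}, where the hypothesis $l\neq i$ is guaranteed by $w=(l,i)\in\PsetR$. The single remaining family, $J_{\mathrm{a}}^{\mathrm{s}}$, carries no $\boldsymbol{y}$-dependence and is killed instead by $\Av_{\boldsymbol{v}}$ via \eqref{eq:f:v:zero-mean}, using $\pPt=(m,j)\in\PsetT\Rightarrow m\neq j$. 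Summing the contributions gives \eqref{eq:lem:inf:AJeq:zero}. The degenerate cases $\Lr=1$ or $\Lt=1$ need no separate treatment: then $\PsetR$ or $\PsetT$ is empty and the convention \eqref{conv:ZeroSet:sum} simply deletes the corresponding sums, leaving only zero-average terms.

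Second, I would deduce \eqref{eq:lem:inf:claim} from \eqref{eq:lem:inf:AJeq:zero} by a mean-value argument. Arguing by contradiction, suppose $\inf_{\boldsymbol{y},\boldsymbol{v}}\JA>0$. Then $\JA\geq\inf_{\boldsymbol{y},\boldsymbol{v}}\JA>0$ at every point of $[0,2\pi)^{\Lr}\times[0,2\pi)^{\Lt}$, and monotonicity of the integral gives $\Av[\JA]\geq(2\pi)^{\Lr+\Lt}\inf_{\boldsymbol{y},\boldsymbol{v}}\JA>0$, contradicting \eqref{eq:lem:inf:AJeq:zero}. Hence $\inf_{\boldsymbol{y},\boldsymbol{v}}\JA\leq 0$.

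I do not expect a real obstacle: once the zero-mean identities of Lemma~\ref{lem:f:delta:zeromean} are available, the proof is essentially bookkeeping. The only point demanding a little care is matching each $f$-term's second argument to the correct case of that lemma---in particular, noting that integrating over $\boldsymbol{y}$ first lets the $\boldsymbol{v}$-dependent offset $\pm\Delta v_\pPt$ be absorbed into the constant $C$ of \eqref{eq:f:y:zero-mean}, so that both cross-sums (with either sign) are handled uniformly. The step from a zero average to a nonpositive infimum is then routine.
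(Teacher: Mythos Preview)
Your proposal is correct and follows essentially the same approach as the paper: both establish \eqref{eq:lem:inf:AJeq:zero} by applying Lemma~\ref{lem:f:delta:zeromean} term-by-term to the $f$-functions in \eqref{eq:J:wp:compact}, and then derive \eqref{eq:lem:inf:claim} from the zero average via monotonicity of the integral. The only cosmetic differences are that the paper treats all four families of $f$-terms uniformly through the general form $f(x, a\Delta y_{l,i}+b\Delta v_{m,j})$ and \eqref{eq:f:yv:zero-mean}, and writes the second step as a direct inequality chain rather than a contradiction.
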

	
	\begin{proof}
		From~\eqref{eq:J:wp:compact} we see that $\JA(\phiVrs,\boldsymbol{x},\boldsymbol{z},\boldsymbol{y},\boldsymbol{v})$ is a linear combination of $f$-functions with nonnegative coefficients. The $f$-functions are of the general form $f(x, a \Delta y_{w} + b \Delta v_{\pPt})=f(x, a \Delta y_{l,i} + b \Delta v_{m,j})$, where $a, b\in\{0,\pm 1\}$ and they are not both zero at the same time. This is valid in the general case of $\Lr>1, \Lt>1$, as well as in the special cases of $\Lr=1$ ($a=0$, $b=1$), or $\Lt=1$ ($a=1$, $b=0$). Since $w=(l,i)\in\PsetR$ and $\pPt=(m,j)\in\PsetT$, we have that $i\neq l$ and $j\neq m$, and $\Av[f(x, a \Delta y_{l,i} + b \Delta v_{m,j})]=0$ according to Lemma~\ref{lem:f:delta:zeromean} and~\eqref{def:Av:decomposition}. 
		We recall that $\Av[\cdot]$ is an integral over the same domain as over which the infimum is taken in~\eqref{eq:lem:inf:claim}, namely $\boldsymbol{y}\in [0,2\pi)^{\Lr},\boldsymbol{v}\in [0,2\pi)^{\Lt}$. Hence, $\Av[f(x, a \Delta y_{l,i} + b \Delta v_{m,j})]=0$ implies that
		\begin{align}
		0 &= \Av[\JA(\phiVrs,\boldsymbol{x},\boldsymbol{z},\boldsymbol{y},\boldsymbol{v})]\\
		&\ge \Av\left[\inf_{\boldsymbol{y}\in [0,2\pi)^{\Lr},\boldsymbol{v}\in [0,2\pi)^{\Lt}} \JA(\phiVrs,\boldsymbol{x},\boldsymbol{z},\boldsymbol{y},\boldsymbol{v})\right]\\
		&=(2\pi)^{\Lr + \Lt} \inf_{\boldsymbol{y}\in [0,2\pi)^{\Lr},\boldsymbol{v}\in [0,2\pi)^{\Lt}} \JA(\phiVrs,\boldsymbol{x},\boldsymbol{z},\boldsymbol{y},\boldsymbol{v}),
		\end{align}
		where the last equality holds since $\Av[C] = C (2\pi)^{\Lr + \Lt}$ for a constant $C$, and the lemma therefore follows.
	\end{proof}

	\begin{lemma}\label{lem:all0s}
		Let $\JA$ be as defined in~\eqref{eq:J:wp:compact}, where 
		$\boldsymbol{x}\in \mathbb{R}^{\Lr}$,  
		$\boldsymbol{y}\in [0,2\pi)^{\Lr}$,
		$\boldsymbol{z}\in \mathbb{R}^{\Lt}$,
		and $\boldsymbol{v}\in [0,2\pi)^{\Lt}$. Let $\mathcal{X^*}$ be as defined in~\eqref{eq:f:x:zeros:prop}, and 
			\begin{align}
		\DsetX&\triangleq\{\Delta x_{w},  \Delta z_{\pPt}, \Delta x_{w}\pm\Delta z_{\pPt}: w \in\PsetR, \pPt \in\PsetT\}\nonumber\\
		&= \{ \Delta z_{\pPt}: \pPt \in\PsetT, \PsetR=\emptyset \}\nonumber\\
		&= \{ \Delta x_{w}: w \in\PsetR, \PsetT=\emptyset \}.\label{def:Dset:X}
		\end{align}
		(i) If $\mathcal{D}_X\subset \mathcal{X^*}$ then
		\begin{equation}
		\JA(\phiVrs,\boldsymbol{x},\boldsymbol{z},\boldsymbol{y},\boldsymbol{v})= 0, \quad \boldsymbol{y}\in [0,2\pi)^{\Lr},\boldsymbol{v}\in [0,2\pi)^{\Lt}. \label{eq:lem:all0s:claim}
		\end{equation}
		(ii) Assuming that $c_w,c'_w,d_{\pPt},d'_{\pPt}>0$, $w\in \PsetR$, $\pPt\in\PsetT$, then~\eqref{eq:lem:all0s:claim} $\implies$ $\mathcal{D}_X\subset \mathcal{X^*}$.
		
	\end{lemma}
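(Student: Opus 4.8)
The plan is to handle the two implications separately, leaning on the structural observation that $\JA$ in \eqref{eq:J:wp:compact} is a linear combination, with nonnegative coefficients, of $f$-functions whose \emph{first} (slope) arguments are exactly the members of $\DsetX$ in \eqref{def:Dset:X}: the slopes $\Delta x_w$, $\Delta z_{\pPt}$, and $\Delta x_w\pm\Delta z_{\pPt}$ occur in $J_{\mathrm{a}}^{\mathrm{r}}$, $J_{\mathrm{a}}^{\mathrm{s}}$, and the two double sums, respectively.

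For (i), I would simply note that if $\DsetX\subset\mathcal{X^*}$ then every slope argument appearing in \eqref{eq:J:wp:compact} lies in $\mathcal{X^*}$, so by the zero-set property \eqref{eq:f:x:zeros:prop} each individual $f(\cdot,\cdot)$ vanishes for \emph{all} values of its second argument, i.e.\ for every $\boldsymbol{y}\in[0,2\pi)^{\Lr}$ and $\boldsymbol{v}\in[0,2\pi)^{\Lt}$. Summing these zeros gives $\JA\equiv 0$, which is \eqref{eq:lem:all0s:claim}. The degenerate cases $\Lr=1$ and $\Lt=1$ need no separate treatment: through the empty-sum convention \eqref{conv:ZeroSet:sum}, $\DsetX$ collapses to $\{\Delta z_{\pPt}\}$ or $\{\Delta x_w\}$ and $\JA$ to $J_{\mathrm{a}}^{\mathrm{s}}$ or $J_{\mathrm{a}}^{\mathrm{r}}$, and the same argument applies.

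For the converse (ii), I would promote the hypothesis $\JA\equiv 0$ to $\Av[\JA^2]=0$ and expand the square by bilinearity. The crux is that every \emph{cross} product of two distinct $f$-terms averages to zero by Lemma~\ref{lem:f:delta:f:delta:zeromean}: off-diagonal products within $J_{\mathrm{a}}^{\mathrm{r}}$ vanish by \eqref{eq:fy:fy:zero-mean} and within $J_{\mathrm{a}}^{\mathrm{s}}$ by \eqref{eq:fv:fv:zero-mean}; each double sum is self-orthogonal off the diagonal through \eqref{eq:fyv:fyv:zero-mean} (the $a=+1$ and $a=-1$ instances) and the two double sums are mutually orthogonal via \eqref{eq:fyv:-fyv:zero-mean}; and every remaining pairing---$J_{\mathrm{a}}^{\mathrm{r}}$ against $J_{\mathrm{a}}^{\mathrm{s}}$, or either single sum against either double sum---vanishes by \eqref{eq:fy:fv:zero-mean}, \eqref{eq:fyv:fv:zero-mean}, and \eqref{eq:fyv:fy:zero-mean}. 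What survives is a sum of diagonal (squared) terms $c_w^2\,\Av[f(\Delta x_w,\Delta y_w)^2]$, $d_{\pPt}^2\,\Av[f(\Delta z_{\pPt},\Delta v_{\pPt})^2]$, and $(0.5c'_wd'_{\pPt})^2\,\Av[f(\Delta x_w\pm\Delta z_{\pPt},\Delta y_w\pm\Delta v_{\pPt})^2]$, each of which is nonnegative.

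Since this sum of nonnegative terms equals $\Av[\JA^2]=0$, each term must vanish individually; here the positivity hypothesis $c_w,c'_w,d_{\pPt},d'_{\pPt}>0$ is indispensable, as it lets me cancel the coefficients and conclude $\Av[f(x_0,\cdot)^2]=0$ for every slope $x_0\in\DsetX$. Because the integrand is continuous and nonnegative, and its second argument sweeps a full period as a single component of $\boldsymbol{y}$ or $\boldsymbol{v}$ ranges over $[0,2\pi)$, this forces $f(x_0,\cdot)\equiv 0$, whence $x_0\in\mathcal{X^*}$ by \eqref{eq:f:x:zeros:prop}. Collecting over all $w\in\PsetR$ and $\pPt\in\PsetT$ yields $\DsetX\subset\mathcal{X^*}$. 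I expect the main obstacle to be the bookkeeping in the expansion step---systematically matching each of the cross pairings among the four groups of $f$-terms to the correct identity in Lemma~\ref{lem:f:delta:f:delta:zeromean}---together with the care needed to argue rigorously that a vanishing $L^2$-average of the continuous map $f(x_0,\cdot)$ really forces it to vanish identically, so that \eqref{eq:f:x:zeros:prop} may be invoked.
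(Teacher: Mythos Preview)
Your proposal is correct and follows essentially the same route as the paper: part~(i) is identical, and part~(ii) likewise promotes $\JA\equiv 0$ to $\Av[\JA^2]=0$, kills all cross terms via Lemma~\ref{lem:f:delta:f:delta:zeromean}, and uses positivity of the coefficients to force each $\Av[f^2(\Delta_X,\Delta_Y)]=0$. The only cosmetic difference is the last step: the paper computes $\Av[f^2(\Delta_X,\Delta_Y)]$ explicitly from the closed form~\eqref{eq:f:cases:app2} to get $\tfrac{1}{2}(2\pi)^{\Lr+\Lt}\sin^2(K\Delta_X)/\sin^2(\Delta_X)$ and reads off $\Delta_X\in\mathcal{X}^*$, whereas you invoke continuity and nonnegativity of $f^2$ to conclude $f(\Delta_X,\cdot)\equiv 0$ and then apply~\eqref{eq:f:x:zeros:prop}; both are valid.
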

	
	\begin{proof}
		\textit{(i)} We see from~\eqref{eq:J:wp:compact} that $\JA$ is a linear combination of terms of the form $f(\Delta_X, \Delta_Y)$, where $\Delta_X\in\mathcal{D}_X$ defined in~\eqref{def:Dset:X}, and $\Delta_Y\in\mathcal{D}_Y$, 
\begin{align} 
			\mathcal{D}_Y &\triangleq \{\Delta y_w, \Delta v_{\pPt}, \Delta y_w \pm \Delta v_{\pPt}: w \in\PsetR, \pPt \in\PsetT\} \nonumber\\
			&= \{ \Delta v_{\pPt}: \pPt \in\PsetT, \PsetR=\emptyset \}\nonumber \\
			&= \{ \Delta y_{w}: w \in\PsetR, \PsetT=\emptyset \}.\label{def:Dset:Y}
			\end{align}

		
		 Suppose $\DsetX \subset\mathcal{X^{*}} $, then~$\Delta_X\in\DsetX \subset\mathcal{X}^*$, which together with~\eqref{eq:f:x:zeros:prop} implies that $f(\Delta_X, \Delta_Y) = 0$ for all $\Delta_Y\in\mathbb{R}$, which in turn implies that~\eqref{eq:lem:all0s:claim} hold. In other words, 
		$\DsetX \subset \mathcal{X^{*}} \Rightarrow\eqref{eq:lem:all0s:claim}$.
		
		\textit{(ii)} Suppose~\eqref{eq:lem:all0s:claim} holds, then $\JA^2 = 0$, which implies that $\Av[\JA^2]=0$.
		From~\eqref{eq:J:wp:compact}, we see that $\JA^2$ can be written as a linear combinations of 
		terms of the form $f(\Delta_X, \Delta_Y)f(\Delta_X', \Delta_Y')$, where $\Delta_X, \Delta_X'\in\DsetX$ and $\Delta_Y, \Delta_Y'\in\DsetY$. From Lemma~\ref{lem:f:delta:f:delta:zeromean}, we can see that  $\Av[f(\Delta_X, \Delta_Y)f(\Delta_X', \Delta_Y')]=0$ when $\Delta_Y\neq\Delta_Y'$. Hence, it follows from~\eqref{eq:J:wp:compact} that 
		\begin{align}
		\Av&[\JA^2]
		= \Av[\xi]+\nonumber\\
		\Av&\left[\sum_{w\in \PsetR} c_{w}^2 f^2(\Delta x_{w},\Delta y_{w}) 
		+ \sum_{\pPt\in \PsetT} d_{\pPt}^2 f^2(\Delta z_{\pPt},\Delta v_{\pPt})\right.\nonumber\\
		& + \sum_{w\in \PsetR}\sum_{\pPt\in \PsetT} \frac{(c'_{w}d'_{\pPt})^2}{4} f^2(\Delta         x_{w}+\Delta z_{\pPt},\Delta y_{w}+\Delta v_{\pPt}) \nonumber\\
		&\left. + \sum_{w\in \PsetR}\sum_{\pPt\in \PsetT} \frac{(c'_{w}d'_{\pPt})^2}{4} f^2(\Delta         x_{w}-\Delta z_{\pPt},\Delta y_{w}-\Delta v_{\pPt})\right],\label{eq:J2:wp:compact}
		\end{align}
		where $\Av[\xi] = 0$ accounts for all cross-terms, i.e., the terms when $\Delta_Y\neq\Delta_Y'$. 
		Note that in the special case of $\Lr=1$, $\PsetR=\emptyset$ and following the convention~\eqref{conv:ZeroSet:sum}, $A[\JA^2]=\sum_{\pPt\in \PsetT} d_{\pPt}^2~ \Av[f^2(\Delta z_{\pPt},\Delta v_{\pPt})]$, while for the special case $\Lt=1$, $A[\JA^2]=\sum_{w\in \PsetR} c_{w}^2~ \Av[f^2(\Delta x_{w},\Delta y_{w}) ]$.
		
		The squared $f$-functions in~\eqref{eq:J2:wp:compact} are of the form 
		$f^2(\Delta_X, \Delta_Y)$, where $\Delta_X\in\DsetX $ and $\Delta_Y\in\DsetY$. 
		We note that not all combinations of $\Delta_X\in\mathcal{D}_X$ and $\Delta_Y\in\DsetY$ are present in~\eqref{eq:J2:wp:compact}, but will not make this dependency notationally explicit. Now, with this abuse of notation, given~\eqref{eq:J2:wp:compact} and the assumption that $c_w$, $c'_w$, $d_{\pPt}$, $d'_{\pPt}>0$, are positive reals, $\Av[\JA^2]=0$ holds, if and only if, 
		\begin{equation}
		\label{eq:A:ff:zero}
		\Av[f^2(\Delta_X, \Delta_Y)]=0, \qquad\Delta_X\in\DsetX, \Delta_Y\in \DsetY.
		\end{equation}  
		Recalling~\eqref{eq:f:cases:app2}, $f^2(\Delta_X, \Delta_Y)$ is equal to
		\begin{align}\nonumber 
		\begin{cases}
		\displaystyle K^2 \cos^2(\Delta_Y),& \Delta_X\in\mathcal{X}\\
		\displaystyle \frac{\sin^2(K\Delta_X)}{\sin^2(\Delta_X)}\cos^2\big(\Delta_Y-(K-1)\Delta_X\big),& \Delta_X\notin\mathcal{X}.
		\end{cases}
		\end{align}
		We also recall that the $\Av[f^2(\Delta_X, \Delta_Y)]$ is an integral over the domain $\boldsymbol{y}\in[0, 2\pi)^{\Lr}, \boldsymbol{v}\in[0, 2\pi)^{\Lt}$, i.e., with respect to variables in $\Delta_Y$. Hence, $\Av[f^2(\Delta_X, \Delta_Y)]$ is proportional to the integral $\Av[\cos^2(\Delta_Y+ C)]$ for a constant $C$. Now, since $2\cos^2(a) = 1 + \cos(2a)$ and $\Av[1] = (2\pi)^{\Lt+\Lr}$,  
		\begin{align}\nonumber
		\Av[\cos^2(\Delta_Y+ C)]
		&= \frac{1}{2}(2\pi)^{\Lt+\Lr} + \frac{1}{2}\Av[\cos(2\Delta_Y+ 2C)].
		\end{align}
		The latter integral is zero, since $\Delta_Y \in \DsetY$
	contains at least two independent integration variables (e.g., $\Delta_Y=\Delta v_{\pPt}=\Delta v_{m,j}=v_m-v_j $), and therefore $\Av[\cos(2\Delta_Y+2C)]=0$. We have now shown that
		\begin{align}\nonumber
		\Av[f^2(\Delta_X, \Delta_Y)]=
		\begin{cases}
		\displaystyle \frac{1}{2}(2\pi)^{\Lt+\Lr}K^2 ,& \Delta_X\in\mathcal{X}\\
		\displaystyle \frac{1}{2}(2\pi)^{\Lt+\Lr}\frac{\sin^2(K\Delta_X)}{\sin^2(\Delta_X)},& \Delta_X\notin\mathcal{X}
		\end{cases}
		\end{align}
		which is zero if and only if, $\Delta_X\in\mathcal{X}^*$, 
		and thus~\eqref{eq:A:ff:zero} $\iff \mathcal{D}_X\subset\mathcal{X}^*$. 
		
		Putting our argument in order,~\eqref{eq:lem:all0s:claim}$~\Rightarrow\Av[\JA^2]=0$, and under the assumption that $c_w$, $c'_w$, $d_{\pPt}$, $d'_{\pPt}$ $>0$, $\Av[\JA^2]=0$ $\iff$~\eqref{eq:A:ff:zero}$\iff \mathcal{D}_X\subset\mathcal{X}^*$. Therefore, we  conclude that 
		$\eqref{eq:lem:all0s:claim}\Rightarrow\mathcal{D}_X\subset\mathcal{X}^*$, and the lemma follows.
	\end{proof}
		\begin{lemma}\label{lem:LrLt:K}
		Let $\boldsymbol{x}\in \mathbb{R}^{\Lr}$, $\boldsymbol{z} \in \mathbb{R}^{\Lt}$, 
		$\DsetX$ be as defined in~\eqref{def:Dset:X} and $\mathcal{X^{*}}$ as defined in~\eqref{eq:f:x:zeros:prop}. Then,
		\begin{align}\label{eq:lemLRLTK:iff}
		\exists (\boldsymbol{x},\boldsymbol{z}):\DsetX\subset \mathcal{X^{*}}\iff \Lr\Lt\leq K.
		\end{align}
		One selection of elements that satisfies $\DsetX\subset \mathcal{X^{*}}$ is 
		\begin{subequations}\label{eq:lem3:slopes}
			\begin{align}
			x_l&= l\frac{\pi}{K},\quad l=0,1,\cdots \Lr-1,\label{eq:lem3:slopes:x}\\
			z_m&=m \Lr \frac{\pi}{K}, \quad m=0,1, \cdots \Lt-1.\label{eq:lem3:slopes:z}
			\end{align}
		\end{subequations}
	\end{lemma}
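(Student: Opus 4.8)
The plan is to translate the condition $\DsetX\subset\mathcal{X}^{*}$ into a purely number-theoretic statement about distinctness of residues modulo $K$, and then read off both directions of the equivalence. Since $\DsetX$ depends on $\boldsymbol{x},\boldsymbol{z}$ only through differences, I would first set $x_0=z_0=0$ without loss of generality. Every element of $\mathcal{X}^{*}$ is an integer multiple of $\pi/K$, so requiring each $\Delta x_{0,i}=x_i$ and each $\Delta z_{0,j}=z_j$ to lie in $\mathcal{X}^{*}$ forces $x_i=p_i\,\pi/K$ and $z_j=r_j\,\pi/K$ for integers $p_i,r_j$ with $p_0=r_0=0$. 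Recalling $\mathcal{X}^{*}=\{q\pi/K:q\in\mathbb{Z}\}\setminus\{n\pi:n\in\mathbb{Z}\}$, membership $q\pi/K\in\mathcal{X}^{*}$ is equivalent to $K\nmid q$. Under this dictionary the three families of constraints in the definition~\eqref{def:Dset:X} of $\DsetX\subset\mathcal{X}^{*}$ become $p_i\not\equiv p_l$, $r_j\not\equiv r_m$, and $(p_i-p_l)\pm(r_j-r_m)\not\equiv 0$ modulo $K$, for all $0\le l<i\le\Lr-1$ and $0\le m<j\le\Lt-1$.

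The key step, which I expect to carry the real content, is the claim that these three families hold simultaneously if and only if the $\Lr\Lt$ residues $\{(p_i+r_j)\bmod K:0\le i\le\Lr-1,\ 0\le j\le\Lt-1\}$ are pairwise distinct. I would prove this by analysing when a collision $p_i+r_j\equiv p_{i'}+r_{j'}$ can occur. If $i=i'$ but $j\ne j'$ the collision is exactly a violation of $r_j\not\equiv r_{j'}$; if $j=j'$ but $i\ne i'$ it violates $p_i\not\equiv p_{i'}$; and if $i\ne i'$, $j\ne j'$, then reordering the indices shows the collision is precisely one of the forbidden combinations $\Delta x_{i',i}+\Delta z_{j',j}\equiv0$ or $\Delta x_{i',i}-\Delta z_{j,j'}\equiv0$, the sign depending on whether $j,j'$ are ordered the same way as $i,i'$. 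Conversely every forbidden combination is realised as such a collision, so distinctness of the sums is equivalent to $\DsetX\subset\mathcal{X}^{*}$. The degenerate cases $\Lr=1$ or $\Lt=1$, where $\PsetR=\emptyset$ or $\PsetT=\emptyset$ and $\DsetX$ collapses to the single family in~\eqref{def:Dset:X}, are covered by the same statement, since then the sums reduce to the $r_j$'s or the $p_i$'s alone.

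With this reformulation the equivalence is immediate. For the forward implication, $\Lr\Lt$ pairwise distinct residues can exist in $\mathbb{Z}_K$ only if $\Lr\Lt\le K$ by pigeonhole, giving $\exists(\boldsymbol{x},\boldsymbol{z}):\DsetX\subset\mathcal{X}^{*}\Rightarrow\Lr\Lt\le K$. For the converse, assuming $\Lr\Lt\le K$ I would exhibit the explicit choice $p_i=i$ and $r_j=j\Lr$, for which $p_i+r_j=i+j\Lr$ runs over $\{0,1,\dots,\Lr\Lt-1\}$ as a base-$\Lr$ expansion, hence over $\Lr\Lt\le K$ distinct integers in $\{0,\dots,K-1\}$, so they are distinct modulo $K$. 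Translating back through $x_l=p_l\pi/K$ and $z_m=r_m\pi/K$ yields exactly the slopes~\eqref{eq:lem3:slopes:x}--\eqref{eq:lem3:slopes:z}, establishing both the equivalence~\eqref{eq:lemLRLTK:iff} and that~\eqref{eq:lem3:slopes} is a valid selection. The main obstacle is proving the sum-distinctness reformulation cleanly across all index orderings and the two degenerate cases; once that is in hand, both directions are short.
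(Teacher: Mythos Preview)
Your proof is correct and, in fact, cleaner than the paper's. Both arguments reduce to the same underlying pigeonhole on residues modulo $K$, but they organize it differently. The paper treats the two directions separately: for $\Leftarrow$ it verifies the explicit construction~\eqref{eq:lem3:slopes} by checking the four families $\mathcal{E}_{\Delta x}$, $\mathcal{E}_{\Delta z}$, $\mathcal{E}_{\Delta z}\pm\mathcal{E}_{\Delta x}$ one by one; for $\Rightarrow$ it argues by contradiction, building three disjoint subsets $\tilde{\mathcal{E}}_x$, $\tilde{\mathcal{E}}_z$, $\tilde{\mathcal{E}}_{(x+z)}$ of a set $\mathcal{X}^{**}$ of cardinality $K-1$ and showing their total size is $(\Lr-1)+(\Lt-1)+(\Lr-1)(\Lt-1)=\Lr\Lt-1$. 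Your key observation---that $\DsetX\subset\mathcal{X}^*$ is \emph{equivalent} to the $\Lr\Lt$ sums $\{p_i+r_j\bmod K\}$ being pairwise distinct---unifies all of this into a single combinatorial statement, from which both directions drop out in one line each (pigeonhole for $\Rightarrow$, the base-$\Lr$ expansion $i+j\Lr$ for $\Leftarrow$). Your route buys a shorter and more transparent argument; the paper's route is more hands-on and perhaps makes the role of each constraint family more visible, but at the cost of a longer case analysis. The degenerate cases $\Lr=1$ or $\Lt=1$ are handled correctly in your framework as well, since the sums then collapse to the $r_j$'s or $p_i$'s alone.
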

	\begin{proof}
		Before tackling the proof, we start by presenting some definitions and notation. We recall that
		$	\mathcal{X^*}=\{q\pi/K: q\in \mathbb{Z}\}\setminus\{n\pi,~n\in \mathbb{Z}\}$.
		Then, let us define the set
		\begin{align}\label{def:X:Dstar}
		\mathcal{{X}^{**}}\triangleq\{a\pi/K:a\in \mathbb{Z}, 1\leq a\leq K-1\}, 
		\end{align}
		which satisfies $\mathcal{{X}^{**}}\subset \mathcal{X^*} $ and
		\begin{align}\label{eq:aXstar:modXdS}
		a \in  \mathcal{X^*} &\iff \rem( a,\pi) \in \mathcal{X^{**}},\\
		(a- b)\in \mathcal{X^*}  &\implies \rem(a,\pi) \neq \rem(b,\pi).\label{eq:lem:proof:prop:aPbInX}
		\end{align} 
		where $\rem(a,b)$ is the remainder of dividing $a$ by $b$. 
		The sum or subtraction of two sets is defined as 
		\begin{align}
		\mathcal{B}\pm \mathcal{C}=\{b\pm c:b\in\mathcal{B}, c\in \mathcal{C}\}.
		\end{align}	
		Then, we define for $\boldsymbol{x}$ and $\boldsymbol{z}$ the sets
			\begin{align}
			\mathcal{E}_{\Delta x}&\triangleq\{\Delta x_{l,i}: (l,i)\in \PsetR\}, & \text{If }\PsetR=\emptyset,~\mathcal{E}_{\Delta x}=\emptyset.\\
			\mathcal{E}_{\Delta z}&\triangleq\{\Delta z_{m,j}: (m,j)\in \PsetT \}, &\text{If }\PsetT=\emptyset,~\mathcal{E}_{\Delta z}=\emptyset.
			\end{align}
		From~\eqref{def:Dset:X} we see that, when $\Lr>1$, $\Lt>1$ ($\PsetR\neq\emptyset$, $\PsetT\neq\emptyset$)
		\begin{align}\label{eq:DsetX:union}
		\DsetX=\mathcal{E}_{\Delta x}\cup \mathcal{E}_{\Delta z}\cup (\mathcal{E}_{\Delta x}+\mathcal{E}_{\Delta z})\cup (\mathcal{E}_{\Delta x}-\mathcal{E}_{\Delta z} ),
		\end{align}
		while $\DsetX=\mathcal{E}_{\Delta z}$, when $\Lr=1$, and $\DsetX=\mathcal{E}_{\Delta x}$ when $\Lt=1$. Last, we introduce the set notation
		\begin{align}
		\{N\}_{n_0}^{n_1} \triangleq \{N\in \mathbb{Z}: n_0\leq N\leq n_1\}.
		\end{align}
		
		Now, we are set to tackle the proof. We start by showing that $\Lr\Lt\leq K \implies \exists(\boldsymbol{x},\boldsymbol{z}):\DsetX\subset \mathcal{X^{*}}$.
		Let $\Lr\Lt\leq K$, and let $\boldsymbol{x}$ and $\boldsymbol{z}$ be selected following~\eqref{eq:lem3:slopes}. 
	Then, for $(l,i)\in \PsetR$ and $(m,j)\in \PsetT$, we have $\Delta x_{l,i}=x_i-x_l=(i-l)\pi/K$, $0\leq l<i\leq \Lr-1$, and 
 $\Delta z_{m,j}=z_j-z_m=(j-m)\Lr\pi/K$, $0\leq m<j\leq \Lt-1$. This implies 
		\begin{align}\label{eq:EdeltaX}
		\mathcal{E}_{\Delta x}&= \{a\pi/K: 1\leq a\leq \Lr-1\},\\
		\mathcal{E}_{\Delta z}& =\{b\Lr\pi/K: 1\leq b\leq \Lt-1\}, \label{eq:EdeltaZ}\\
		\mathcal{E}_{\Delta z}+\mathcal{E}_{\Delta x}&=\label{eq:EdeltaXpZ}
		\big\{(b\Lr+a)\frac{\pi}{K}:  \{a\}_1^{\Lr-1},  \{b\}_1^{\Lt-1}    \big\},\\
		\mathcal{E}_{\Delta z}-\mathcal{E}_{\Delta x}&=\label{eq:EdeltaXmZ}
		\big\{  (b\Lr-a)\frac{\pi}{K}:\{a\}_1^{\Lr-1},  \{b\}_1^{\Lt-1}   \big\}.
		\end{align}
		Since,  $(\Lr-1)\leq K-1 $, from~\eqref{eq:EdeltaX} and~\eqref{def:X:Dstar} we deduce that $\mathcal{E}_{\Delta x} \subset 	\mathcal{{X}^{**}} $. 
		Similarly, we have that in~\eqref{eq:EdeltaZ}--\eqref{eq:EdeltaXmZ}
		\begin{align}
		\Lr&\leq b\Lr~\quad \leq (\Lt-1)\Lr\qquad \leq K-1,\\
		\Lr+1&\leq b\Lr+a\leq \Lr\Lt-1 \qquad\leq K-1,\\
		1&\leq	b\Lr-a\leq (\Lt-1)\Lr -1\leq K-1.
		\end{align}
		 It consequently follows by~\eqref{def:X:Dstar} that, also, $\mathcal{E}_{\Delta z}$, $\mathcal{E}_{\Delta z}+ \mathcal{E}_{\Delta x}$, and $\mathcal{E}_{\Delta z}- \mathcal{E}_{\Delta x}$ are subsets of $\mathcal{X}^{**}\subset  \mathcal{X^*}$. Note that $\mathcal{E}_{\Delta z}- \mathcal{E}_{\Delta x}\subset\mathcal{X^*} $ implies that $-(\mathcal{E}_{\Delta z}- \mathcal{E}_{\Delta x})\subset \mathcal{X^*}$ (if $x\in \mathcal{X^*}$, $-x\in \mathcal{X^*}$). 
		Recalling~\eqref{eq:DsetX:union} we can conclude that $\DsetX \subset\mathcal{X}^{*} $.
		Note that this hold in the special case of $\Lr=1$ ($\mathcal{E}_{\Delta x}=\emptyset$), because $\DsetX=\mathcal{E}_{\Delta z}\subset\mathcal{X}^{*}$, also it holds when $\Lt=1$ ($\mathcal{E}_{\Delta z}=\emptyset$), for $\DsetX=\mathcal{E}_{\Delta x}\subset\mathcal{X}^{*}$.
		
		We have shown that when $\Lr\Lt\leq K$,~\eqref{eq:lem3:slopes} is a solution satisfying $\DsetX \subset\mathcal{X}^{*} $, when $\Lr+\Lt> 2$, and thus the statement "$\Lr\Lt\leq K \implies \exists(\boldsymbol{x},\boldsymbol{z}):\DsetX\in \mathcal{X^{*}}$" holds.
		
		Now we move to the second part of the claim of the lemma, $\exists(\boldsymbol{x},\boldsymbol{z}):\DsetX\in \mathcal{X^{*}}\implies \Lr\Lt\leq K$.
		We shall show that through a proof of contradiction. 
	    Assume 
		that $\Lr\Lt> K$. By construction of the set $\mathcal{X^{**}}$~\eqref{def:X:Dstar} we have that
		\begin{align}\label{eq:Lem:optCon:Proof:contradiction:K}
		|\mathcal{X^{**}}|=K-1<\Lr\Lt-1,\qquad K<\Lr\Lt.
		\end{align} 
		To form a contradiction, we proceed to find a lower bound on the cardinality of the set $\mathcal{X^{**}}$ assuming that there exists a solution $(\boldsymbol{x},\boldsymbol{z})$ that satisfies $\DsetX\subset \mathcal{X^{*}}$. From~\eqref{eq:DsetX:union} we see that this last condition implies that $\mathcal{E}_{\Delta x}$, $\mathcal{E}_{\Delta z}$, and $\mathcal{E}_{\Delta x}\pm \mathcal{E}_{\Delta z}$ are subsets of  $\mathcal{X^{*}}$.  
		Without loss of generality\footnote{If $x_0\neq0$, we define $x_0',x_1',\cdots,x_{\Lr-1}'$ such that $x_0'=0$ and $x_l'=x_l-x_0$. If $z_0\neq0$, we analogously define $z_0',z_1',\cdots,z_{\Lt-1}'$. Since, $\Delta x_{l,i}=\Delta x'_{l,i}$, and $\Delta z_{m,j}=\Delta z'_{m,j}$ then $(\{x'_l\},\{z_m'\})$ are also solutions, i.e., $\mathcal{E}_{\Delta x'}\cup \mathcal{E}_{\Delta z'}\cup (\mathcal{E}_{\Delta x'}\pm \mathcal{E}_{\Delta z'}) \subset \mathcal{{X}^{*}}$.
		} let $x_0=0$, and $z_0=0$. Then, we have $\mathcal{E}_{\Delta x} \subset \mathcal{{X}^{*}}$ is equivalent to 
		\begin{align}
		\Delta x_{0,i}=(x_i-x_0)&=x_i \in  \mathcal{X^*},~ 1\leq i\leq \Lr-1\label{eq:lem:optCon:Proof:x:sub1}\\
		\Delta x_{l,i}=	(x_i-x_l) &\in  \mathcal{X^*} ,~1\leq l<i\leq \Lr-1 \label{eq:lem:optCon:Proof:x:sub2}
		\end{align} 
		Then, from~\eqref{eq:aXstar:modXdS} and~\eqref{eq:lem:optCon:Proof:x:sub1} we get that
		\begin{align}\nonumber
		\mathcal{\tilde{E}}_{x}=\big\{\rem(x_i,\pi): \{i\}_1^{\Lr-1}\big\} \subset  \mathcal{X^{**}}.
		\end{align}
		Moreover, since~\eqref{eq:lem:proof:prop:aPbInX} and~\eqref{eq:lem:optCon:Proof:x:sub2}, implies that $\rem(x_i,\pi) \neq\rem(x_l,\pi)$, $ i\neq l$, then  $|\mathcal{\tilde{E}}_{x}|=\Lr-1$.
		
		Similarly, employing $\mathcal{E}_{\Delta z} \subset \mathcal{{X}^{*}}$ we deduce that 
		\begin{align}\nonumber
		\mathcal{\tilde{E}}_{z}=\big\{\rem (z_j,\pi):  \{j\}_1^{\Lt-1}\big\} \subset  \mathcal{X^{**}},~|\mathcal{\tilde{E}}_{z}|=\Lt-1.
		\end{align}
		By the previous analysis, if only $ \mathcal{E}_{\Delta x} \subset \mathcal{{X}^{*}}$ and $\mathcal{E}_{\Delta z} \subset \mathcal{{X}^{*}}$ 
		need to be satisfied then $|\mathcal{X^{**}}|\geq \max\{\Lt-1,\Lr-1\}$.
		However, let us consider $(\mathcal{E}_{\Delta x}\pm \mathcal{E}_{\Delta z})\subset \mathcal{{X}^{*}}$, which can be expressed as
		\begin{align}\label{eq:DxpDz:General}
		(\Delta x_{l,i}\pm \Delta z_{m,j}) \in \mathcal{X^*},~ \begin{cases} 
		0\leq l< i\leq \Lr-1 \\
		0\leq m< j\leq \Lt-1
		\end{cases}
		\end{align}
		In particular, using $\Delta x_{0,i} - \Delta z_{0,j}= (x_{i}-z_j) \in \mathcal{X^*}$, 
		which by~\eqref{eq:lem:proof:prop:aPbInX} implies that
		\begin{align}
		\rem(x_{i},\pi)&\neq \rem(z_j,\pi),\quad  \{i\}_1^{\Lr-1}, \{j\}_1^{\Lt-1},
		\end{align}
		we get 
		$\mathcal{\tilde{E}}_x \cap \mathcal{\tilde{E}}_z=\emptyset$ and $|\mathcal{X^{**}}|\geq (\Lt-1 +\Lr-1)$.
		This lower bound can be further improved considering a set that contains the elements $\rem(x_i+z_j,\pi)$.
		From~\eqref{eq:DxpDz:General}, we get that for $\{i\}_1^{ \Lr-1}$, $\{j\}_1^{\Lt-1}$ 
		\begin{align}\label{eq:x_p_z_in_X}
		\Delta x_{0,i}+\Delta z_{0,j}=(x_i+z_j) \in  \mathcal{X^{*}}.
		\end{align}

		Moreover, for $\{i\}_1^{\Lr-1}$, $1\le m<j\leq \Lt-1$, 
		\begin{align}
		\Delta x_{0,i}+\Delta z_{m,j}=(x_i+z_j-z_m)&\in \mathcal{X^*}  \overset{\eqref{eq:lem:proof:prop:aPbInX}}{\implies}\nonumber\\ \rem(x_i+z_j,\pi)&\neq \rem(z_m,\pi),\label{eq:con:elaborate:1}\\
		\Delta x_{0,i}-\Delta z_{m,j}=(x_i-z_j+z_m)&\in \mathcal{X^{*}} \overset{\eqref{eq:lem:proof:prop:aPbInX}}{\implies}\nonumber\\\rem(x_i+z_m,\pi)&\neq \rem(z_j,\pi).\label{eq:con:elaborate:2}
		\end{align}
		Combining these last two, and since by~\eqref{eq:lem:optCon:Proof:x:sub1} it holds that $\rem(x_i+z_j,\pi)\neq \rem(z_j,\pi)$, $\{i\}_1^{\Lr-1}, \{j\}_1^{\Lt-1}$, we deduce that $\forall i,j,n\geq 1$
			\begin{align}\label{eq:x+z_neq_z}
		\rem(x_i+z_j,\pi)\neq \rem(z_n,\pi).
		\end{align}
		We can use similar elaboration based on the conditions $(\Delta x_{l,i}\pm\Delta z_{0,j})\in \mathcal{X^*}$, $1\le l<i\leq \Lr-1$, and $\{j\}_1^{\Lt-1}$ to deduce that $\forall i,j,k\geq 1 $
		\begin{align}\label{eq:x+z_neq_x}
		\rem(x_i+z_j,\pi)\neq \rem(x_{k},\pi).
		\end{align}
		Finally, elaborating the conditions $(\Delta x_{l,i}\pm\Delta z_{m,j})\in \mathcal{X^*}$, where $1\le l<i\leq \Lr-1$,  $1\le m<j\leq \Lt-1$, and employing the fact that $\rem(x_{i}+ z_{j},\pi) \neq \rem(x_i+z_n,\pi) $, $j\neq n$, and $\rem(x_{i}+ z_{j},\pi) \neq \rem(x_k+z_j,\pi)  $, $i \neq k$, we can readily reach that $\forall i,j,k,n\geq 1$, $(i,j)\neq (k,n)$
		\begin{align}\label{eq:lem:optCon:proof:subxz}	
		\rem(x_{i}&+ z_{j},\pi) \neq \rem(x_k+z_n,\pi). 
		\end{align}
		
		Now, let us gather our findings. Using~\eqref{eq:aXstar:modXdS} and~\eqref{eq:x_p_z_in_X} we deduce that 
		\begin{align}
		\mathcal{\tilde{E}}_{(x+z)}=\big\{\rem(x_i+z_j,\pi): \{ i\}_1^{\Lr-1}, \{j\}_1^{\Lt-1} \big  \} \subset \mathcal{X^{**}},\nonumber
		\end{align}
		while~\eqref{eq:x+z_neq_z} and~\eqref{eq:x+z_neq_x}
		implies that 
		$\mathcal{\tilde{E}}_{(x+z)}\cap (\mathcal{\tilde{E}}_x\cup \mathcal{\tilde{E}}_z)=\emptyset$,
		and, finally, by~\eqref{eq:lem:optCon:proof:subxz} we get that $|\mathcal{\tilde{E}}_{(x+z)}|=(\Lt-1)(\Lr-1)$.

		Then, under the assumption that there exists a solution $(\boldsymbol{x},\boldsymbol{z})$ satisfying $\DsetX\subset  \mathcal{{X}^{*}}$, we have shown that 	$\mathcal{\tilde{E}}_{x}$, $\mathcal{\tilde{E}}_{z}$ and $\mathcal{\tilde{E}}_{(x+z)}$ are disjoint subsets of $\mathcal{X^{**}}$, and thus 
		\begin{align}
		|\mathcal{X^{**}}|\geq \Lr-1 +\Lt-1+(\Lr-1)(\Lt-1)=\Lr\Lt-1.\nonumber
		\end{align}
		This contradicts the claim in~\eqref{eq:Lem:optCon:Proof:contradiction:K}. Note that in the special case of $\Lr=1$, $\DsetX=\mathcal{E}_{\Delta z}\subset \mathcal{X^*}$, which implies $\mathcal{\tilde{E}}_{z}\subset \mathcal{X^{**}}$, $|\mathcal{\tilde{E}}_{z}|=\Lt-1$, and thus $|\mathcal{X^{**}}|\geq \Lt-1$. In similar manner, we can deduce that $|\mathcal{X^{**}}|\geq \Lr-1$ when $\Lt=1$. Both form a contradiction to~\eqref{eq:Lem:optCon:Proof:contradiction:K}. 
		Therefore, we readily conclude that for $\Lr+\Lt>2$ the statement "$\exists (\boldsymbol{x},\boldsymbol{z}): \DsetX\in \mathcal{{X}^{*}}$ $\implies \Lr\Lt\leq K$" holds, and by this we come to end the proof of the lemma.
		\color{black}
	\end{proof}

	Now we show the claims of the theorem. 
	\begin{proof}
Given the objective function $\SsnrA$~\eqref{eq:SsnrA:xy}, for
		$\boldsymbol{x}\in \mathbb{R}^{\Lr}$,  
		$\boldsymbol{z}\in \mathbb{R}^{\Lt}$, 
		$\boldsymbol{y}\in [0,2\pi)^{\Lr}$,
		and $\boldsymbol{v}\in [0,2\pi)^{\Lt}$,	we express
	\begin{align}
	\SsnrA^\star(\phiVrs)&=\sup_{\boldsymbol{x},\boldsymbol{z}}\inf_{\boldsymbol{y},\boldsymbol{v}}\SsnrA(\phiVrs,\boldsymbol{x},\boldsymbol{z},\boldsymbol{y},\boldsymbol{v} )\nonumber \\
	&= K\overline{G}(\phiVrs)+ \sup_{\boldsymbol{x},\boldsymbol{z}}\inf_{\boldsymbol{y},\boldsymbol{v}}\JA(\phiVrs,\boldsymbol{x},\boldsymbol{z},\boldsymbol{y},\boldsymbol{v} ).\label{eq:Sstar:eq:Jstar}
	\end{align}
	\begin{enumerate}
		\item From Lemma~\ref{lem:inf}~\eqref{eq:lem:inf:claim} we get that
		\begin{align}\label{eq:proof:theo:last:general}
		\sup_{\boldsymbol{x},\boldsymbol{z}}~\inf_{\boldsymbol{y},\boldsymbol{v}}\JA(\phiVrs,\boldsymbol{x},\boldsymbol{z},\boldsymbol{y},\boldsymbol{v} )\leq 0,
		\end{align}
		thus $\SsnrA^\star(\phiVrs)\leq K\overline{G}(\phiVrs)$, and~\eqref{eq:theo:1:UB} follows \big($\phiVrs=(\phir,\phit)$\big).
\item 		If $\Lr\Lt\leq K$, Lemma~\ref{lem:LrLt:K}~\eqref{eq:lemLRLTK:iff} indicate that $\exists (\boldsymbol{\tilde{x}},\boldsymbol{\tilde{z}}):\DsetX \subset\mathcal{X}^*$ which by Lemma~\ref{lem:all0s}~\textit{(i)} implies that 
		\begin{align}\label{eq:forReamrk}
		\JA(\phiVrs,\boldsymbol{\tilde{x}},\boldsymbol{\tilde{z}},\boldsymbol{y},\boldsymbol{v} )=0, \quad \forall \boldsymbol{y},\boldsymbol{v}.
		\end{align}
		Hence, the upper bound in~\eqref{eq:proof:theo:last:general} is achievable, that is
		\begin{align}\label{eq:proof:theo:last}
		\sup_{\boldsymbol{x},\boldsymbol{z}}~\inf_{\boldsymbol{y},\boldsymbol{v}}\JA(\phiVrs,\boldsymbol{x},\boldsymbol{z},\boldsymbol{y},\boldsymbol{v} )= 0.
		\end{align}
		This, together with~\eqref{eq:Sstar:eq:Jstar}, implies that $\SsnrA^\star(\phiVrs )=K\overline{G}(\phiVrs)$, hence ~\eqref{eq:theo:1} holds. 
		\item Since~\eqref{eq:proof:theo:last} is guaranteed when $(\boldsymbol{\tilde{x}},\boldsymbol{\tilde{z}})$ satisfies $\DsetX \subset\mathcal{X}^*$ then this last is is a sufficient optimality condition when $\Lr\Lt\leq K$. Recalling that $x_l=\phr_lT/2$, $z_m=\pht_mT/2$, and the definition of $\DsetX$~\eqref{def:Dset:X}, we can straightforwardly see that $\DsetX \subset\mathcal{X}^*$ is the same as~\eqref{eq:th:optCon}, when $\Lr>1, \Lt>1$, while in the special cases of $\Lr=1$ or $\Lt=1$, it is expressed as~\eqref{eq:th:optCon:z} or~\eqref{eq:th:optCon:x}, respectively.
		
		\item Assume that~\eqref{eq:proof:theo:last} holds \big(i.e., $\SsnrA^\star(\phiVrs )=K\overline{G}(\phiVrs)$\big) with solution $(\boldsymbol{\bar{x}},\boldsymbol{\bar{z}})$. Then, for $\boldsymbol{y}\in[0,2\pi)^{\Lr}, \boldsymbol{v}\in[0,2\pi)^{\Lt}$
	\begin{align}\label{eq:nonzeroJA}
	   \JA(\phiVrs,\boldsymbol{\bar{x}},\boldsymbol{\bar{z}},\boldsymbol{y},\boldsymbol{v})&\geq \inf_{\boldsymbol{y},\boldsymbol{v}} \JA(\phiVrs,\boldsymbol{\bar{x}},\boldsymbol{\bar{z}},\boldsymbol{y},\boldsymbol{v})=0.
	\end{align}
		Also, by Lemma~\ref{lem:inf}~\eqref{eq:lem:inf:AJeq:zero} we get that
		\begin{align}\label{eq:JA:xbar:zero}
		    \Av[\JA(\phiVrs,\boldsymbol{\bar{x}},\boldsymbol{\bar{z}},\boldsymbol{y},\boldsymbol{v})]=0.
		\end{align}
		
		From~\eqref{eq:nonzeroJA} and~\eqref{eq:JA:xbar:zero}, we see that over the intervals $\boldsymbol{y}\in[0,2\pi)^{\Lr}$ and $\boldsymbol{v}\in[0,2\pi)^{\Lt}$, the continuous function $\JA(\phiVrs,\boldsymbol{\bar{x}},\boldsymbol{\bar{z}},\boldsymbol{y},\boldsymbol{v})$ is a non-negative function that integrates to zero, and thus it must satisfy~\eqref{eq:lem:all0s:claim}. It therefore, follows that~\eqref{eq:proof:theo:last} $\implies$~\eqref{eq:lem:all0s:claim}.
		
		 Now, given that $|\gr_l(\phir)|,|\gt_m(\phit)|>0$, $\forall l,m$, then from~\eqref{def:J:cli:dmj} we see that $c_{w},c'_{w},d_{\pPt},d'_{\pPt} >0$ for $w=(l,i) \in\PsetR$, $\pPt=(m,j)\in \PsetT$. Under these assumptions, Lemma~\ref{lem:all0s}~\textit{(ii)} tells us that~\eqref{eq:lem:all0s:claim} implies that $(\boldsymbol{\bar{x}},\boldsymbol{\bar{z}})$ satisfies $\DsetX \subset\mathcal{X}^*$, which is again the same as~\eqref{eq:th:optCon}. 
		 The condition $\Lr\Lt\leq K$, follows as an implication by Lemma~\ref{lem:LrLt:K}.
		
		
	\item (Corollary~\ref{cor:ABN:ph_slopes}) By Lemma~\ref{lem:LrLt:K}, we know that~\eqref{eq:lem3:slopes}, which is the same as~\eqref{eq:th:ph_slopes}, satisfies $\DsetX \subset\mathcal{X}^*$ and thus it is optimal. Interchanging between $\Lr$ and $\Lt$ in~\eqref{eq:lem3:slopes}, and interpreting ($x'_l=\pht_lT/2$, $z'_m=\phr_mT/2$) instead of ($x_l=\phr_lT/2$, $z_m=\pht_mT/2$) we get that~\eqref{eq:th:ph_slopes:reciprocal} is also an optimal selection of phase slopes.
	\end{enumerate}
	We have shown that all claims of Theorem~\ref{theo:main} hold, including Remark~\ref{remark:theorem:main} which follows from~\eqref{eq:forReamrk} and~\eqref{eq:SsnrA:xy}. Also, we have shown that the claim of Corollary~\ref{cor:ABN:ph_slopes} holds, and thus the proof is completed.
\end{proof}

\section{Proof of Theorem~\ref{th:SA}}\label{app:theo:SA}
\begin{proof}
We start by defining,  $x_l \triangleq \Lt \phr_lT/2$, $y_l \triangleq \psr_l$, where $0\le l\le\Lr-1 $, $e_m \triangleq |\gt_m(\phit)|^2$, $0\leq m\leq\Lt-1$, and $\phiVrs \triangleq(\phir,\phit)$.
Then, from~\eqref{def:S:SA} we write $\JB$ as
\begin{align}
&\JB(\phiVrs,\boldsymbol{x},\boldsymbol{y})\nonumber\\
&=\sum_{m=0}^{\Lt-1} e_m\sum_{w\in \PsetR}c'_{w}\sum_{k=0}^{\Kr-1}\cos\big(\Delta y_{w}-2\Delta x_{w}(m+k\Lt)/\Lt\big)\nonumber\\
&=\sum_{w\in \PsetR}c'_{w} \sum_{m=0}^{\Lt-1} e_m f_{\Kr}(\Delta x_{w},\Delta y_{w}-2\Delta x_{w}m/\Lt ),\label{eq:JB:compact}
\end{align} 
where $f_{\Kr}$ is defined according to~\eqref{eq:f:def} over a sum of $\Kr=K/\Lt \in\mathbb{Z}$ terms (instead of $K$), $\PsetR$ is defined in~\eqref{eq:Pr:def}, $c'_w$ is defined in~\eqref{def:J:cli:dmj}, 
$\boldsymbol{x}=[x_0,x_1,\cdots,x_{\Lr-1}]^{\textsf{T}}\in \mathbb{R}^{\Lr}$, and $\boldsymbol{y}=[y_0,y_1,\cdots,y_{\Lr-1}]^{\textsf{T}}\in [0,2\pi)^{\Lr}$. 
All properties and lemmas stated for $f$ in Appendix~\ref{app:f:properties} are valid for $f_{\Kr}$. In particular we can define a set $\Xstb$ according to~\eqref{eq:f:x:zeros:prop}, 
\begin{align}
\Xstb=\{q\pi/\Kr:q\in \mathbb{Z} \}\setminus \mathcal{X},\label{def:Xbarstar:app}
\end{align}
where $\mathcal{X}=\{q\pi:q\in \mathbb{Z}\}$.


We observe that similarly to $\JA$ (Theorem~\ref{theo:main}), $\JB$ is also a linear combination of $f_{\Kr}$-functions with non-negative coefficients. The arguments stated earlier to show Theorem~\ref{theo:main} can be used here to show the result of Theorem~\ref{th:SA}. Therefore, to save on space and avoid repetition we just outline the proof steps. To show the claims of Theorem~\ref{th:SA} we can take the following steps.
\begin{enumerate}
	\item Using a similar argument to what was used in Lemma~\ref{lem:inf} we can easily show that 
	\begin{align}\label{eq:Ay:Jb:eqZero}
	\Av_{\boldsymbol{y}}[\JB(\phiVrs,\boldsymbol{x},\boldsymbol{y})]&=0,\\
		\inf_{\boldsymbol{y}\in [0,2\pi)^{\Lr}} \JB(\phiVrs,\boldsymbol{x},\boldsymbol{y})&\leq 0. \label{eq:inf:JB:leq0}
	\end{align}
	where $\Av_{\boldsymbol{y}}[h(\cdot)]=\int_{[0, 2\pi)^{\Lr}} h(\cdot) d\boldsymbol{y}$. 
	\item By using similar steps as followed in Lemma~\ref{lem:all0s}, we can augment our finding by showing that 
	
	a) Given $\mathcal{D}'_{X}=\{\Delta x_{w}:w\in \PsetR \}$, 
	\begin{align}\label{eq:XinXstar:imp:JBallzeros}
	 \mathcal{D}'_{X}\subset \Xstb \Longrightarrow \JB(\phiVrs,\boldsymbol{x},\boldsymbol{y})=0,~\boldsymbol{y}\in [0,2\pi)^{\Lr}.
	\end{align}

	b) Assuming that $c'_w,e_m>0$, $\forall w, m$ and $e_m\neq C$, $\forall m$,
	\begin{align}\label{eq:XinXstar:impR:JBallzeros}
	\mathcal{D}'_{X} \subset \Xstb  \Longleftarrow \JB(\phiVrs,\boldsymbol{x},\boldsymbol{y})=0,~
	\boldsymbol{y}\in [0,2\pi)^{\Lr}.
	\end{align}
	
	\item Last, by repeating some of the arguments used to show Lemma~\ref{lem:LrLt:K} we can show that
	\begin{align}\label{eq:X:iff:LrK}
	\exists \boldsymbol{x}:\mathcal{D}'_{X} \subset \Xstb \iff \Lr\leq \Kr,
	\end{align}
	and a solution that satisfies $\mathcal{D}'_{X} \subset \Xstb$ is given by
	\begin{align}\label{eq:phses:x:B}
	x_l=l \pi/\Kr,~ l=0, 1, \cdots, \Lr-1, ~\Lr\leq \Kr.
	\end{align}
    (An alternative approach is to directly use Lemma~\ref{lem:LrLt:K} for the special case of $\hat{L}_{\textrm{s}}=1$, which implies $\mathcal{D}_{X}=\mathcal{D}'_{X}$, to prove~\eqref{eq:X:iff:LrK} and ~\eqref{eq:phses:x:B}.)
\end{enumerate}
Then, from~\eqref{def:S:SA} and~\eqref{def:Sbstar} we have that
\begin{align}
\SsnrB^\star(\phiVrs)=K\overline{G}(\phiVrs) +
\sup_{\boldsymbol{x}\in\mathbb{R}^{\Lr}}\inf_{\boldsymbol{y}\in[0,2\pi)^{\Lr}} \JB(\phiVrs,\boldsymbol{x},\boldsymbol{y}).\label{eq:theorem2:proff:Sxy}
\end{align}
The claims of Theorem~\eqref{th:SA} are shown as follows.
\begin{enumerate}
     \item Using~\eqref{eq:inf:JB:leq0} we deduce that $\SsnrB^\star(\phiVrs)\leq K\overline{G}(\phiVrs)$.
   \item Employing~\eqref{eq:XinXstar:imp:JBallzeros} and~\eqref{eq:X:iff:LrK} we can deduce that $\SsnrB^\star(\phiVrs)=K\overline{G}(\phiVrs)$ when $\Lr\leq \Kr$, and $\mathcal{D}'_{X}=\{\Delta x_{w}:w\in \PsetR \}\subset \Xstb$ is a sufficient optimality condition. 
    
     \item Given that $\SsnrB^\star(\phiVrs)=K\overline{G}(\phiVrs)$ with a solution $\bar{\boldsymbol{x}}$. Then, $\JB(\phiVrs,\bar{\boldsymbol{x}},\boldsymbol{y})\geq\inf_{\boldsymbol{y}} \JB(\phiVrs,\bar{\boldsymbol{x}},\boldsymbol{y})= 0$. Adding this to~\eqref{eq:Ay:Jb:eqZero}, and by continuity of $\JB$, we deduce that $\JB(\phiVrs,\bar{\boldsymbol{x}},\boldsymbol{y})=0$, $\boldsymbol{y}\in [0,2\pi)^{\Lr} $, i.e., the right-hand side of~\eqref{eq:XinXstar:impR:JBallzeros} holds. 
    Now, assuming that $|\gr_l(\phir)|, |\gt_m(\phit)|>0$, $\forall l, m$, and $|\gt_m(\phit)|\neq C$, $\forall m$, it follows that $c'_w, e_m>0$, $\forall w, m$ and $e_m\neq C$, $\forall m$. Then, by~\eqref{eq:XinXstar:impR:JBallzeros}
    we deduce that if $\SsnrB^\star(\phiVrs)=K\overline{G}(\phiVrs)$ then $\mathcal{D}'_{X} \subset \Xstb$. The condition $\Lr\leq \Kr$ follows as a consequence of~\eqref{eq:X:iff:LrK}.
    \item Last, since the solution in~\eqref{eq:phses:x:B} satisfies $\mathcal{D}'_{X} \subset \Xstb$, then it is optimal, and by that the proof outline of the theorem is completed.
     
\end{enumerate}\end{proof}
	\section{Proof of Lemma~\ref{lem:Multiport:bound}}\label{app:multiport}
	\begin{proof}
We start by defining $\Lt$-vectors, $\xtV\in \mathbb{R}^{\Lt}$ and $\vtV\in[0,2\pi)^{\Lt}$ with elements $\xt_m\triangleq \pht_{m}T/2$ and $\vt_l\triangleq \pst_{m}$ where $0 \leq m \leq \Lt-1$.
Also, we define for every port $p=0, 1, \cdots, P-1$, the vectors $\xrVp\in \mathbb{R}^{\Lrp}$ and $\yrVp \in [0,2\pi)^{\Lrp}$ with elements $\xr_{l}^{(p)}\triangleq \phr_{l,p}T/2$, $\yr_{l}^{(p)}\triangleq \psr_{l,p}$, where $0\leq l\leq \Lrp-1$. Recalling that $\sum_{p}\Lrp=\Lr$, we define  $\Lr$-vector $\xrV=[\xrV^{(0)}, \xrV^{(1)}, \cdots, \xrV^{(P-1)}]^{\textsf{T}} \in \mathbb{R}^{\Lr}$ and, similarly, we define $\yrV=[\yrV^{(0)}, \yrV^{(1)}, \cdots, \yrV^{(P-1)}]^{\textsf{T}} \in[0,2\pi)^{\Lr}$. 
Then, we can express $\SsnrD$ and $\SsnrAp$ defined in~\eqref{def:SsnrD} and~\eqref{def:SsnrAp}, respectively, as
\begin{align}
&\SsnrD(\phiVrs,\xrV,\xtV,\yrV,\vtV )= \sum_{p=0}^{P-1} \SsnrAp (\phiVrs,\xrVp,\xtV,\yrVp,\vtV )\nonumber\\
&=K\sum_{p=0}^{P-1}\overline{G}_p(\phiVrs)+
\sum_{p=0}^{P-1} \JAp(\phiVrs,\xrVp,\xtV,\yrVp,\vtV ),\label{eq:Sd:eqGplusJ}
\end{align}
where $\phiVrs=(\phir,\phit)$, and $\overline{G}_p$, $\JAp$ are expressed based on $\overline{G}$~\eqref{def:G} and $\JA$~\eqref{def:J:original}, with $\Lr$ substituted by $\Lrp$ and $\gr_{l}$ by $\gr_{l,p}$. In particular, $\JAp$ can be expressed based on $\JA$~\eqref{eq:J:wp:compact}, where $\PsetR$ is defined with respect to $\Lrp$ instead of $\Lr$ ($\xrVp$ and $\yrVp$ have $\Lrp$ elements), and $c_{w},c'_{w},d_{\pPt},d'_{\pPt}$ are attributed an additional sub-index $p$. 
Note that $\JA$ and $\JAp$ represent the same function, and the difference 
between the two is in notation.
Using Lemma~\ref{lem:inf}~\eqref{eq:lem:inf:AJeq:zero}, and recalling~\eqref{eq:A:def} we get that 
\begin{align}
\int_{\mathcal{I}}\JAp(\phiVrs,\xrVp,\xtV,\yrVp,\vtV)d\yrVp d\vtV=0,~\forall p
\end{align}
where $\mathcal{I}=[0,2\pi)^{\Lrp+\Lt}$. Then, by the linearity of integral we can easily deduce that 
\begin{align}\label{eq:Sd:int:zero}
\int_{\mathcal{J}} \sum_{p=0}^{P-1}\JAp(\phiVrs,\xrVp,\xtV,\yrVp,\vtV) d\yrV d\vtV=0,
\end{align}
where $\mathcal{J}=[0,2\pi)^{\Lr+\Lt}$, $\sum_{p}\Lrp=\Lr$.
Finally, by monotonicity of Riemann integral and~\eqref{eq:Sd:int:zero} we can conclude that for  $\xrV\in\mathbb{R}^{\Lr}$, $\xtV \in \mathbb{R}^{\Lt}$
\begin{align}
\inf_{\yrV\in [0,2\pi)^{\Lr},\vtV\in [0,2\pi)^{\Lt}} \bigg[\sum_{p=0}^{P-1} \JAp (\phiVrs,\xrVp,\xtV,\yrVp,\vtV)\bigg]\leq 0,\nonumber
\end{align}
which together with~\eqref{eq:Sd:eqGplusJ} allow us to deduce that~\eqref{eq:UB:Mp} holds, and this ends the proof of the lemma.
\end{proof}
\end{appendices}
\bibliographystyle{ieeetr}
\bibliography{paperII_ref}

\end{document}